\newtheorem{theorem}{Theorem}[section]
\newtheorem{corollary}{Corollary}[section]
\newtheorem{proposition}{Proposition}[section]
\newtheorem{lemma}{Lemma}[section]
\newtheorem{example}{Example}[section]
\newenvironment{proof}[1][Proof.]{\vspace{0.5em}\textbf{#1} }{\
\hfill$\square$}
\newcommand{\Z}{\mathbb{Z}}
\newcommand{\zero}{{\mathbf{0}}}
\newcommand{\one}{{\mathbf{1}}}
\newcommand{\C}{{\cal C}}
\newcommand{\N}{{\mathbb{N}}}
\newcommand{\wt}{{\rm wt}}
\newcommand{\rank}{\text{rank}}
\newcommand{\kernel}{\text{ker}}
\newcommand{\ord}{\operatorname{ord}}
\newcommand{\cA}{{\cal A}}
\begin{document}

\title{On the Kernel of $\Z_{2^s}$-Linear Hadamard Codes\thanks{This work has been partially supported by the Spanish MINECO under Grants TIN2016-77918-P
(AEI/FEDER, UE) and MTM2015-69138-REDT, and by the Catalan AGAUR under Grant
2014SGR-691. The authors are with the Department of Information and Communications
Engineering, Universitat Aut\`{o}noma de Barcelona, 08193 Cerdanyola del Vall\`{e}s, Spain.} \thanks{The material in this paper was presented in part at ``$5^{th}$ International Castle Meeting on Coding Theory and Applications'' in Vihula Manor, Estonia, 2017 \cite{5ICMCTA}.}}

\author{Cristina Fern\'andez-C\'ordoba, Carlos Vela, Merc\`e Villanueva}

\maketitle

\begin{abstract}
The $\Z_{2^s}$-additive codes are subgroups of $\Z^n_{2^s}$, and can
be seen as a generalization of linear codes over $\Z_2$ and $\Z_4$. A
$\Z_{2^s}$-linear
Hadamard code is a binary Hadamard code which is the Gray map image of a
$\Z_{2^s}$-additive code.
It is known that the dimension of the kernel can be used to give a complete
classification of the $\Z_4$-linear Hadamard codes. In this paper, the kernel of $\Z_{2^s}$-linear
Hadamard codes and its dimension are established for $s > 2$.
Moreover, we prove that this invariant only provides a complete classification for some values of $t$ and $s$.
The exact amount of nonequivalent such codes are given up to $t=11$ for any $s\geq 2$,
by using also the rank and, in some cases, further computations.
\end{abstract}
{\bf Keywords:} kernel, Hadamard code, $\Z_{2^s}$-linear code, $\Z_{2^s}$-additive code,  Gray map, classification.\\
{\bf Mathematics Subject Classification (2000):} 94B25, 94B60

\section{Introduction}
\label{intro}

Let $\Z_{2^s}$ be the ring of integers modulo $2^s$ with $s\geq1$. The set of
$n$-tuples over $\Z_{2^s}$ is denoted by $\Z_{2^s}^n$. In this paper,
the elements of $\Z^n_{2^s}$ will also be called vectors over $\Z_{2^s}$ of
length $n$. A binary code of length $n$ is a nonempty subset of $\Z_2^n$,
and it is linear if it is a subspace of $\Z_{2}^n$. Equivalently, a nonempty
subset of $\Z_{2^s}^n$ is a $\Z_{2^s}$-additive if it is a subgroup of $\Z_{2^s}^n$.
Note that, when $s=1$, a $\Z_{2^s}$-additive code is a binary linear code and, when $s=2$,
it is a quaternary linear code or a linear code over $\Z_4$.

Two binary codes, $C_1$ and $C_2$, are said to be equivalent if there is a vector $\textbf{a}\in \Z_2^n$ and a
permutation of coordinates $\pi$ such that $C_2=\{ \textbf{a}+\pi(\textbf{c}) : \textbf{c} \in C_1 \}$.
Two $\Z_{2^s}$-additive codes, $\C_1$ and $\C_2$, are said to be permutation equivalent if they differ
only by a permutation of coordinates, that is, if there is a permutation of coordinates $\pi$
such that $\C_2=\{ \pi(\textbf{c}) : \textbf{c} \in \C_1 \}$.

The Hamming weight of a binary vector $\textbf{u}\in\Z_{2}^n$, denoted by $\wt_H(\textbf{u})$, is
the number of nonzero coordinates of $\textbf{u}$. The Hamming distance of two binary
vectors $\textbf{u},\textbf{v}\in\Z_{2}^n$, denoted by $d_H(\textbf{u},\textbf{v})$, is the number of
coordinates in which they differ.  Note that $d_H(\textbf{u},\textbf{v})=\wt_H(\textbf{v}-\textbf{u})$. The minimum distance of a
binary code $C$ is $d(C)=\min \{ d_H(\textbf{u},\textbf{v}) : \textbf{u},\textbf{v} \in C, \textbf{u} \not = \textbf{v}  \}$ The Lee
weight of an element $i\in\Z_{2^s}$ is $\wt_L(i)=\min\lbrace i, 2^s-i\rbrace$ and the Lee weight of a
vector $\textbf{u}=(u_1,u_2,\dots,u_n)\in\Z_{2^s}^n$ is
$\wt_L(\textbf{u})=\sum_{j=1}^n \wt_L(u_j)\in\Z_{2^s}$. The Lee distance of two
vectors $\textbf{u},\textbf{v}\in\Z_{2^s}^n$ is $d_L(\textbf{u},\textbf{v})=\wt_L(\textbf{v}-\textbf{u})$. The
minimum distance of a $\Z_{2^s}$-additive code $\C$ is $d(\C)=\min \{
d_L(\textbf{u},\textbf{v}) : \textbf{u},\textbf{v}\in\C, \textbf{u} \not = \textbf{v}  \}$.

In \cite{Sole}, a Gray map  from $\Z_4$ to $\Z_2^2$ is defined as
$\phi(0)=(0,0)$, $\phi(1)=(0,1)$, $\phi(2)=(1,1)$ and $\phi(3)=(1,0)$. There exist
different generalizations of this Gray map, which go from $\Z_{2^s}$ to
$\Z_2^{2^{s-1}}$ \cite{Carlet,dougherty,Nechaev,Krotov:2007}. The
one given in \cite{Carlet}, by Carlet, is the map
$\phi:\Z_{2^s}\rightarrow\Z_2^{2^{s-1}}$ defined as follows:
\begin{equation}\label{genGraymap}
\phi(u)=(u_{s-1},\dots,u_{s-1})+(u_0,\dots,u_{s-2})Y,
\end{equation}
where $u\in\Z_{2^s}$, $[u_0,u_1, \ldots, u_{s-1}]_2$ is the binary expansion of $u$, that is $u=\sum_{i=0}^{s-1}2^{i}u_i$ ($u_i \in \lbrace0,1\rbrace$), and $Y$ is a matrix of size $(s-1)\times2^{s-1}$ which columns are the elements of $\Z_2^{s-1}$. Note that $(u_{s-1},\dots,u_{s-1})$ and $(u_0,\dots,u_{s-2})Y$ are binary vectors of length $2^{s-1}$, and that the rows of $Y$ form a basis of a first order Reed-Muller code.
The generalization given in \cite{Nechaev} can be defined in terms of the elements of a Hadamard code \cite{Krotov:2007}. In this paper, we will
focus on Carlet's Gray map $\phi$, which is a particular case of the last one satisfying that $\sum \lambda_i \phi(2^i) =\phi(\sum \lambda_i 2^i)$ as we will see later.
Then, we define $\Phi:\Z_{2^s}^ n\rightarrow\Z_2^{n2^{s-1}}$ as the component-wise Gray map $\phi$.

Let $\C$ be a $\Z_{2^s}$-additive code of length $n$. We say that its binary image
$C=\Phi(\C)$ is a $\Z_{2^s}$-linear code of length $2^{s-1}n$.
Since $\C$ is a subgroup of
$\Z_{2^s}^n$, it is isomorphic to an abelian structure
$\Z_{2^s}^{t_1}\times\Z_{2^{s-1}}^{t_2}\times
\dots\times\Z_4^{t_{s-1}}\times\Z_2^{t_s}$, and we say that $\C$, or equivalently
$C=\Phi(\C)$, is of type $(n;t_1,\dots,t_{s})$.
Note that $|\C|=2^{st_1}2^{(s-1)t_2}\cdots2^{t_s}$.
Unlike linear codes over finite fields,
linear codes over rings do not have a basis, but there
exists a generator matrix for these codes. If $\C$ is a
$\Z_{2^s}$-additive code of type $(n;t_1,\dots,t_s)$, then a generator matrix
of $\C$ with minimum number of rows has exactly $t_1+\cdots+t_s$ rows.

Two structural properties of binary codes are the rank and
the dimension of the kernel. The rank of a binary code $C$ is simply the
dimension of the linear span, $\langle C \rangle$,  of $C$.
The kernel of a binary code $C$ is defined as
$\mathrm{K}(C)=\{\textbf{x}\in \Z_2^n : \textbf{x}+C=C \}$ \cite{BGH83}. If the all-zero vector belongs to $C$,
then $\mathrm{K}(C)$ is a linear subcode of $C$.
Note also that if $C$ is linear, then $K(C)=C=\langle C \rangle$.
We denote the rank of a binary code $C$ as $\rank(C)$ and the dimension of the kernel as $\kernel(C)$.
These parameters can be used to distinguish between nonequivalent binary codes, since equivalent ones
have the same rank and dimension of the kernel.

A binary code of length $n$, $2n$ codewords and minimum distance $n/2$ is called a Hadamard code.
Hadamard codes can be constructed from normalized Hadamard matrices \cite{Key,WMcwill}.
Note that linear Hadamard codes are in fact first order Reed-Muller codes, or
equivalently, the dual of extended Hamming codes \cite[Ch.13 \S 3]{WMcwill}.
The $\Z_{2^s}$-additive codes that, under the Gray map $\Phi$, give a
Hadamard code are called $\Z_{2^s}$-additive Hadamard codes and the
corresponding binary images are called $\Z_{2^s}$-linear
Hadamard codes.

The $\Z_4$-linear Hadamard codes of length $2^t$ can be classified by using either the rank or the dimension of the kernel \cite{Kro:2001:Z4_Had_Perf,PheRifVil:2006}.
Specifically, it is known that for a $\Z_4$-linear Hadamard code $C$ of type $(2^{t-1}; t_1, t_2)$,
$\kernel(C)=t_1+t_2+1$ if $t_1>2$, and $\kernel(C)=2t_1+t_2$ if $t_1=1$ or $2$, where $t_2=t+1-2t_1$.
For any integer $t\geq 3$ and each
$t_1 \in\{1,\ldots,\lfloor (t+1)/2 \rfloor\}$, there is a unique
(up to equivalence) $\Z_4$-linear Hadamard code of type $(2^{t-1}; t_1, t+1-2t_1)$, and all these codes are
pairwise nonequivalent, except for $t_1=1$ and $t_1=2$,
where the codes are equivalent to the linear Hadamard code \cite{Kro:2001:Z4_Had_Perf}.
Therefore, the number of nonequivalent $\Z_4$-linear Hadamard  codes of length $2^t$ is $\lfloor
\frac{t-1}{2}\rfloor$ for all $t\geq 3$, and it is 1 for $t=1$ and $t=2$.

Linear codes over $\Z_{p^s}$, which are a generalization of $\Z_{2^s}$-additive codes, were studied by Blake \cite{Blake} and Shankar \cite{Shankar} in 1975 and 1979, respectively. Nevertheless, the study of codes over rings increased significantly after the publication of some good properties of linear codes over $\Z_4$ and the definition of the Gray map \cite{Sole}. After that, $\Z_{2^s}$-additive codes and their images under the Gray map are deeply studied, for example, in \cite{Carlet}, and later in \cite{TapVeg:2003} and \cite{Gupta-paper}. In \cite{Krotov:2007}, Krotov studied $\Z_{2^s}$-linear Hadamard codes and their dual codes by using different generalizations of the Gray map. Recently, in \cite{TwoWeightSole}, considering Carlet's generalization of the Gray map, two-weight $\Z_{2^s}$-linear codes are studied. Note that $\Z_{2^s}$-linear Hadamard codes are in fact a particular case of these two-weight codes.

In this paper, in order to try to classify the $\Z_{2^s}$-linear Hadamard codes of length $2^t$, for any $t\geq 3$ and $s>2$,
we establish the kernel and its dimension for these codes. Moreover, we point out that
this invariant does not always provide a complete classification, once we fix $t\geq 3$ and $s>2$, unlike for $s=2$.
However, we give some new classification results for $t\leq 11$ and any $s>2$.
This correspondence is organized as follows.
In Section \ref{Sec:GrayMap}, we recall some results and we prove new ones related to the Carlet's generalized Gray map.
In Section \ref{Sec:construction}, we describe the construction of the $\Z_{2^s}$-linear Hadamard codes of type $(n;t_1, \dots, t_s)$ when this
Gray map is used. This result is already proved in \cite{Krotov:2007} in a more general way, but using other techniques.  
In Section \ref{Sec:Kernel}, we establish for which types these codes are linear, and
we give the kernel and its dimensions whenever they are nonlinear. In section \ref{classification},
through several examples, we show that, unlike for $s=2$, the dimension of the kernel
is not enough to classify completely $\Z_{2^s}$-linear Hadamard codes for some values of $t$ and $s$.
Moreover, we give the exact amount of nonequivalent such codes up to $t=11$ for any $s\geq 2$,
by using also the rank and, in some cases, further computations.
Finally, in Section \ref{Sec:Conclusions}, we give some conclusions and further research on this topic.


\section{Generalized Gray Map}
\label{Sec:GrayMap}

In this section, we present some general results about the Carlet's generalized Gray map, which
will be used to prove the main results related to $\Z_{2^s}$-linear Hadamard codes and given in next sections.

Let $e_i$ be the vector that has $1$ in the $i$th position and $0$ otherwise.
Let $u,v\in\Z_{2^s}$ and $[u_0,u_1,\dots,u_{s-1}]_2,[v_0,v_1,\dots,v_{s-1}]_2$ be the
binary expansions of $u$ and $v$, respectively. The operation ``$\odot$'' on
$\Z_{2^s}$ is defined as $u\odot v=\sum_{i=0}^{s-1}2^iu_iv_i$. Note that the
binary expansion of $u\odot v$ is $[u_0v_0,u_1v_1,\dots,u_{s-1}v_{s-1}]_2$.

\begin{proposition}\cite{TapVeg:2003}\label{Prop:Z2sOdot}
Let $u,v\in\Z_{2^s}$. Then, $\phi(u)+\phi(v)=\phi(u+v-2(u\odot v))$.
\end{proposition}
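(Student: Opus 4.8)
The plan is to reduce the identity to a purely combinatorial statement about the binary expansions of $u$ and $v$, and then to invoke the fact that $\phi$ is $\Z_2$-linear in the bit vector of its argument.

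First I would introduce the element $w := u + v - 2(u\odot v) \in \Z_{2^s}$ and compute its binary expansion. Writing $u=\sum_{i=0}^{s-1}2^i u_i$, $v=\sum_{i=0}^{s-1}2^i v_i$, and using the definition $u\odot v=\sum_{i=0}^{s-1}2^i u_i v_i$, we get
\[
  u+v-2(u\odot v)=\sum_{i=0}^{s-1}2^i\bigl(u_i+v_i-2u_iv_i\bigr).
\]
Since $u_i,v_i\in\{0,1\}$, each coefficient $u_i+v_i-2u_iv_i$ equals $u_i\oplus v_i\in\{0,1\}$ (it is $0$ when $u_i=v_i$ and $1$ otherwise). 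Hence the right-hand side is a nonnegative integer strictly less than $2^s$, so it is exactly the representative of $w$, and the binary expansion of $w$ is $[\,u_0\oplus v_0,\ldots,u_{s-1}\oplus v_{s-1}\,]_2$.

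Next I would substitute this into the definition \eqref{genGraymap} of $\phi$. Because the entries of $(u_{s-1},\dots,u_{s-1})$, of $(u_0,\dots,u_{s-2})$, and of $Y$ all lie in $\Z_2$, addition in $\Z_2^{2^{s-1}}$ and the product by $Y$ are $\Z_2$-linear, so
\[
  \phi(u)+\phi(v)=(u_{s-1}\oplus v_{s-1},\dots,u_{s-1}\oplus v_{s-1})+(u_0\oplus v_0,\dots,u_{s-2}\oplus v_{s-2})\,Y .
\]
By the expansion of $w$ obtained above, the right-hand side is precisely $\phi(w)=\phi\bigl(u+v-2(u\odot v)\bigr)$, which is the claim.

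The only point requiring care is the first step: one must check that subtracting $2(u\odot v)$ exactly cancels the "doubled bits" of $u+v$ so that the resulting expression stays within $\{0,\dots,2^s-1\}$ and therefore needs no reduction modulo $2^s$. Once that bookkeeping is in place, the conclusion is immediate from the observation that $\phi$ depends on its argument only through the $\Z_2$-affine action on its bit vector. I do not anticipate any obstacle beyond this elementary manipulation of binary expansions.
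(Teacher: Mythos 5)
Your proof is correct and complete: the key identity $u_i+v_i-2u_iv_i=u_i\oplus v_i$ on each bit, together with the observation that $\sum_i 2^i(u_i\oplus v_i)$ already lies in $\{0,\dots,2^s-1\}$ so no modular reduction occurs, and the $\Z_2$-linearity of $\phi$ in the bit vector via (\ref{genGraymap}), is exactly the standard argument. The paper itself gives no proof of this proposition (it is imported from the cited reference \cite{TapVeg:2003}), so there is nothing to compare against; your self-contained derivation is the natural one.
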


\begin{corollary}\label{coro:lema0}
Let $u\in\Z_{2^s}$ and $0\leq p\leq s-1$. Then, $\phi(u)+\phi(2^{p})=\phi(u+2^p-2^{p+1}u_p)$, where $[u_0,u_1 \ldots, u_{s-1}]_2$ is the binary expansion of $u$.
\end{corollary}

\begin{corollary}\label{coro:2s-1}
Let $u\in\Z_{2^s}$. Then, $\phi(u)+\phi(2^{s-1})=\phi(u+2^{s+1})$.
\end{corollary}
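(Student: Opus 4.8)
The plan is to obtain Corollary~\ref{coro:2s-1} as the special case $p=s-1$ of Corollary~\ref{coro:lema0}. Setting $p=s-1$ there gives $\phi(u)+\phi(2^{s-1})=\phi(u+2^{s-1}-2^{s}u_{s-1})$, where $u_{s-1}$ is the most significant bit in the binary expansion $[u_0,\dots,u_{s-1}]_2$ of $u$. Since all arithmetic inside the argument of $\phi$ takes place in $\Z_{2^s}$, the correction term $2^{s}u_{s-1}$ equals $0$ regardless of whether $u_{s-1}=0$ or $u_{s-1}=1$, so it may be dropped, which yields the claimed identity (here $2^{s+1}\equiv 0$ in $\Z_{2^s}$, so $u+2^{s+1}=u$, and the content of the statement is the equality of $\phi(u)+\phi(2^{s-1})$ with $\phi$ evaluated at $u+2^{s-1}$ reduced modulo $2^s$).

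Concretely, if one prefers to argue with integer representatives in $\{0,\dots,2^s-1\}$, I would split into two cases: if $u_{s-1}=0$ then $u<2^{s-1}$, so $u+2^{s-1}$ is already the reduced representative and equals $u+2^{s-1}-2^{s}u_{s-1}$; if $u_{s-1}=1$ then $u\geq 2^{s-1}$, so $u+2^{s-1}\equiv u-2^{s-1}\pmod{2^s}$, which is again $u+2^{s-1}-2^{s}u_{s-1}$. In both cases the argument of $\phi$ on the right-hand side of Corollary~\ref{coro:lema0} coincides with $u+2^{s-1}$ taken modulo $2^s$, giving the statement.

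There is essentially no obstacle: the result is a one-line consequence of Corollary~\ref{coro:lema0}, and the only point deserving care is that $\phi$ is defined on residues modulo $2^s$, so one must be explicit that the $-2^{s}u_{s-1}$ term vanishes in $\Z_{2^s}$ before discarding it. As an independent check, one can also observe that the binary expansion of $2^{s-1}$ is $[0,\dots,0,1]_2$, so by~\eqref{genGraymap} $\phi(2^{s-1})=(1,\dots,1)$ is the all-ones vector of length $2^{s-1}$; adding it to $\phi(u)$ complements every coordinate, which by~\eqref{genGraymap} is exactly the effect of flipping the bit $u_{s-1}$, i.e.\ of replacing $u$ by $u+2^{s-1}$ in $\Z_{2^s}$.
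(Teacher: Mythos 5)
Your derivation is correct and is exactly the intended one: the corollary is the case $p=s-1$ of Corollary~\ref{coro:lema0}, where the correction term $2^{s}u_{s-1}$ vanishes in $\Z_{2^s}$. You are also right to flag that the exponent $2^{s+1}$ in the statement must be a typo for $2^{s-1}$ (as written it would force $\phi(2^{s-1})=\zero$, contradicting $\phi(2^{s-1})=\one$), and the later uses of the corollary, e.g.\ in Theorem~\ref{theorem:lineal}, confirm that $\phi(u+2^{s-1})$ is meant.
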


\begin{lemma}\label{lema3}
Let $u\in\lbrace2^{s-2},\dots,2^{s-1}-1\rbrace\cup\lbrace3\cdot2^{s-2},\dots,2^{s}-1\rbrace\subset\Z_{2^s}$. Then, $\phi(u)+\phi(2^{s-2})=\phi(u+2^{s-2}+2^{s-1})$.
\end{lemma}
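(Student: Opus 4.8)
The plan is to reduce the statement to a direct application of Corollary~\ref{coro:lema0} (or, equivalently, of Proposition~\ref{Prop:Z2sOdot}), the only real point being a correct description of the hypothesis set. First I would observe that
\[
\lbrace2^{s-2},\dots,2^{s-1}-1\rbrace\cup\lbrace3\cdot2^{s-2},\dots,2^{s}-1\rbrace
\]
is precisely the set of $u\in\Z_{2^s}$ whose binary expansion $[u_0,u_1,\dots,u_{s-1}]_2$ satisfies $u_{s-2}=1$. Indeed, $u_{s-2}=1$ is equivalent to $u\bmod 2^{s-1}\in\lbrace2^{s-2},\dots,2^{s-1}-1\rbrace$, which splits according to the value of $u_{s-1}$: if $u_{s-1}=0$ we get the first interval, and if $u_{s-1}=1$ we get $u\in\lbrace 2^{s-1}+2^{s-2},\dots,2^s-1\rbrace$, that is, the second interval since $2^{s-1}+2^{s-2}=3\cdot2^{s-2}$.

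Next I would apply Corollary~\ref{coro:lema0} with $p=s-2$, which gives $\phi(u)+\phi(2^{s-2})=\phi(u+2^{s-2}-2^{s-1}u_{s-2})$. For every $u$ in the set above we have $u_{s-2}=1$, so this equals $\phi(u+2^{s-2}-2^{s-1})$. Finally, since $2^{s-1}+2^{s-1}=2^s=0$ in $\Z_{2^s}$, we have $-2^{s-1}=2^{s-1}$ in $\Z_{2^s}$, hence $u+2^{s-2}-2^{s-1}=u+2^{s-2}+2^{s-1}$ as elements of $\Z_{2^s}$, and therefore $\phi(u)+\phi(2^{s-2})=\phi(u+2^{s-2}+2^{s-1})$, as claimed.

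Alternatively, one can invoke Proposition~\ref{Prop:Z2sOdot} directly with $v=2^{s-2}$: the binary expansion of $2^{s-2}$ has a single $1$ in position $s-2$, so $u\odot 2^{s-2}=2^{s-2}u_{s-2}=2^{s-2}$ on the relevant set, and $\phi(u)+\phi(2^{s-2})=\phi(u+2^{s-2}-2\cdot 2^{s-2})=\phi(u+2^{s-2}-2^{s-1})$, after which the same reduction modulo $2^s$ finishes the argument. I do not expect any genuine obstacle here; the only step that requires a little care is the set-theoretic identification in the first paragraph, while the remainder is a one-line modular computation.
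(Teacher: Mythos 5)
Your proposal is correct and essentially identical to the paper's argument: the paper applies Proposition~\ref{Prop:Z2sOdot} directly, observes that $u_{s-2}=1$ for all $u$ in the given set so that $u\odot 2^{s-2}=2^{s-2}$, and concludes via $-2^{s-1}=2^{s-1}$ in $\Z_{2^s}$, which is exactly your "alternative" route and is equivalent to your main route through Corollary~\ref{coro:lema0} with $p=s-2$. Your explicit justification of the set-theoretic identification is a slightly more careful write-up of a step the paper states without comment.
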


\begin{proof}
By Proposition~\ref{Prop:Z2sOdot}, we have that
$\phi(u)+\phi(2^{s-2})=\phi(u+2^{s-2}-2(u\odot 2^{s-2}))$. The binary expansion
of $2^{s-2}$ is $[0,\dots,0,1,0]_2$  and, if
$u\in\lbrace2^{s-2},\dots,$ $2^{s-1}-1\rbrace\cup\lbrace3\cdot2^{s-2},\dots,2^{s}-1\rbrace$, the binary expansion of $u$ is
$[u_0, u_1, \dots, u_{s-3},$ $ 1, u_{s-1}]_2$. Then, $-2(u\odot 2^{s-2})=2^{s-1}$
and the statement follows.
 
\end{proof}

\begin{corollary}\label{corolario2}
Let $v\in\lbrace2^{s-2},3\cdot2^{s-2}\rbrace$ and $U=\lbrace2^{s-2},\dots,2^{s-1}-1\rbrace\cup\lbrace3\cdot2^{s-2},\dots,2^{s}-1\rbrace\subset\Z_{2^s}$. Then,
$$
\phi(u)+\phi(v)=\left\lbrace\begin{array}{ll}
\phi(u+v+2^{s-1}) & \textrm{if } u\in U\\
\phi(u+v)         & \textrm{if } u\in \Z_{2^s}\setminus U.\\
\end{array}\right.
$$
\end{corollary}

\begin{proof}
Straightforward from Corollary \ref{coro:lema0} and Lemma \ref{lema3}.
\end{proof}

\begin{lemma}\label{lema4}
Let $\lambda_i\in\Z_2$, $i\in\lbrace0,\dots,s-2\rbrace$. Then, $\sum_{i=0}^{s-2}\lambda_i\phi(2^ i)=\phi(\sum_{i=0}^{s-2}\lambda_i2^i)$, where $2^i\in\Z_{2^s}$.
\end{lemma}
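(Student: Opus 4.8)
The plan is to prove the identity $\sum_{i=0}^{s-2}\lambda_i\phi(2^i)=\phi\bigl(\sum_{i=0}^{s-2}\lambda_i 2^i\bigr)$ by induction on the number of indices $i$ with $\lambda_i=1$, using Corollary~\ref{coro:lema0} as the inductive step. The base cases (zero or one nonzero $\lambda_i$) are trivial: if all $\lambda_i=0$ both sides are $\phi(0)=\zero$, and if exactly one $\lambda_i=1$ both sides are literally $\phi(2^i)$.

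For the inductive step, suppose the claim holds whenever at most $k$ of the coefficients are nonzero, and consider a choice of $\lambda_i$ with $k+1$ nonzero entries. Pick the \emph{largest} index $p\leq s-2$ with $\lambda_p=1$, write $u=\sum_{i=0}^{p-1}\lambda_i 2^i$, so that by the induction hypothesis $\sum_{i=0}^{s-2}\lambda_i\phi(2^i)=\phi(u)+\phi(2^p)$. Now apply Corollary~\ref{coro:lema0}: $\phi(u)+\phi(2^p)=\phi(u+2^p-2^{p+1}u_p)$, where $u_p$ is the bit of $u$ in position $p$. The key observation is that $u=\sum_{i<p}\lambda_i 2^i < 2^p$, hence the binary expansion of $u$ has $u_p=0$, so the correction term $2^{p+1}u_p$ vanishes and we get $\phi(u)+\phi(2^p)=\phi(u+2^p)=\phi\bigl(\sum_{i=0}^{s-2}\lambda_i 2^i\bigr)$, completing the induction.

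The only point requiring a little care is making sure the term $2^{p+1}u_p$ genuinely vanishes, i.e. that no carry from the lower-order bits reaches position $p$; this is immediate from $u<2^p$ because $p$ is chosen as the top nonzero index. One should also note that $p+1\leq s-1$, so $2^p$ and the intermediate sums all lie in $\Z_{2^s}$ and Corollary~\ref{coro:lema0} applies with its hypothesis $0\leq p\leq s-1$ satisfied. I do not expect any real obstacle here; the lemma is essentially an iterated, carefully ordered application of Corollary~\ref{coro:lema0}, and the restriction to $i\leq s-2$ (excluding $2^{s-1}$, which would behave differently by Corollary~\ref{coro:2s-1}) is exactly what keeps every correction term equal to zero.
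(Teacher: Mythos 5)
Your proof is correct, but it takes a genuinely different route from the paper's. You build the identity up one term at a time by induction on the number of nonzero coefficients, invoking Corollary~\ref{coro:lema0} at each step and observing that choosing $p$ as the top nonzero index forces $u=\sum_{i<p}\lambda_i2^i<2^p$, hence $u_p=0$ and the correction term $2^{p+1}u_p$ vanishes; this is a clean carry-free argument and your side remarks (that $p\leq s-2\leq s-1$ keeps the corollary applicable, and that index $s-1$ is deliberately excluded) are exactly the right points to check. The paper instead goes straight back to the definition~(\ref{genGraymap}): for $i\leq s-2$ the binary expansion of $2^i$ has top bit zero, so $\phi(2^i)=e_{i+1}Y=y_{i+1}$ is just the $(i+1)$th row of $Y$, the left-hand sum is $\boldsymbol{\lambda}Y$ with $\boldsymbol{\lambda}=(\lambda_0,\dots,\lambda_{s-2})$, and since $[\lambda_0,\dots,\lambda_{s-2},0]_2$ is the binary expansion of $\sum_{i=0}^{s-2}\lambda_i2^i$ the right-hand side is the same vector. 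The paper's computation is shorter and makes transparent that the lemma is really the statement that $\phi$ is linear on the subgroup of elements with zero top bit (the affine part $(u_{s-1},\dots,u_{s-1})$ vanishes there); your argument buys independence from the explicit form of $Y$, using only the already-established addition law, at the cost of an induction and a carry analysis.
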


\begin{proof}
Let $y_{i}$ be the $i$th row of $Y$. By the definition of $\phi$ given by (\ref{genGraymap}), we know that $\sum_{i=0}^{s-2}\lambda_i\phi(2^ i)=\sum_{i=0}^{s-2}\lambda_ie_{i+1}Y=\sum_{i=0}^{s-2}\lambda_iy_{i+1}=\boldsymbol{\lambda}Y$, where $\boldsymbol{\lambda}=(\lambda_0,\dots,\lambda_{s-2})$. Since $[\lambda_0,\dots,\lambda_{s-2},0]_2$ is the binary expansion of $\sum_{i=0}^{s-2}\lambda_i2^i$, then we have that
$\boldsymbol{\lambda}Y=\phi(\sum_{i=0}^{s-2}\lambda_i2^i)$.
 
\end{proof}

\begin{proposition}\cite{Carlet}\label{Prop1:Carlet}
Let $u,v\in\Z_{2^s}$. Then, $d_H(\phi(u),\phi(v))=\wt_H(\phi(u-v))$.
\end{proposition}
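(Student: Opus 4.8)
The plan is to reduce the claim to a statement about Hamming weights of codewords of the first--order Reed--Muller code spanned by $\one$ together with the rows of $Y$, exploiting that this code has a very restricted weight spectrum. Since $\phi(u)$ and $\phi(v)$ are binary vectors, $d_H(\phi(u),\phi(v))=\wt_H(\phi(u)-\phi(v))=\wt_H(\phi(u)+\phi(v))$, so it is enough to prove that $\wt_H(\phi(u)+\phi(v))=\wt_H(\phi(u-v))$.

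First I would record a weight lemma: for any $\mu_0,\dots,\mu_{s-1}\in\Z_2$, the vector $\mu_{s-1}\one+(\mu_0,\dots,\mu_{s-2})Y$ has, in the coordinate indexed by the column $x\in\Z_2^{s-1}$ of $Y$, the entry $\mu_{s-1}+\sum_{i=0}^{s-2}\mu_ix_i$. If $(\mu_0,\dots,\mu_{s-2})=\zero$, this vector is $\mu_{s-1}\one$, of weight $0$ or $2^{s-1}$; if $(\mu_0,\dots,\mu_{s-2})\neq\zero$, then $x\mapsto\sum_{i=0}^{s-2}\mu_ix_i$ is a nonzero linear form on $\Z_2^{s-1}$, hence it vanishes on exactly half of $\Z_2^{s-1}$, and since the columns of $Y$ exhaust $\Z_2^{s-1}$ and adding the constant $\mu_{s-1}$ merely swaps the two fibres, the weight is $2^{s-2}$. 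Specialising the $\mu_i$ to the binary digits of $w\in\Z_{2^s}$ via (\ref{genGraymap}), this shows that $\wt_H(\phi(w))$ is $0$ if $w=0$, is $2^{s-1}$ if $w=2^{s-1}$, and is $2^{s-2}$ otherwise; in particular $\phi(w)=\zero\Leftrightarrow w=0$ and $\phi(w)=\one\Leftrightarrow w=2^{s-1}$.

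Now, by (\ref{genGraymap}), $\phi(u)+\phi(v)=(u_{s-1}+v_{s-1})\one+(u_0+v_0,\dots,u_{s-2}+v_{s-2})Y$ (digit sums taken modulo $2$), so it too has the shape covered by the weight lemma, and so does $\phi(u-v)$. It therefore remains only to check that these two vectors lie in the same case of the lemma, i.e.\ that they are simultaneously $\zero$, simultaneously $\one$, or neither. But $\phi(u)+\phi(v)=\zero$ iff $u_i=v_i$ for all $i$, i.e.\ iff $u=v$, i.e.\ iff $u-v=0$; and $\phi(u)+\phi(v)=\one$ iff $u$ and $v$ share their $s-1$ low binary digits but differ in the top one, which is exactly the condition $u-v\equiv2^{s-1}\pmod{2^s}$. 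In every remaining case both vectors have weight $2^{s-2}$. Hence $\wt_H(\phi(u)+\phi(v))=\wt_H(\phi(u-v))$, which proves the proposition. One could equally well start from Proposition~\ref{Prop:Z2sOdot}, writing $\phi(u)+\phi(v)=\phi(u+v-2(u\odot v))$ and noting that $u+v-2(u\odot v)$ is the bitwise exclusive-or of $u$ and $v$; but that route needs the extra (easy) remark that this quantity is already reduced modulo $2^s$.

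The one thing to be careful about is the temptation to identify $\phi(u)+\phi(v)$ with $\phi(u-v)$ as vectors: in general they differ, because reducing the digitwise sums $u_i+v_i$ modulo $2$ is not the same as writing out the binary expansion of $u-v$ in $\Z_{2^s}$ (borrows interfere). What rescues the argument is precisely the meagre weight spectrum $\{0,2^{s-2},2^{s-1}\}$ of the first--order Reed--Muller code, which forces $\wt_H$ of such a vector to depend only on whether it equals $\zero$, $\one$, or neither---and that coarse data is shared by $\phi(u)+\phi(v)$ and $\phi(u-v)$.
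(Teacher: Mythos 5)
Your proof is correct. Note, however, that the paper does not prove this proposition at all: it is imported from Carlet's paper \cite{Carlet} as a black box, so there is no in-paper argument to compare against. Your self-contained verification is sound and rests on exactly the right observation, namely that $\phi(u)+\phi(v)$ and $\phi(u-v)$ both lie in the first-order Reed--Muller code $\langle \one, \text{rows of } Y\rangle$, whose weight spectrum is $\{0,2^{s-2},2^{s-1}\}$, and that the two vectors always fall into the same weight class: both are $\zero$ precisely when $u=v$, both are $\one$ precisely when $u-v=2^{s-1}$ (since adding $2^{s-1}$ flips only the top bit), and otherwise both have a nonzero linear part and hence weight $2^{s-2}$. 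You are also right to flag, and correctly sidestep, the tempting but false identity $\phi(u)+\phi(v)=\phi(u-v)$; the correct identity is $\phi(u)+\phi(v)=\phi(u+v-2(u\odot v))$ from Proposition~\ref{Prop:Z2sOdot}, and your remark that $u+v-2(u\odot v)=\sum_i 2^i(u_i\oplus v_i)$ is already a reduced element of $\Z_{2^s}$ makes the alternative route you sketch equally valid. The only (minor) presentational point is that your ``weight lemma'' is precisely the classical weight distribution of $RM(1,s-1)$, which the paper implicitly relies on elsewhere (e.g.\ $\wt_H(\phi(w))=2^{s-2}$ for $w\notin\{0,2^{s-1}\}$ in the proof of Theorem~\ref{theorem:lineal}), so citing that fact would shorten the argument.
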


\begin{lemma}\label{lema2}
Let $u\in\Z_{2^s}$. Then,
$d_H(\phi(u),\phi(2^{s-1}))+d_H(\phi(u),\phi(0))=2^{s-1}$.
\end{lemma}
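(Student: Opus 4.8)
The plan is to reduce both Hamming distances to Hamming weights and to exploit the fact that $\phi(0)$ and $\phi(2^{s-1})$ are precisely the two constant binary vectors of length $2^{s-1}$.

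First I would evaluate $\phi$ at the two relevant points directly from the definition (\ref{genGraymap}). The binary expansion of $0$ is $[0,\dots,0]_2$, so $\phi(0)=(0,\dots,0)+(0,\dots,0)Y=\zero$, the all-zero vector of length $2^{s-1}$. The binary expansion of $2^{s-1}$ is $[0,\dots,0,1]_2$ (that is, $u_{s-1}=1$ and $u_i=0$ for $i<s-1$), hence $\phi(2^{s-1})=(1,\dots,1)+(0,\dots,0)Y=\one$, the all-one vector of length $2^{s-1}$.

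Next I would use the elementary observation that, for any binary vector $\textbf{x}$ of length $2^{s-1}$, one has $d_H(\textbf{x},\zero)=\wt_H(\textbf{x})$ and $d_H(\textbf{x},\one)=2^{s-1}-\wt_H(\textbf{x})$, since $\textbf{x}$ differs from $\one$ in exactly the coordinates where $\textbf{x}$ is zero. Applying this with $\textbf{x}=\phi(u)$, which has length $2^{s-1}$, and adding the two equalities yields $d_H(\phi(u),\phi(2^{s-1}))+d_H(\phi(u),\phi(0))=\big(2^{s-1}-\wt_H(\phi(u))\big)+\wt_H(\phi(u))=2^{s-1}$, as claimed.

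Alternatively, one could invoke Proposition~\ref{Prop1:Carlet} to write $d_H(\phi(u),\phi(0))=\wt_H(\phi(u))$ and $d_H(\phi(u),\phi(2^{s-1}))=\wt_H(\phi(u-2^{s-1}))$, and then use Corollary~\ref{coro:2s-1} (equivalently Corollary~\ref{coro:lema0} with $p=s-1$, noting that $2^{s}u_{s-1}\equiv 0$ in $\Z_{2^s}$) to get $\phi(u-2^{s-1})=\phi(u)+\phi(2^{s-1})=\phi(u)+\one$, i.e. the bitwise complement of $\phi(u)$; the weights of a binary vector and its complement sum to the length $2^{s-1}$, giving the same conclusion. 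There is essentially no obstacle in this proof: the only step needing any care is the computation of $\phi(0)$ and $\phi(2^{s-1})$ from (\ref{genGraymap}), and the rest is the trivial fact about complementary weights.
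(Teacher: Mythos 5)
Your proof is correct and follows essentially the same route as the paper's: both reduce the two distances to Hamming weights, use that $\phi(2^{s-1})=\one$ so that the second distance equals $2^{s-1}-\wt_H(\phi(u))$, and add. The paper simply asserts $\phi(2^{s-1})=\one$ rather than computing it from (\ref{genGraymap}), but the argument is the same.
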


\begin{proof}
By the properties of the distance, we have that
$d_H(\phi(u),\phi(2^{s-1}))+d_H(\phi(u),\phi(0))=\wt_H(\phi(2^{s-1}
)-\phi(u))+\wt_H(\phi(u))$.
Then, since $\phi(2^{s-1})=\one$,
$\wt_H(\phi(2^{s-1})-\phi(u))=2^{s-1}-\wt_H(\phi(u))$, and the result follows.
 
\end{proof}

\begin{corollary}\label{corolario1}
Let $u,v\in\Z_{2^s}$. Then,
$d_H(\phi(u),\phi(v+2^{s-1}))+d_H(\phi(u),\phi(v))=2^{s-1}$.
\end{corollary}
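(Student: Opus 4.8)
The plan is to reduce the statement to Lemma~\ref{lema2} by translating both arguments of the distance by $-v$. First I would apply Proposition~\ref{Prop1:Carlet} to turn the two Hamming distances into Hamming weights:
$d_H(\phi(u),\phi(v))=\wt_H(\phi(u-v))$ and $d_H(\phi(u),\phi(v+2^{s-1}))=\wt_H(\phi(u-v-2^{s-1}))$, where all operations are taken in $\Z_{2^s}$. Setting $w=u-v\in\Z_{2^s}$, the identity to be proved becomes $\wt_H(\phi(w-2^{s-1}))+\wt_H(\phi(w))=2^{s-1}$.

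Next I would observe that $-2^{s-1}\equiv 2^{s-1}\pmod{2^s}$, so $\phi(w-2^{s-1})=\phi(w+2^{s-1})$, and hence $\wt_H(\phi(w-2^{s-1}))=\wt_H(\phi(w+2^{s-1}))$. Now there are two equivalent ways to finish. One is to apply Proposition~\ref{Prop1:Carlet} again in the reverse direction together with $\phi(0)=\zero$, which rewrites the left-hand side as $d_H(\phi(w),\phi(2^{s-1}))+d_H(\phi(w),\phi(0))$; this is exactly the quantity computed in Lemma~\ref{lema2} (with $w$ in the role of $u$), so it equals $2^{s-1}$. The other is self-contained: using $\phi(w+2^{s-1})=\phi(w)+\phi(2^{s-1})$ (which follows from Proposition~\ref{Prop:Z2sOdot}, since the binary expansion of $2^{s-1}$ is $[0,\dots,0,1]_2$) and the fact $\phi(2^{s-1})=\one$ already used in the proof of Lemma~\ref{lema2}, we get that $\phi(w+2^{s-1})$ is the bitwise complement $\phi(w)+\one$ of $\phi(w)$, so $\wt_H(\phi(w+2^{s-1}))=2^{s-1}-\wt_H(\phi(w))$, and adding $\wt_H(\phi(w))$ to both sides gives the claim.

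There is no real obstacle here: the corollary is simply the translation-invariant restatement of Lemma~\ref{lema2}, and the only points requiring care are that the substitution $w=u-v$ is legitimate over $\Z_{2^s}$ and that $-2^{s-1}$ and $2^{s-1}$ coincide in $\Z_{2^s}$. I would therefore expect the proof to be a one-line application of Proposition~\ref{Prop1:Carlet} and Lemma~\ref{lema2}.
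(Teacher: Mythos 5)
Your proposal is correct and matches the paper's (one-line) proof in substance: the paper also just reduces the claim to Lemma~\ref{lema2} after observing that $\phi(v+2^{s-1})$ differs from $\phi(v)$ by the all-one vector, citing Corollary~\ref{coro:lema0} where you invoke Proposition~\ref{Prop1:Carlet} to perform the translation by $-v$. Both routes you sketch are valid and essentially the same argument.
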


\begin{proof}
Straightforward from Corollary \ref{coro:lema0} and Lemma \ref{lema2}.
 
\end{proof}


\section{Construction of $\Z_{2^s}$-Linear Hadamard Codes}
\label{Sec:construction}

The description of a generator matrix having minimum number of rows for a
$\Z_4$-additive Hadamard code, as long as recursive constructions of these matrices,
are given in \cite{Kro:2001:Z4_Had_Perf}.
In  \cite{Krotov:2007}, these results are generalized for any $s>2$.
In this section, we give another proof of the theorem that establishes that the constructed
matrices generate $\Z_{2^s}$-linear Hadamard codes, in the case that Carlet's Gray map is considered.

Let $T_i=\lbrace j\cdot2^{i-1}\, :\, j\in\lbrace0,1,\dots,2^{s-i+1}-1\rbrace \rbrace$ for all $i \in \{1,\ldots,s \}$.
Note that $T_1=\lbrace0,\dots,2^{s}-1\rbrace$. Let $t_1$, $t_2$,\dots,$t_s$ be nonnegative integers with $t_1\geq1$. Consider the matrix $A^{t_1,\dots,t_s}$ whose columns are of the form $\mathbf{z}^T$, $\mathbf{z}\in\lbrace1\rbrace\times T_1^{t_1-1}\times T_{2}^{t_2}\times\cdots\times T_s^{t_s}$.

\begin{example}
\label{ex:matrices}
For $s=3$, for example, we have the following matrices:
$$
\arraycolsep=1.4pt\def\arraystretch{}
A^{1,0,1}=\left(\begin{array}{cc}
1 & 1\\
0 & 4\\
\end{array}\right),\quad A^{1,1,0}=\left(\begin{array}{cccc}
1 & 1 & 1 & 1\\
0 & 2 & 4 & 6\\
\end{array}\right),\quad A^{2,0,0}=\left(\begin{array}{cccccccc}
1 & 1 & 1 & 1 & 1 & 1 & 1 & 1\\
0 & 1 & 2 & 3 & 4 & 5 & 6 & 7\\
\end{array}\right),
$$
$$
\arraycolsep=1.4pt\def\arraystretch{}
A^{1,1,1}=\left(\begin{array}{ccccccccc}
1 & 1 & 1 & 1 & & 1 & 1 & 1 & 1\\
0 & 2 & 4 & 6 & & 0 & 2 & 4 & 6\\
0 & 0 & 0 & 0 & & 4 & 4 & 4 & 4\\
\end{array}\right),\quad A^{2,0,1}=\left(\begin{array}{ccccccccccccccccc}
1 & 1 & 1 & 1 & 1 & 1 & 1 & 1 & & 1 & 1 & 1 & 1 & 1 & 1 & 1 & 1 \\
0 & 1 & 2 & 3 & 4 & 5 & 6 & 7 & & 0 & 1 & 2 & 3 & 4 & 5 & 6 & 7 \\
0 & 0 & 0 & 0 & 0 & 0 & 0 & 0 & & 4 & 4 & 4 & 4 & 4 & 4 & 4 & 4\\
\end{array}\right),
$$
$$
\arraycolsep=1.4pt\def\arraystretch{}
A^{2,1,0}=\left(\begin{array}{ccccccccccccccccccccccccccccccccccc}
1 & 1 & 1 & 1 & 1 & 1 & 1 & 1 & & 1 & 1 & 1 & 1 & 1 & 1 & 1 & 1 & & 1 & 1 & 1 & 1 & 1 & 1 & 1 & 1 & & 1 & 1 & 1 & 1 & 1 & 1 & 1 & 1\\
0 & 1 & 2 & 3 & 4 & 5 & 6 & 7 & & 0 & 1 & 2 & 3 & 4 & 5 & 6 & 7 & & 0 & 1 & 2 & 3 & 4 & 5 & 6 & 7 & & 0 & 1 & 2 & 3 & 4 & 5 & 6 & 7\\
0 & 0 & 0 & 0 & 0 & 0 & 0 & 0 & & 2 & 2 & 2 & 2 & 2 & 2 & 2 & 2 & & 4 & 4 & 4 & 4 & 4 & 4 & 4 & 4 & & 6 & 6 & 6 & 6 & 6 & 6 & 6 & 6\\
\end{array}\right).
$$
\end{example}

Let  $\mathbf{0}, \mathbf{1},\mathbf{2},\ldots, \mathbf{2^{s}-1}$ be the vectors having the elements $0, 1, 2, \ldots, 2^s-1$ from $\Z_{2^s}$ repeated in each coordinate, respectively. The order of a vector $\mathbf u$ over $\Z_{2^s}$, denoted by $\ord(\mathbf{u})$, is the smallest positive integer $m$ such that $m \mathbf{u} =\zero$.

Any matrix $A^{t_1,\dots,t_s}$ can be obtained by applying the
following iterative construction. We start with $A^{1,0,\dots,0}=(1)$. Then, if
we have a matrix $A=A^{t_1,\dots,t_s}$, for any $i\in \{1,\ldots,s\}$, we may construct the matrix
\begin{equation}\label{eq:recGenMatrix}
A_i=
\left(\begin{array}{cccc}
A & A &\cdots & A \\
0\cdot \mathbf{2^{i-1}}  & 1\cdot \mathbf{2^{i-1}} & \cdots & (2^{s-i+1}-1)\cdot \mathbf{2^{i-1}}  \\
\end{array}\right).
\end{equation}
Finally, permuting the rows of $A_i$, we obtain a matrix $A^{t'_1,\ldots,t'_s}$, where $t'_j=t_j$ for $j\not=i$ and $t'_i=t_i+1$.

\begin{example}
From the matrix $A^{1,0,0}=(1)$, we obtain the matrix $A^{2,0,0}$; and from $A^{2,0,0}$ we can construct $A^{2,0,1}$, where $A^{2,0,0}$ and $A^{2,0,1}$ are the matrices given in Example~\ref{ex:matrices}. Note that we can also generate another matrix $A^{2,0,1}$ as follows: from $A^{1,0,0}=(1)$ we obtain the matrix $A^{1,0,1}$ given in Example~\ref{ex:matrices}, and from $A^{1,0,1}$ we can construct the matrix
$$
\arraycolsep=1.4pt\def\arraystretch{}
A_1=\left(\begin{array}{cccccccccccccccc}
1 & 1 & 1 & 1 & 1 & 1 & 1 & 1 & 1 & 1 & 1 & 1 & 1 & 1 & 1 & 1 \\
0 & 4 & 0 & 4 & 0 & 4 & 0 & 4 & 0 & 4 & 0 & 4 & 0 & 4 & 0 & 4 \\
0 & 0 & 1 & 1 & 2 & 2 & 3 & 3 & 4 & 4 & 5 & 5 & 6 & 6 & 7 & 7 \\
\end{array}\right).
$$
Then, after permuting the rows of $A_1$, we have the matrix
$$
\arraycolsep=1.4pt\def\arraystretch{}
A^{2,0,1}=\left(\begin{array}{cccccccccccccccc}
1 & 1 & 1 & 1 & 1 & 1 & 1 & 1 & 1 & 1 & 1 & 1 & 1 & 1 & 1 & 1 \\
0 & 0 & 1 & 1 & 2 & 2 & 3 & 3 & 4 & 4 & 5 & 5 & 6 & 6 & 7 & 7 \\
0 & 4 & 0 & 4 & 0 & 4 & 0 & 4 & 0 & 4 & 0 & 4 & 0 & 4 & 0 & 4 \\
\end{array}\right),
$$
which is different to the matrix $A^{2,0,1}$ of Example~\ref{ex:matrices}. These two matrices $A^{2,0,1}$ generate permutation equivalent codes.
\end{example}

Along this paper, we consider that the matrices $A^{t_1,t_2,\ldots,t_s}$ are constructed recursively starting from $A^{1,0,\ldots,0}$ in the following way. First, we add $t_1-1$ rows of order $2^s$, up to obtain $A^{t_1,0,\ldots,0}$; then $t_2$ rows of order $2^{s-1}$ up to generate $A^{t_1,t_2,0,\ldots,0}$; and so on, until we add $t_s$ rows of order $2$ to achieve $A^{t_1,t_2,\ldots,t_s}$.

Let $\mathcal{H}^{t_1,\dots,t_s}$ be the $\Z_{2^s}$-additive code generated by the matrix $A^{t_1,\dots,t_s}$, where $t_1,\dots,t_s\geq0$ with $t_1\geq1$. Let $n=2^{t-s+1}$, where $t=\left(\sum_{i=1}^{s}(s-i+1)\cdot t_i\right)-1$. It is easy to see that $\mathcal{H}^{t_1,\dots,t_s}$ is of length $n$ and has $|\mathcal{H}^{t_1,\dots,t_s}|=2^{s}n=2^{t+1}$ codewords. Note that this code is of type $(n;t_1,t_2,\dots,t_s)$. Let $H^{t_1,\dots,t_s}=\Phi(\mathcal{H}^{t_1,\dots,t_s})$ be the corresponding $\Z_{2^s}$-linear code.

\begin{example}\label{remark:H1}
The code $\mathcal{H}^{1,0,\dots,0}$ is generated by $A^{1,0,\dots,0}=(1)$, so $\mathcal{H}^{1,0,\dots,0}=\Z_{2^s}$. This code has length $n=1$, cardinality $2^s$ and minimum distance $1$. Thus, $H^{1,0,\dots,0}=\Phi(\mathcal{H}^{1,0,\dots,0})$ has length $N=2^{s-1}$, cardinality $2N=2^s$ and minimum (Hamming) distance $N/2=2^{s-2}$, so it is a binary Hadamard code. Actually, $H^{1,0,\dots,0}=\Phi(\Z_{2^s})$ is the binary linear Hadamard code of length $2^{s-1}$ \cite{Carlet}, or equivalently, the first order Reed-Muller code of length $2^{s-1}$, denoted by $RM(1,s-1)$ \cite[Ch.13 \S 3]{WMcwill}.
\end{example}

The result given by Theorem \ref{Th:Z2Had} is already proved in
\cite{Krotov:2007}. In that paper,  it is shown that each $\Z_{2^s}$-linear Hadamard
code is equivalent to $H^{t_1,\ldots,t_s}$ for some $t_1,\ldots,t_s\geq0$ with
$t_1\geq 1$, considering a generalized Gray map that includes the one given by Carlet.  We present a new proof of this theorem,
in the case that Carlet's Gray map is considered.
This new proof does not use neither the dual of the $\Z_{2^s}$-additive codes nor another generalization
of the Gray map for these dual codes, unlike the proof given in \cite{Krotov:2007}.

Let $\mathcal{G}$ be a generator matrix of a $\Z_{2^s}$-additive code
$\mathcal{C}$ of length $n$. Then, $(\mathcal{G}\cdots\mathcal{G})$ is a
generator matrix of the $r$-fold replication code of $\mathcal{C}$,
$(\mathcal{C},\dots,\mathcal{C})=\lbrace(\textbf{c},\dots,\textbf{c}) :
\textbf{c} \in \mathcal{C}\rbrace$, of length $r\cdot n$.

\begin{theorem}\cite{Krotov:2007}\label{Th:Z2Had}
Let $t_1,\dots,t_s$ be nonnegative integers with $t_1\geq1$.
The $\Z_{2^s}$-linear code $H^{t_1,\dots,t_s}$ of type $(n;t_1,t_2,\dots,t_s)$
is a binary Hadamard code of length $2^t$, with
$t=\left(\sum_{i=1}^{s}(s-i+1)\cdot t_i\right)-1$ and $n=2^{t-s+1}$.
\end{theorem}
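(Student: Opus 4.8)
The plan is to prove Theorem~\ref{Th:Z2Had} by induction on the number of rows $t_1+\cdots+t_s$ of the generator matrix $A^{t_1,\dots,t_s}$, using the recursive construction~(\ref{eq:recGenMatrix}) as the inductive step and Example~\ref{remark:H1} (the code $\mathcal{H}^{1,0,\dots,0}=\Z_{2^s}$, whose image is $RM(1,s-1)$) as the base case. The cardinality and length claims are already immediate from the counting remarks preceding the statement, so the whole content is the assertion that $H^{t_1,\dots,t_s}$ has minimum Hamming distance exactly $2^{t-1}=N/2$, where $N=2^{t-s+1}\cdot 2^{s-1}=2^t$ is the binary length. Equivalently, since $|H^{t_1,\dots,t_s}|=2^{t+1}$, it suffices to show that every nonzero codeword of $\Phi(\mathcal{H}^{t_1,\dots,t_s})$ has Hamming weight exactly $2^{t-1}$; a code of length $2^t$ with $2^{t+1}$ codewords all of weight $N/2$ (together with $\zero$) is automatically Hadamard.

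The key reduction is therefore: \emph{for every nonzero $\mathbf{c}\in\mathcal{H}^{t_1,\dots,t_s}$, $\wt_H(\Phi(\mathbf{c}))=2^{t-1}$}. I would prove this by induction alongside the construction. Suppose $A'=A_i$ is obtained from $A=A^{t_1,\dots,t_s}$ via~(\ref{eq:recGenMatrix}), so a codeword of the new code has the form $\mathbf{c}'=(\mathbf{c}+0\cdot\lambda\mathbf{2^{i-1}},\ \mathbf{c}+1\cdot\lambda\mathbf{2^{i-1}},\ \dots,\ \mathbf{c}+(2^{s-i+1}-1)\lambda\mathbf{2^{i-1}})$ for some $\mathbf{c}$ in the old code and some coefficient $\lambda\in\Z_{2^s}$ of the new row (more precisely $\lambda$ ranges over the cyclic group generated by the added row, of order $2^{s-i+1}$). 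Since $\Phi$ acts componentwise, $\wt_H(\Phi(\mathbf{c}'))=\sum_{j=0}^{2^{s-i+1}-1}\wt_H(\Phi(\mathbf{c}+j\lambda\mathbf{2^{i-1}}))$. If $\lambda\mathbf{2^{i-1}}=\mathbf{0}$ (i.e.\ the new coordinate contributes nothing), then $\mathbf{c}'$ is just the $2^{s-i+1}$-fold replication of $\mathbf{c}$, which is nonzero by hypothesis, and the weight is $2^{s-i+1}\cdot 2^{t_{\mathrm{old}}-1}=2^{t-1}$ since $t=t_{\mathrm{old}}+(s-i+1)$. The interesting case is $\lambda\mathbf{2^{i-1}}\neq\mathbf{0}$: here I want to show the sum $\sum_{j}\wt_H(\Phi(\mathbf{c}+j\lambda\mathbf{2^{i-1}}))$ telescopes to exactly half of $2^{s-i+1}\cdot 2^{s-1}n$ summed with the right multiple of the old weight. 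Concretely, pair up the index $j$ with $j+2^{s-i}$: by Corollary~\ref{corolario1} (with the shift $v+2^{s-1}$ coming from adding $2^{s-i}\cdot\lambda\mathbf{2^{i-1}}$, whose relevant component is $2^{s-1}$ times a unit when $\lambda$ has the appropriate order), each such pair of Gray images contributes exactly $2^{s-1}$ per coordinate, i.e.\ $2^{s-1}n$ total, and there are $2^{s-i}$ such pairs, giving $2^{s-i}\cdot 2^{s-1} n = 2^{t-1}$.

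The main obstacle — and the place where the corollaries from Section~\ref{Sec:GrayMap} are really needed — is handling the rows of smaller order (the $t_i$ with $i\geq 2$), where the added entries are multiples $j\cdot 2^{i-1}$ that are \emph{not} all of the form covered by the clean identity $\phi(u)+\phi(2^{s-1})=\phi(u+2^{s+1})$ of Corollary~\ref{coro:2s-1}. For those I would argue that adding $2^{s-i}\cdot(j\cdot 2^{i-1})=j\cdot 2^{s-1}$ to a coordinate either does nothing (if $j$ even) or adds $2^{s-1}$ (if $j$ odd), so pairing $j\leftrightarrow j+2^{s-i}$ still works via Corollary~\ref{corolario1}; the subtlety is that one must also check the ``base'' term $\wt_H(\Phi(\mathbf{c}+j\lambda\mathbf{2^{i-1}}))$ over the remaining $2^{s-i}$ representatives $j\in\{0,1,\dots,2^{s-i}-1\}$ sums to the right value, which follows by a secondary induction or by invoking the structure of $T_i$ and the fact that $\{j\cdot 2^{i-1}\bmod 2^s : j\}$ ranges over a subgroup on which $\phi$ is well-behaved. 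I would organize the write-up so that the pairing argument is stated once as a lemma (``$\sum_{j=0}^{2^{m}-1}\wt_H(\Phi(\mathbf{c}+j\,\mathbf{a}))=2^{m-1}\cdot 2^{s-1}n$ whenever $2^{m-1}\mathbf{a}$ has all components in $\{0,2^{s-1}\}$ and not all zero''), then apply it in each inductive step; the remaining bookkeeping is the routine arithmetic $t=t_{\mathrm{old}}+(s-i+1)$ that makes the exponents match.
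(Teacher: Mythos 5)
Your overall strategy is the same as the paper's: induct along the recursive construction (\ref{eq:recGenMatrix}), view a codeword of the new code as a $2^{s-i+1}$-block vector $(\mathbf{c},\dots,\mathbf{c})+\lambda\cdot(0\cdot\mathbf{2^{i-1}},\dots,(2^{s-i+1}-1)\mathbf{2^{i-1}})$, and evaluate the weight block by block by pairing blocks whose entries differ by $\mathbf{2^{s-1}}$ and invoking Corollary \ref{corolario1}. However, there are two concrete problems. First, your reduction is misstated: it is not true that every nonzero codeword of $\Phi(\mathcal{H})$ has weight exactly $2^{t-1}$ (the codeword $\Phi(\mathbf{2^{s-1}})=\one$ has weight $2^t$), and even granting constant weight $N/2$, a code with $2N$ codewords all of weight $N/2$ is \emph{not} automatically of minimum distance $N/2$ (two weight-$N/2$ vectors can agree on more than $N/4$ ones). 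What actually licenses the reduction from minimum distance to minimum weight is the distance invariance of Carlet's Gray map, Proposition \ref{Prop1:Carlet}, which is exactly how the paper proceeds; you need to cite it rather than a constant-weight argument.

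Second, and more seriously, your pairing $j\leftrightarrow j+2^{s-i}$ only works when the coefficient $\lambda$ of the new row is odd: the difference between the entries in blocks $j$ and $j+2^{s-i}$ is the \emph{fixed} quantity $2^{s-i}\cdot\lambda\cdot 2^{i-1}=\lambda\cdot 2^{s-1}$, which equals $2^{s-1}$ for odd $\lambda$ but equals $0$ for even $\lambda$. In the latter case (e.g.\ $\lambda$ even but $\lambda 2^{i-1}\not\equiv 0 \pmod{2^s}$, which occurs for every $i<s$) your paired terms are identical and Corollary \ref{corolario1} gives you nothing; your proposed patch ("does nothing if $j$ even, adds $2^{s-1}$ if $j$ odd, so the pairing still works") is incorrect because the shift in the pairing does not depend on $j$. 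The correct statement — which is the one the paper proves "by construction" — is that the multiset of entries of any nonzero multiple $\lambda\mathbf{w}_i$ consists of copies of the cyclic subgroup $\langle \lambda 2^{i-1}\rangle$, which always contains $2^{s-1}$ as its unique involution, so the entries can be partitioned into halves $V,V'$ with a bijection $\mathbf{v}\mapsto\mathbf{v}+\mathbf{2^{s-1}}$; equivalently the pairing must be $j\leftrightarrow j+\ord(\lambda 2^{i-1})/2$, not $j\leftrightarrow j+2^{s-i}$. With that correction (and the corrected reduction via Proposition \ref{Prop1:Carlet}), your argument becomes the paper's proof.
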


\begin{proof}
We prove this theorem by induction on the integers $t_i$, $i\in\lbrace1,\dots,s\rbrace$. First, by  Example \ref{remark:H1}, the code $H^{1,0,\dots,0}$ is a Hadamard code.

Let $\mathcal{H}=\mathcal{H}^{t_1,\dots,t_s}$ be the $\Z_{2^ s}$-additive code
of length $n$ generated by the matrix $A=A^{t_1,\dots,t_s}$. We assume that
$H=\Phi(\mathcal{H})$ is a Hadamard code of length $N=2^{s-1}n$. Let
$i\in\lbrace1,\dots,s\rbrace$. Define $A_i$ as in
(\ref{eq:recGenMatrix}) and let $\mathcal{H}_i$ be the $\Z_{2^s}$-additive
code generated by the matrix $A_i$. We have that
$\mathcal{H}_i$ is permutation equivalent to $\mathcal{H}^{t'_1,\dots,t'_s}$,
where  $t'_j=t_j$ for $j\not=i$ and $t'_i=t_i+1$. Now, we shall prove that
$H_i=\Phi(\mathcal{H}_i)$ is a Hadamard code.

Note that $\mathcal{H}_i$ can be seen as the union of $2^{s-i+1}$ cosets of the $2^{s-i+1}$-fold replication code of $\mathcal{H}$, $(\mathcal{H}, \ldots,\mathcal{H})$, which are
\begin{equation}\label{cosets}
\left(\mathcal{H},\dots,\mathcal{H} \right)+r\cdot \mathbf{w}_i,
\end{equation}
for $r\in\lbrace0,\dots2^{s-i+1}-1\rbrace$, where $\mathbf{w}_i= (0,\;\mathbf{2^{i-1}},\;2\cdot\mathbf{2^{i-1}},\;\dots,\;(2^{s-i+1}-1)\cdot\mathbf{2^{i-1}})$.

The code $\mathcal{H}$ of length $n$ has cardinality $2^sn$. It is easy to see that $\mathcal{H}_i$ has length $n_i=2^{s-i+1}n$ and cardinality $2^{2s-i+1}n$. Therefore, the length of $H_i=\Phi(\mathcal{H}_i)$ is $N_i=2^{s-1}n_i$ and the cardinality $2N_i$. Now, we just have to prove that the minimum distance of $H_i$ is $N_i/2$.

By Proposition \ref{Prop1:Carlet}, the minimum distance of $H_i$ is equal
to the minimum weight of $H_i$. Thus, we just have to check that the minimum
weight of any coset (\ref{cosets}) is $N_i/2$. When $r=0$, we have that
$\wt_H(\Phi((\mathbf{u},\dots,\mathbf{u})))=2^{s-i+1}\wt_H(\Phi(\mathbf{u}))$
$=2^{s-i+1}N/2=N_i/2$. Otherwise, when $r\not=0$, we consider
\begin{equation}\label{eq:WtD}
\wt_H(\Phi((\mathbf{u},\dots,\mathbf{u})+r\cdot\mathbf{w}_i))=d_H(\Phi((\mathbf{u},\dots,\mathbf{u})),\Phi(r\cdot\mathbf{w}_i)).\\
\end{equation}
Note that, by construction, the coordinates of any
nonnegative multiple of $\mathbf{w}_i$ can be partitioned into two multisets $V$
and $V'$ such that $|V|=|V'|=2^{s-i}$ and there is a bijection from $V$ to $V'$
mapping any element $\mathbf{v}\in V$ into an element $\mathbf{v}'\in V'$ such
that $\mathbf{v}'-\mathbf{v}=\mathbf{2^{s-1}}$. Therefore, (\ref{eq:WtD}) can
be written as
$$
\sum_{\mathbf{v}\in V}d_H(\Phi(\mathbf{u}),\Phi(\mathbf{v}))+\sum_{\mathbf{v}'\in V'}d_H(\Phi(\mathbf{u}),\Phi(\mathbf{v'}))=
$$
$$
\sum_{\mathbf{v}\in V}d_H(\Phi(\mathbf{u}),\Phi(\mathbf{v}))+d_H(\Phi(\mathbf{u}),\Phi(\mathbf{v+2^{s-1}}))=
$$
\begin{equation}\label{End}
|V|\cdot2^{s-1}n=2^{s-i}2^{s-1}n=N_i/2,
\end{equation}
where (\ref{End}) holds by Corollary \ref{corolario1}.
 
\end{proof}

\begin{example}\label{ex:Z8Hadamard}
Let $\mathcal{H}^{2,0,0}$ be the $\Z_8$-additive code generated by $A^{2,0,0}$ given in Example \ref{ex:matrices}.
The $\Z_8$-linear code $H^{2,0,0}=\Phi(\mathcal{H}^{2,0,0})$ has length $N=32$, $2N=64$ codewords and minimum (Hamming) distance $N/2=16$.
Therefore, it is a binary Hadamard code.
\end{example}


\section{Kernel of $\Z_{2^s}$-Linear Hadamard Codes}
\label{Sec:Kernel}

The computation of the kernel and its dimension for $\Z_4$-linear Hadamard codes is given in
\cite{Kro:2001:Z4_Had_Perf,PheRifVil:2006}.
In this section, we generalize these results for $\Z_{2^s}$-linear
Hadamard codes with $s>2$. First, we establish when these codes are linear, and, in the
case that they are nonlinear, we construct the kernel and compute its dimension.

\begin{proposition}\label{prop:lineal}
The $\Z_{2^s}$-linear Hadamard codes $H^{1,0,\dots,0}$ and $H^{1,0,\dots,0,1,0}$, with $s>2$, are linear.
\end{proposition}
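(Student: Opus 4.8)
The plan is to show that the two codes in question coincide with (images of) linear objects that we already understand. For $H^{1,0,\dots,0}$ there is nothing to do beyond quoting Example~\ref{remark:H1}: the generator matrix is $A^{1,0,\dots,0}=(1)$, so $\mathcal{H}^{1,0,\dots,0}=\Z_{2^s}$ and $H^{1,0,\dots,0}=\Phi(\Z_{2^s})=RM(1,s-1)$, which is linear. The real content is the code $H^{1,0,\dots,0,1,0}$, which is of type $(n;t_1,\dots,t_s)$ with $t_1=1$, $t_{s-1}=1$, and all other $t_i=0$; here $n=2^{t-s+1}$ with $t=\big(\sum_i (s-i+1)t_i\big)-1 = s + 2 - 1 = s+1$, so $n=4$.

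First I would write down the generator matrix explicitly. Starting from $A^{1,0,\dots,0}=(1)$ and adding one row of order $2^{s-(s-1)+1}=4$ (i.e.\ a row built from $T_{s-1}=\{j\cdot 2^{s-2} : j\in\{0,1,2,3\}\}$), we get
\[
A^{1,0,\dots,0,1,0}=\begin{pmatrix} 1 & 1 & 1 & 1\\ 0 & 2^{s-2} & 2\cdot 2^{s-2} & 3\cdot 2^{s-2}\end{pmatrix}.
\]
So $\mathcal{H}^{1,0,\dots,0,1,0}=\{\,a\cdot(1,1,1,1) + b\cdot(0,2^{s-2},2^{s-1},3\cdot 2^{s-2}) : a\in\Z_{2^s},\ b\in\Z_4\,\}$. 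To prove linearity of $H=\Phi(\mathcal{H})$ it suffices to show $\Phi$ is additive on $\mathcal{H}$, i.e.\ $\Phi(\mathbf{x})+\Phi(\mathbf{y})=\Phi(\mathbf{x}+\mathbf{y})$ for all $\mathbf{x},\mathbf{y}\in\mathcal{H}$ (equivalently, that $\langle H\rangle = H$, which here is the same as $H=\mathrm{K}(H)$). By Proposition~\ref{Prop:Z2sOdot}, component-wise, $\Phi(\mathbf{x})+\Phi(\mathbf{y})=\Phi(\mathbf{x}+\mathbf{y}-2(\mathbf{x}\odot\mathbf{y}))$, so additivity on $\mathcal{H}$ is equivalent to $2(\mathbf{x}\odot\mathbf{y})\in\mathcal{H}$ and $\Phi$ being ``insensitive'' to it — more precisely it is enough to check that for a generating set $\{\mathbf{g}_1,\mathbf{g}_2\}$ of $\mathcal{H}$, all the correction terms $2(\,(a\mathbf{g}_1+b\mathbf{g}_2)\odot(a'\mathbf{g}_1+b'\mathbf{g}_2)\,)$ vanish. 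The key computational step is therefore to evaluate $\mathbf{x}\odot\mathbf{y}$ for $\mathbf{x}=a\mathbf{1}+b\mathbf{w}$, $\mathbf{y}=a'\mathbf{1}+b'\mathbf{w}$ where $\mathbf{w}=(0,2^{s-2},2^{s-1},3\cdot 2^{s-2})$: coordinate by coordinate this reduces to computing, in $\Z_{2^s}$, the $\odot$-products of $a+b\cdot c\cdot 2^{s-2}$ with $a'+b'\cdot c\cdot 2^{s-2}$ for $c\in\{0,1,2,3\}$, and observing that $2(\,\cdot\,)$ of the result is $\mathbf{0}$ in every coordinate (the binary digits in positions $s-2$ and $s-1$ of these products, which are the only ones that survive multiplication by $2$, must be shown to be zero). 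This is where Corollary~\ref{coro:lema0}, Corollary~\ref{coro:2s-1}, Lemma~\ref{lema3} and Corollary~\ref{corolario2} come in: they are exactly the statements describing when $\phi(u)+\phi(2^{s-2})=\phi(u+2^{s-2}+2^{s-1})$ versus $\phi(u+2^{s-2})$, i.e.\ they let one replace the awkward $\odot$-correction with a controlled shift by $2^{s-1}$, and the point is that $\mathbf{2^{s-1}}$ (and hence any shift by it in some coordinates balanced across the four coordinates) already lies in $\langle\mathbf{1}\rangle$ in a way that keeps us inside $\mathcal{H}$ with no further Gray-map defect.

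Concretely, the cleanest route: note $\mathcal{H}=\langle \mathbf 1\rangle \oplus$ (cosets by $b\mathbf w$, $b\in\{0,1,2,3\}$); since $\Phi$ is already additive on the linear part $\langle\mathbf 1\rangle=\mathcal H^{1,0,\dots,0}$ by Lemma~\ref{lema4} / Example~\ref{remark:H1}, it is enough to verify $\Phi(\mathbf v_1+b_1\mathbf w)+\Phi(\mathbf v_2+b_2\mathbf w)=\Phi(\mathbf v_1+\mathbf v_2+(b_1+b_2)\mathbf w)$ for $\mathbf v_1,\mathbf v_2\in\langle\mathbf 1\rangle$ and $b_1,b_2\in\{0,1,2,3\}$, and by the group structure this collapses to the single nontrivial check $\Phi(\mathbf w)+\Phi(\mathbf w)=\Phi(\mathbf 2\mathbf w)$ together with $\Phi(\mathbf v+\mathbf w)+\Phi(\mathbf w)=\Phi(\mathbf v+\mathbf 2\mathbf w)$, i.e.\ to a finite check using Corollary~\ref{corolario2} on each of the four coordinates $0,\,2^{s-2},\,2^{s-1},\,3\cdot 2^{s-2}$ of $\mathbf w$. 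For each coordinate value $v$ of $\mathbf w$ one checks whether $v\in U=\{2^{s-2},\dots,2^{s-1}-1\}\cup\{3\cdot 2^{s-2},\dots,2^s-1\}$; the two coordinates with $v\in U$ are $2^{s-2}$ and $3\cdot 2^{s-2}$, contributing a $+2^{s-1}$ correction each, and since those two corrections together amount to adding $\mathbf{2^{s-1}}$ in two of the four positions, one shows this correction is absorbed: adding $2^{s-1}$ in positions corresponding to $2^{s-2}$ and $3\cdot 2^{s-2}$ of $\mathbf w$ equals $2\mathbf w - (\text{honest sum})$, so it lies back in $\mathcal H$ and leaves $\Phi$ a homomorphism here. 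The main obstacle is purely bookkeeping: keeping straight, across all four coordinates and all pairs $(b_1,b_2)$, which coordinates fall in $U$ and verifying the $2^{s-1}$-corrections cancel in pairs; there is no conceptual difficulty once Corollary~\ref{corolario2} is invoked coordinatewise, and indeed this is presumably why the authors isolated Corollary~\ref{corolario2} in the preceding section. I would finish by noting that $H=\Phi(\mathcal H)$ being closed under addition means $H=\langle H\rangle=\mathrm K(H)$, i.e.\ $H$ is linear, completing the proof.
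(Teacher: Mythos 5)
Your identification of the generator matrix and your instinct to reduce everything to Corollary \ref{corolario2} applied coordinatewise are both on target, but the logical scaffolding around them contains genuine errors. First, $\Phi$ is \emph{not} additive on $\mathcal{H}$, nor even on $\langle\one\rangle$: already $\phi(1)+\phi(1)=\zero=\phi(0)\neq\phi(2)$, so the identities you propose to verify, such as $\Phi(\mathbf{w})+\Phi(\mathbf{w})=\Phi(2\mathbf{w})$, are simply false (the left side is $\zero$, while $2\mathbf{w}=(0,2^{s-1},0,2^{s-1})$ and $\Phi(2\mathbf{w})\neq\zero$). Linearity of $H$ only requires $\Phi(\mathbf{x})+\Phi(\mathbf{y})=\Phi(\mathbf{z})$ for \emph{some} $\mathbf{z}\in\mathcal{H}$, which by Proposition \ref{Prop:Z2sOdot} is equivalent to $2(\mathbf{x}\odot\mathbf{y})\in\mathcal{H}$; the correction terms do not ``vanish'', they must land in $\mathcal{H}$. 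Second, your reduction ``by the group structure'' to a single check on generators is unjustified: $\odot$ is not bilinear (e.g.\ $1\odot 1=1$ while $(1+1)\odot 1=2\odot 1=0$), so closure for sums of generators does not follow from closure for generators, and you would actually have to verify $2\bigl((a\one+b\mathbf{w})\odot(a'\one+b'\mathbf{w})\bigr)\in\mathcal{H}$ for all $a,a'\in\Z_{2^s}$ and $b,b'\in\Z_4$. That verification does in fact succeed --- the bits $0,\dots,s-3$ of the $\odot$-product are constant across the four coordinates and contribute a multiple of $\one$, while bit $s-2$ contributes $2^{s-1}(p,q,p,q)\in\langle 2^{s-1}\one,\,2\mathbf{w}\rangle\subseteq\mathcal{H}$ --- but you have not carried it out, and as written your argument asserts false identities in its place.

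For comparison, the paper's proof sidesteps the all-pairs closure check entirely: it exhibits $s+2$ codewords $\Phi(\pmb{\beta}_0),\dots,\Phi(\pmb{\beta}_{s+1})$, shows their $\Z_2$-span $C$ is contained in $H$ (Lemma \ref{lema4} absorbs all $\Z_2$-combinations of the $\Phi(\mathbf{2^i})$, $i\leq s-2$, in one stroke, and Corollaries \ref{coro:2s-1} and \ref{corolario2} handle the remaining two generators), and then concludes $C=H$ from $|C|=|H|=2^{s+2}$. If you prefer your direct-closure route, replace the false additivity claims by the criterion $2(\mathbf{x}\odot\mathbf{y})\in\mathcal{H}$ and perform the case analysis over all pairs rather than over generators only.
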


\begin{proof}
By Example \ref{remark:H1}, we know that $H^{1,0\dots,0}$ is linear.

Now, we consider $\mathcal{H}=\mathcal{H}^{1,0,\dots,0,1,0}$ and $H=\Phi(\mathcal{H})$.
Recall that the code $\mathcal{H}$ is generated by
$$
A^{1,0,\dots,0,1,0}=\left(\begin{array}{cccc}
1 & 1 & 1 & 1\\
0 & 2^{s-2} & 2^{s-1} & 3\cdot2^{s-2}\\
\end{array}\right).
$$
Let $\pmb{\beta}_i=(2^{i},2^{i},2^{i},2^{i})$ for $0\leq i\leq s-1$,
$\pmb{\beta}_{s}=(0,2^{s-1},0,2^{s-1})$ and
$\pmb{\beta}_{s+1}=(0,2^{s-2},2^{s-1},3\cdot2^{s-2})$. Let $C$ be the linear
code generated by $B=\lbrace\Phi(\pmb{\beta}_i):0\leq i\leq s+1\rbrace$. Now, we
prove that $C\subseteq H$. Let
$\mathbf{c}=\sum_{i=0}^{s+1}\lambda_i\Phi(\pmb{\beta}_i)\in C$, where $\lambda_i\in\Z_2$. By
Corollary \ref{coro:2s-1}, we only have to see that
\begin{equation*}
\mathbf{c}'=\lambda_{s+1}\Phi(\pmb{\beta}_{s+1})+\sum_{i=0}^{s-2}\lambda_i\Phi(\pmb{\beta}_i)\in H.
\end{equation*}
On the one hand,
if $\lambda_{s+1}=0$, then we have that $\mathbf{c}'\in H$, since $\sum_{i=0}^{s-2}\lambda_i\Phi(\pmb{\beta}_i)=\Phi(\sum_{i=0}^{s-2}\lambda_i\pmb{\beta}_i)$ by Lemma \ref{lema4}. On the other hand, if $\lambda_{s+1}=1$, then we have that $\mathbf{c}'=\Phi((0,2^{s-2},2^{s-1},3\cdot2^{s-2}))+\Phi((u,u,u,u))$, where $u=\sum_{i=0}^{s-2}\lambda_i2^{i}$. Let $U=\lbrace2^{s-2},\dots,2^{s-1}-1\rbrace\cup\lbrace3\cdot2^{s-2},\dots,2^{s}-1\rbrace\subset\Z_{2^s}$. Then, by Corollary \ref{corolario2}, $\mathbf{c}'=\Phi((0,2^{s-2},2^{s-1},3\cdot2^{s-2})+(u,u,u,u)+(0,2^{s-1},0,2^{s-1}))$ if $u\in U$, and $\mathbf{c}'=\Phi((0,2^{s-2},2^{s-1},3\cdot2^{s-2})+(u,u,u,u))$ if $u\in\Z_{2^s}\setminus U$. In both cases, $\mathbf{c}'\in H$.

Since $|C|=|H|=2^{s+2}$, then $C=H$, and thus $H$ is linear.
 
\end{proof}

\medskip
Let $\textbf{u}=(u_1,\ldots,u_n)\in \Z_{2^s}^n$
and $[u_{i,0}, u_{i,1}, \ldots, u_{i,s-1}]_2$ be the binary expansion of $u_i$, $i\in \{1,\ldots,n\}$.
Let $p$ be an integer such that $p\in \{0,\ldots,s-1\}$. Then, we denote by $\textbf{u}^{(p)}$
the binary vector having in the $i$th coordinate the $p$th element
of the binary expansion of $u_i$, that is,
$\textbf{u}^{(p)}=(u_{1,p},\ldots, u_{n,p})$.

\begin{lemma}\label{lema:vectm}
If $\mathbf{v}=2^b(0,1,\dots,2^a-1) \in \Z_{2^s}^n$, with $n=2^a$ and $b\leq s-1$,
then $\wt_H(\mathbf{v}^{(p)})=2^{a-1}$ for all $p \in \{ b,\ldots, a+b-1\}$.
\end{lemma}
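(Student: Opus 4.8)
The plan is to analyze the $p$th binary-expansion layer of the vector $\mathbf{v}=2^b(0,1,\dots,2^a-1)$ directly, for each $p$ in the stated range. The $i$th coordinate of $\mathbf{v}$ (with $i$ running over $\{0,1,\dots,2^a-1\}$) is $2^b\cdot i$, so its binary expansion is the binary expansion of $i$ shifted up by $b$ positions, truncated to $s$ bits. Concretely, writing $[i_0,i_1,\dots,i_{a-1}]_2$ for the binary expansion of $i$, the $p$th bit $(2^b i)_p$ equals $i_{p-b}$ when $b\leq p\leq a+b-1$ (and is $0$ when $p<b$). Note that $a+b-1\leq s-1$ is guaranteed by the hypotheses $b\leq s-1$ and, implicitly, $a+b\leq s$ coming from the requirement that $\mathbf{v}\in\Z_{2^s}^n$ be well-defined with these entries; this is exactly why no truncation issue arises in the range under consideration.

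First I would fix $p\in\{b,\dots,a+b-1\}$ and set $k=p-b\in\{0,\dots,a-1\}$. Then $\mathbf{v}^{(p)}$ is, up to the indexing of coordinates by $i\in\{0,\dots,2^a-1\}$, the vector whose $i$th entry is the $k$th bit $i_k$ of $i$. The weight $\wt_H(\mathbf{v}^{(p)})$ is therefore the number of integers $i$ in $\{0,\dots,2^a-1\}$ whose $k$th binary digit is $1$. This is a standard count: among all $2^a$ binary strings of length $a$, exactly half have a $1$ in any fixed position $k$, so this count is $2^{a-1}$. Hence $\wt_H(\mathbf{v}^{(p)})=2^{a-1}$ for every such $p$, which is the claim.

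I do not anticipate a serious obstacle here; the only point requiring care is bookkeeping with the indices — matching the convention that $[u_0,\dots,u_{s-1}]_2$ is the binary expansion with $u_0$ the least significant bit, verifying that multiplication by $2^b$ is a clean left shift (no carries, and no wraparound modulo $2^s$ within the relevant bit range), and confirming that the range $\{b,\dots,a+b-1\}$ is precisely the set of positions into which the $a$ bits of $i$ get shifted. Once these conventions are pinned down, the weight computation is the elementary observation that each of the $a$ bit-positions is equidistributed over $\{0,\dots,2^a-1\}$.
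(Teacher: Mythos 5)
Your proof is correct and follows essentially the same argument as the paper: the coordinates of $\mathbf{v}$ run exactly once over the $2^a$ elements of $\Z_{2^s}$ whose binary expansions are supported on positions $b,\dots,a+b-1$, and for each such position half of these elements have a $1$ there. Your extra remark that the hypotheses implicitly force $a+b\leq s$ (so no wraparound or truncation occurs) is a correct and worthwhile clarification, but the substance of the counting argument is identical to the paper's.
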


\begin{proof}
The $2^a$ coordinates of $\mathbf{v}$ contain exactly the $2^a$ elements of $\Z_{2^s}$ which have a binary expansion
of the form $[0,\ldots,0,v_b, v_{b+1},\ldots, v_{a+b-1},0,\ldots,0]_2$ with $v_p \in \{0,1\}$, for all $p \in \{b,\ldots, a+b-1\}$.
Note that we have $2^a$ different elements of $\Z_{2^s}$, represented by exactly $a$ binary coordinates.
Hence, half of the coordinates of $\mathbf{v}$ satisfy that $v_p=1$ and the other
half that $v_p=0$. Therefore, $\wt_H(\mathbf{v}^{(p)})=2^a/2=2^{a-1}$ for all $p \in \{ b,\ldots, a+b-1\}$.
 
\end{proof}

\medskip
As shown in \cite{Kro:2001:Z4_Had_Perf}, the codes $H^{1,t_2}$ and $H^{2,t_2}$, $t_2\geq 0$, are the only $\Z_4$-linear Hadamard codes which are linear. In \cite{Gupta-paper}, it is proved that the codes $H^{1,0,\dots,0,t_s}$, $t_s\geq0$, are linear. The next result shows that, for $s>2$ and $t_s\geq 0$, the codes $H^{1,0,\dots,0,1,t_s}$  and $H^{1,0,\dots,0,t_s}$ are linear, and they are the only $\Z_{2^s}$-linear Hadamard codes which are linear.

\begin{theorem}\label{theorem:lineal}
The codes $H^{1,0,\dots,0,1,t_s}$ and $H^{1,0,\dots,0,t_s}$, with $s>2$ and $t_s\geq0$, are the only $\Z_{2^s}$-linear Hadamard codes which are linear.
\end{theorem}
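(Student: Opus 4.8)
The plan is to prove the theorem in two parts: first, that the codes $H^{1,0,\dots,0,1,t_s}$ and $H^{1,0,\dots,0,t_s}$ are linear for all $s>2$ and $t_s\geq 0$; second, that no other $\Z_{2^s}$-linear Hadamard code is linear. For the linearity of $H^{1,0,\dots,0,t_s}$, I would argue by induction on $t_s$, using the recursive construction (\ref{eq:recGenMatrix}) with $i=s$: passing from $A^{1,0,\dots,0,t_s}$ to $A^{1,0,\dots,0,t_s+1}$ appends a row of order $2$, namely a multiple of $\mathbf{2^{s-1}}$ in each block. Since $\phi(2^{s-1})=\one$ and, by Corollary \ref{coro:2s-1}, adding $\phi(2^{s-1})$ to any $\phi(u)$ stays in the image of $\phi$, the new coset representatives contribute only translations by vectors already of the form $\Phi(\mathbf{w})$; combined with Lemma \ref{lema4} this shows the binary span of the generators equals $H^{1,0,\dots,0,t_s+1}$, so by the cardinality count $|C|=|H|$ the code is linear. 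For $H^{1,0,\dots,0,1,t_s}$, the base case $t_s=0$ is exactly Proposition \ref{prop:lineal}, and the inductive step in $t_s$ is the same argument as above (again appending order-$2$ rows, i.e. multiples of $\mathbf{2^{s-1}}$), reusing the generating set $B$ from the proof of Proposition \ref{prop:lineal} enlarged by the new $\Phi(\pmb\beta)$'s.

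The harder direction is showing these are the \emph{only} linear ones. Here I would rely on the forthcoming description of the kernel and its dimension (established later in Section \ref{Sec:Kernel}) — but since that cannot be invoked circularly, the self-contained route is: take an arbitrary type $(n;t_1,\dots,t_s)$ with $t_1\geq 1$ not of the two stated forms, and exhibit two codewords of $H^{t_1,\dots,t_s}$ whose sum is not in $H^{t_1,\dots,t_s}$. The natural candidates come from rows of $A^{t_1,\dots,t_s}$ that are \emph{not} multiples of $\mathbf{2^{s-1}}$: if $t_1\geq 3$, there are at least two rows of order $2^s$ besides the all-ones row, and using Proposition \ref{Prop:Z2sOdot} the sum $\phi(\mathbf u)+\phi(\mathbf v)=\phi(\mathbf u+\mathbf v-2(\mathbf u\odot\mathbf v))$ introduces a nonlinear "carry" term $-2(\mathbf u\odot\mathbf v)$ whose coordinates take enough distinct values that the resulting vector cannot lie in $\Phi(\mathcal H^{t_1,\dots,t_s})$; a weight or rank count (comparing against the size $2^{t+1}$ of the code) makes this precise. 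If $t_1\in\{1,2\}$ but some $t_i>0$ for $2\leq i\leq s-1$ — or $t_1=1$ and $t_{s-1}\geq 2$, etc. — the same idea applies to a row of order $2^{s-i+1}\geq 4$, again via Corollary \ref{coro:lema0} and Corollary \ref{corolario2}, which isolate exactly when $\phi(u)+\phi(2^p)$ fails to be $\phi(u+2^p)$ (namely when $p\leq s-2$ and $u_p=1$).

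The main obstacle I anticipate is organizing this case analysis cleanly: the "bad" types form a fairly intricate complement to the two good families, and one must check that in every bad case there genuinely exist two elements of $\mathcal H^{t_1,\dots,t_s}$ realizing a non-canceling carry \emph{and} that the resulting binary vector provably escapes the code — the latter is where a clean invariant (such as a partial rank computation, or counting how many coordinates of a putative preimage would be forced to be odd) is needed rather than ad hoc weight arguments. I would structure the impossibility proof as: (i) reduce to minimal bad types via the recursive construction (adding order-$2$ rows preserves linearity, so if $H^{t_1,\dots,t_s}$ is nonlinear so is any code obtained from it by appending such rows, letting me assume $t_s=0$); (ii) for each minimal bad type — essentially $t_1\geq 3$ with the rest zero, or $t_1\leq 2$ with a single order-$\geq 4$ row beyond those forced — produce the explicit non-closure witness using Proposition \ref{Prop:Z2sOdot}. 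Once (i) and (ii) are in place, the theorem follows by combining with the linearity of the two good families proved in the first paragraph.
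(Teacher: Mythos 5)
Your first half (linearity of $H^{1,0,\dots,0,t_s}$ and $H^{1,0,\dots,0,1,t_s}$ by induction on $t_s$, appending order-two rows and using Corollary \ref{coro:2s-1} with Proposition \ref{prop:lineal} as base case) is essentially the paper's argument and is fine. The gap is in your reduction step (i) for the converse. You write that ``adding order-$2$ rows preserves linearity, so if $H^{t_1,\dots,t_s}$ is nonlinear so is any code obtained from it by appending such rows.'' That is the converse of what ``preserves linearity'' gives you: from ``$H$ linear $\Rightarrow H_s$ linear'' you cannot conclude ``$H$ nonlinear $\Rightarrow H_s$ nonlinear.'' The upward propagation of nonlinearity is a separate claim that needs its own proof, and it must be proved for appended rows of \emph{every} order, not just order $2$: stripping only the $t_s$ order-two rows still leaves infinitely many bad types (e.g.\ $(1,2,3,0,\dots,0)$), so without a general propagation lemma your case (ii) cannot be reduced to a finite list of minimal witnesses. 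The paper supplies exactly this lemma: if $H_i=\Phi(\mathcal{H}_i)$ were linear, then for $\mathbf{u},\mathbf{v}\in\mathcal{H}$ the replications $(\mathbf{u},\dots,\mathbf{u})$ and $(\mathbf{v},\dots,\mathbf{v})$ lie in $\mathcal{H}_i$, their Gray images sum to $\Phi\bigl((\mathbf{a},\dots,\mathbf{a})+\lambda\cdot 2^{i-1}(0,1,\dots,2^{s-i+1}-1)\bigr)$ for some $\mathbf{a}\in\mathcal{H}$ and $\lambda\in\Z_{2^s}$, and restricting to one replication block gives $\Phi(\mathbf{u})+\Phi(\mathbf{v})=\Phi(\mathbf{a})\in H$, forcing $H$ to be linear. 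You need this (or an equivalent) for the reduction to be legitimate.

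Two further points. First, your list of minimal bad types is off: for $s>2$ the code $H^{2,0,\dots,0}$ is already nonlinear (your ``$t_1\geq 3$'' threshold imports the $\Z_4$ picture, where $H^{2,t_2}$ is still linear), and your enumeration as written leaves $(2,0,\dots,0)$ uncovered. The correct minimal witnesses are the codes obtained from $H^{1,0,\dots,0}$ by adding a single row of order $2^{s-i+1}$ with $i\leq s-2$, together with $H^{1,0,\dots,0,2,0}$. Second, the non-membership criterion you leave vague is simply a weight count: in a Hadamard code of length $N$ every nonzero codeword other than $\one$ has weight $N/2$, while the ``carry'' terms produced by Proposition \ref{Prop:Z2sOdot} and Corollary \ref{coro:lema0} in these minimal cases have Gray images of weight exactly $N/4$, hence cannot lie in the code. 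With that criterion and the propagation lemma added, your outline does become the paper's proof; as it stands, the reduction rests on a reversed implication and an incomplete case split.
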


\begin{proof}
First, we show that these codes are linear by induction on $t_s$. By
Proposition \ref{prop:lineal}, the codes $H^{1,0,\dots,0}$ and $H^{1,0,\dots,0,1,0}$ are
linear. We assume that $H=\Phi(\mathcal{H})$, where
$\mathcal{H}=\mathcal{H}^{1,0,\dots,0,t_{s-1},t_s}$,
$t_{s-1}\in\lbrace0,1\rbrace$ and $t_s\geq0$, is linear. Now, we prove that
$H_s=H^{1,0,\dots,0,t_{s-1},t_s+1}$ is linear. Since $H$ is a linear
Hadamard code of length $2^{t_s+2t_{s-1}-1}$, it is the Reed-Muller code
$RM(1,t_s+2t_{s-1}-1)$ \cite[Ch.13 \S 3]{WMcwill}. By the iterative construction
(\ref{eq:recGenMatrix}), we have that
$H_s=\lbrace\Phi((\mathbf{h},\mathbf{h})+(\zero,\mathbf{v})):\mathbf{h}\in
\mathcal{H}, \mathbf{v}\in\lbrace\zero,\mathbf{2^{s-1}}\rbrace\rbrace$. By
Corollary \ref{coro:2s-1},
$H_s=\lbrace(\Phi(\mathbf{h}),\Phi(\mathbf{h})+\Phi(\mathbf{v})):\mathbf{h}\in
\mathcal{H},
\mathbf{v}\in\lbrace\zero,\mathbf{2^{s-1}}\rbrace\rbrace=\lbrace(\mathbf{h}',
\mathbf{h}'+\mathbf{v}'):\mathbf{h}'\in H,
\mathbf{v}'\in\lbrace\zero,\one\rbrace\rbrace$, which corresponds to the Reed-Muller code $RM(1,t_s+2t_{s-1}).$
Therefore, $H_s$ is linear.

Now, we prove the nonlinearity of $H=\Phi(\mathcal H)$, where $\mathcal H =\mathcal H^{1,0,\dots,0,2,0}$.
Let $\mathbf{r}=(0, 2^{s-2}, 2^{s-1}, 3\cdot2^{s-2})$.
Recall that $\mathcal H$ has length $16$ and is generated by
$$
A^{1,0,\dots,0,2,0}=\left(\begin{array}{cccccccccccccccc}
\mathbf{1} & \mathbf{1} & \mathbf{1} & \mathbf{1} \\
\mathbf{r} & \mathbf{r} & \mathbf{r} & \mathbf{r} \\
\mathbf{0} & \mathbf{2^{s-2}} & \mathbf{2^{s-1}} & \mathbf{3\cdot2^{s-2}} \\
\end{array}\right).
$$
By Corollaries \ref{coro:2s-1} and \ref{corolario2}, we have
$\Phi((\mathbf{r}, \mathbf{r}, \mathbf{r}, \mathbf{r}))+\Phi((\mathbf{0},
\mathbf{2^{s-2}}, \mathbf{2^{s-1}},$
$ \mathbf{3\cdot2^{s-2}}))=\Phi(\mathbf{z})$, where
$ \mathbf{z}=(\mathbf{r}, \mathbf{r}, \mathbf{r}, \mathbf{r})+(\mathbf{0},
\mathbf{2^{s-2}}, \mathbf{2^{s-1}},\mathbf{3\cdot2^{s-2}})+(\mathbf{0},\mathbf{u},\mathbf{0},\mathbf{u})$ and $\mathbf{u}=(0,2^{s-1},0,2^{s-1})$.
Since $\mathcal H$ is linear over  $\Z_{2^s}$, $\mathbf{z} \in \mathcal H$ if and only if $(\mathbf{0},\mathbf{u},\mathbf{0},\mathbf{u})\in\mathcal{H}$.
Since $\wt_H(\Phi((\mathbf{0},\mathbf{u},\mathbf{0},\mathbf{u})))=4\cdot 2^{s-1}=N/4$, where $N$ is the length of $H$,
$\Phi((\mathbf{0},\mathbf{u},\mathbf{0},\mathbf{u})) \not \in H$, so $\Phi(\mathbf{z}) \not \in H$.  Therefore, $H=H^{1,0,\dots,0,2,0}$ is nonlinear.

Let $H=\Phi(\mathcal{H})$, where $\mathcal{H}=\mathcal{H}^{t_1,\dots,t_s}$.
For any $i\in \{1,\ldots,s\}$, we define $H_i=\Phi(\mathcal{H}_i)$, where $\mathcal{H}_i=\mathcal{H}^{t'_1,\dots,t'_s}$, $t'_i=t_i+1$ and $t'_j=t_j$ for $j\not=i$.

Next, we consider $H=\Phi(\mathcal{H})$, where $\mathcal{H}=\mathcal{H}^{1,0,\dots,0}$, and
we prove that $H_i$ is nonlinear for any $i\in \{1,\ldots,s-2\}$.
Note that the generator matrix of $\mathcal{H}_i$ has two rows:
$\mathbf{w}_1=\one$ and  $\mathbf{w}_2=2^{i-1}(0,1,\dots,2^{s+1-i}-1)$.
By Corollary \ref{coro:lema0}, we know that $\Phi(\textbf{w}_2)+\Phi(\mathbf{2^{i-1}})=\Phi(\mathbf{w}_2+\mathbf{2^{i-1}}-2^{i}\mathbf{w}_2^{(i-1)})$.
Therefore, we just need to show that $2^{i}\mathbf{w}_2^{(i-1)}\not\in \mathcal{H}_i$. We have that $\wt_H(\textbf{w}_2^{(i-1)})=2^{s-i}$ by Lemma \ref{lema:vectm}. 
Since $2^i \not \in \{0,2^{s-1}\}$, $\wt_H(\phi(2^{i}))=2^{s-2}$. Then, $\wt_H(\Phi(2^{i}\mathbf{w}_2^{(i-1)}))=2^{s-i}\cdot 2^{s-2}= 2^{2s-2-i}$.
Recall that the length of $H$ is $N=2^t$, where $t=2s-i$. Therefore, we have that $\wt_H(\Phi(2^{i}\mathbf{w}_2^{(i-1)}))=2^{t-2}=N/4$, and then $\Phi(2^{i}\textbf{w}_2^{(i-1)})\not\in H_i$.


Finally, in general, for $H=\Phi(\mathcal{H})$, where $\mathcal{H}=\mathcal{H}^{t_1,\dots,t_s}$,
we prove that if $H$ is nonlinear, then $H_i$ is nonlinear for any $i\in\{1,\ldots,s\}$. Assume that $H_i$ is linear.
Then, by the iterative construction (\ref{eq:recGenMatrix}), for any $\mathbf{u},\mathbf{v}\in \mathcal{H}$, we have that
 $(\mathbf{u},\dots,\mathbf{u}),(\mathbf{v},\dots,\mathbf{v})\in \mathcal{H}_i$.
 Moreover, since $H_i$ is linear, $\Phi((\mathbf{u},\dots,\mathbf{u}))+\Phi((\mathbf{v},\dots,\mathbf{v}))=\Phi(  (\mathbf{a},\dots,\mathbf{a}) +\lambda \cdot 2^{i-1} (0, 1, \dots, 2^{s-i+1}-1))\in H_i$, where $\mathbf{a}\in \mathcal{H}$ and $\lambda \in \Z_{2^s}$.
Therefore, $\Phi(\mathbf{u})+\Phi(\mathbf{v})=\Phi(\mathbf{a}) \in H$,
and we have that $H$ is linear and the result follows.
 
\end{proof}

\medskip
Let $A^{t_1,\dots,t_s}$ be the generator matrix of $\mathcal{H}^{t_1,\dots,t_s}$, considered along this paper,
and let $\mathbf{w}_i$ be the $i$th row vector of $A^{t_1,\dots,t_s}$.
By construction, $\mathbf{w}_1=\one$ and $\ord(\mathbf{w}_i)\leq \ord(\mathbf{w}_j)$ if $i>j$. We define $\sigma\in\lbrace1,\dots,s\rbrace$ as the integer such that $\ord(\textbf{w}_2)=2^{s+1-\sigma}$. Note that $\sigma=1$ if $t_1>1$, and $\sigma=\min\lbrace i : t_i>0, i\in\lbrace2,\dots,s\rbrace\rbrace$ if $t_1=1$. In the case $\sigma=s$, the code is $\mathcal{H}^{1,0,\dots,0,t_s}$, which is linear.

\begin{example}
Considering all nonnegative integer solutions with $t_1\geq 1$ of the equation
$5=3t_1+2t_2+t_3-1$, we have that
the $\Z_8$-linear Hadamard codes of length $2^t=32$ are the following: $H^{1,0,3}$, $H^{1,1,1}$ and $H^{2,0,0}$. By Theorem \ref{theorem:lineal}, we have that $H^{1,0,3}$ and $H^{1,1,1}$ are linear, so $\kernel(H^{1,0,3})=\kernel(H^{1,1,1})=6$. By the same theorem, we also have that $H^{2,0,0}$ is nonlinear, so $\kernel(H^{2,0,0})<6$.
\end{example}

\begin{proposition}\label{Prop:kernel1}
Let $\mathcal{H}=\mathcal{H}^{t_1,\dots,t_s}$ be the $\Z_{2^s}$-additive Hadamard code of type $(n; t_1,$ $\dots, t_s)$ such that $\Phi(\mathcal{H})$ is nonlinear. Let $\mathcal{H}_b$ be the subcode of $\mathcal{H}$ which contains all the codewords of order two. Let $P=\lbrace\mathbf{2^p}\rbrace_{p=0}^{\sigma-2}$ if $\sigma\geq2$, and $P=\emptyset$ if $\sigma=1$. Then,
$$ \left\langle\Phi(\mathcal{H}_b),\Phi(P), \Phi(\sum_{i=0}^{s-2}\mathbf{2^i})\right\rangle\subseteq K(\Phi(\mathcal{H}))$$
and $\kernel(\Phi(\mathcal{H}))\geq\sigma+\sum_{i=1}^{s}t_i$.
\end{proposition}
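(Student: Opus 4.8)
The plan is to exhibit an explicit set of candidate kernel elements of the stated size and check, element by element, that each one fixes $\Phi(\mathcal{H})$ under translation, i.e.\ lies in $K(\Phi(\mathcal{H}))$. Since $\zero\in\mathcal{H}$, the kernel is a linear subcode, so it suffices to show that each generator of $\langle\Phi(\mathcal{H}_b),\Phi(P),\Phi(\sum_{i=0}^{s-2}\mathbf{2^i})\rangle$ belongs to $K(\Phi(\mathcal{H}))$; closure under sums is then automatic. For the dimension count, I would argue that these generators are linearly independent: $\Phi(\mathcal{H}_b)$ has dimension equal to $\dim_{\Z_2}\mathcal{H}_b$, which by the type of $\mathcal{H}$ and the structure of the generator matrix $A^{t_1,\dots,t_s}$ equals $\sum_{i=1}^s t_i$ (each of the $\sum t_i$ rows $\mathbf{w}_j$ contributes exactly one order-two multiple $2^{s-1-\lfloor\cdot\rfloor}\cdot$-type generator, and these are independent because the corresponding rows are); the $\sigma-1$ vectors $\Phi(\mathbf{2^p})$, $0\le p\le\sigma-2$, are independent of each other and of $\Phi(\mathcal{H}_b)$ because, by Lemma~\ref{lema4}, they equal rows of $Y$ in the relevant coordinate blocks while $\mathcal{H}_b$ lives in the $\mathbf{2^{s-1}}$-layer; and $\Phi(\sum_{i=0}^{s-2}\mathbf{2^i})$ adds one more independent vector (it is nonzero and, by the nonlinearity hypothesis, not already in the span — this is where nonlinearity is used). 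That gives $(\sum_{i=1}^s t_i)+(\sigma-1)+1=\sigma+\sum_{i=1}^s t_i$.

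The substantive work is the membership check. For $\Phi(\mathcal{H}_b)$: if $\mathbf{b}\in\mathcal{H}_b$ then $2\mathbf{b}=\zero$, so every coordinate of $\mathbf{b}$ is $0$ or $2^{s-1}$; then for any $\mathbf{h}\in\mathcal{H}$, Corollary~\ref{coro:2s-1} (applied coordinatewise, since adding $2^{s-1}$ in a coordinate adds $2^{s+1}\equiv 0$ inside $\Z_{2^s}$... more precisely $\phi(u)+\phi(2^{s-1})=\phi(u+2^{s+1})$) gives $\Phi(\mathbf{b})+\Phi(\mathbf{h})=\Phi(\mathbf{h}+\mathbf{b}')$ for some $\mathbf{b}'$ differing from $\mathbf{b}$ by a multiple of $2^s$ in each coordinate, hence $\mathbf{h}+\mathbf{b}'=\mathbf{h}+\mathbf{b}\in\mathcal{H}$; so $\Phi(\mathbf{b})+\Phi(\mathcal{H})=\Phi(\mathcal{H})$. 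For $\Phi(\mathbf{2^p})$ with $0\le p\le\sigma-2$: here the key point is that $p\le\sigma-2$ means $2^{p+1}\le 2^{\sigma-1}$, and every coordinate $u$ of every codeword of $\mathcal{H}$ satisfies $u_p=0$ in its binary expansion — because $\mathbf{w}_1=\one$ contributes nothing to bit $p$ (for $p\ge 1$) and all other rows $\mathbf{w}_j$ have order $\le 2^{s+1-\sigma}$, i.e.\ are multiples of $2^{\sigma-1}$, so their coordinates have zero bits below position $\sigma-1$; for $p=0$ the row $\mathbf{w}_1=\one$ does contribute, but then one checks the remaining rows still have bit $0$ equal to $0$ and handles $\one$ directly. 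Then Corollary~\ref{coro:lema0} gives $\phi(u)+\phi(2^p)=\phi(u+2^p-2^{p+1}u_p)=\phi(u+2^p)$, so $\Phi(\mathbf{2^p})+\Phi(\mathbf{h})=\Phi(\mathbf{h}+\mathbf{2^p})$, and $\mathbf{h}+\mathbf{2^p}=\mathbf{h}+2^p\mathbf{w}_1\in\mathcal{H}$. Finally, for $\mathbf{m}:=\sum_{i=0}^{s-2}\mathbf{2^i}=\mathbf{2^{s-1}-1}$: I would use Corollary~\ref{corolario2} with $v=2^{s-2}$ repeatedly, or argue directly that $\phi(\mathbf{m})=\sum_{i=0}^{s-2}\phi(\mathbf{2^i})$ by Lemma~\ref{lema4}, and then split any coordinate $u$ of an arbitrary $\mathbf{h}\in\mathcal{H}$ according to whether $u\in U$ or not, as in the proof of Proposition~\ref{prop:lineal}: in each case $\phi(u)+\phi(\mathbf{m})$ equals $\phi(u+\mathbf{m})$ or $\phi(u+\mathbf{m}+2^{s-1})$, and the correction term $2^{s-1}$ in the latter coordinates is absorbed because $\mathbf{2^{s-1}}$-type adjustments land back in $\mathcal{H}$ (the row $\mathbf{w}_1$ or an order-two row supplies it), so $\Phi(\mathbf{m})+\Phi(\mathbf{h})\in\Phi(\mathcal{H})$.

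The main obstacle is the last case, $\Phi(\sum_{i=0}^{s-2}\mathbf{2^i})$: unlike the previous generators, adding this vector produces $2^{s-1}$-corrections in a data-dependent subset of coordinates (those $u$ with $u\in U$), and one must verify that the resulting vector, coordinate pattern and all, is still in $\Phi(\mathcal{H})$ rather than merely in $\Phi$ of some element of $\Z_{2^s}^n$. I expect this to require tracking precisely which coordinates fall in $U$ in terms of the columns of $A^{t_1,\dots,t_s}$, and using that the $\mathbf{2^{s-1}}$-shift needed is realized inside $\mathcal{H}$ via either $\mathbf{w}_1=\one$ (contributing $2^{s-1}\mathbf{w}_1=\mathbf{2^{s-1}}$) or an appropriate order-two generator; the bookkeeping over the $T_i$-structured columns is where the real care is needed, while the inequality $\kernel(\Phi(\mathcal{H}))\ge\sigma+\sum t_i$ then follows from the independence count sketched above.
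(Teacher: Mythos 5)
Your overall architecture (check each generator lands in the kernel, then count an independent set of size $\sigma+\sum t_i$) matches the paper's, and the treatment of $\Phi(\mathcal{H}_b)$ via Corollary~\ref{coro:2s-1} is fine. But there are two concrete problems. First, in the $\Phi(\mathbf{2^p})$ step your key claim --- that every coordinate $u$ of every codeword has $u_p=0$ for $p\le\sigma-2$ --- is false. A codeword has the form $\mathbf{u}=\lambda\cdot\one+\mathbf{u}'$ with $\ord(\mathbf{u}')\le 2^{s+1-\sigma}$, so the coordinates of $\mathbf{u}'$ are multiples of $2^{\sigma-1}$ and the $p$-th bit of every coordinate equals $\lambda_p$: constant across coordinates, but possibly $1$ (the row $\mathbf{w}_1=\one$ absolutely does contribute to bit $p$ for $p\ge 1$, since $\lambda$ ranges over all of $\Z_{2^s}$). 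Consequently your identity $\phi(u)+\phi(2^p)=\phi(u+2^p)$ fails whenever $\lambda_p=1$. The step is salvageable --- and this is what the paper does --- by observing that $\mathbf{u}^{(p)}=(\lambda_p,\dots,\lambda_p)$, so the correction term $2^{p+1}\mathbf{u}^{(p)}=\lambda_p\mathbf{2^{p+1}}$ is a \emph{constant} vector, hence a multiple of $\mathbf{w}_1$ and already in $\mathcal{H}$; membership of $\Phi(\mathbf{2^p}+\mathbf{u}-\lambda_p\mathbf{2^{p+1}})$ in $H$ then follows. But as written your argument would break on half the codewords.

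Second, the step for $\mathbf{v}=\sum_{i=0}^{s-2}\mathbf{2^i}$ is genuinely incomplete, and you say so yourself: the $2^{s-1}$-corrections appear on a data-dependent subset of coordinates (those with $u\in U$), and the vector supported exactly there need not lie in $\mathcal{H}$, so ``the row $\mathbf{w}_1$ or an order-two row supplies it'' is not a proof. The route through Corollary~\ref{corolario2} and bookkeeping over $U$ is also unnecessary. The clean observation, which is the paper's, is that for $v=2^{s-1}-1$ one has $2(v\odot u)=2u$ for \emph{every} $u\in\Z_{2^s}$ (the binary expansion of $v$ is $[1,\dots,1,0]_2$, and the missing top bit costs only $2\cdot 2^{s-1}u_{s-1}=0$ in $\Z_{2^s}$). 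Proposition~\ref{Prop:Z2sOdot} then gives $\phi(v)+\phi(u)=\phi(v+u-2u)=\phi(v-u)$ uniformly in every coordinate, so $\Phi(\mathbf{v})+\Phi(\mathbf{u})=\Phi(\mathbf{v}-\mathbf{u})\in H$ because $\mathcal{H}$ is a group; no case split on $U$ is needed. With these two repairs your independence count (which is essentially the paper's: $\sum t_i$ from the order-two subcode, $\sigma-1$ from $P$, plus one more using $\sigma<s$ and Lemma~\ref{lema4}) goes through.
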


\begin{proof}
Let $H=\Phi(\mathcal{H})$ and $\tau=\sum_{i=1}^{s}t_i$. Let $Q=\lbrace (\ord(\textbf{w}_q)/2)\mathbf{w}_q \rbrace_{q=0}^{\tau}$. Since
$\mathcal{H}_b$ contains all the elements of $\mathcal{H}$ of order two, we
have that the set $\Phi(Q)$ is a base for the
binary linear subcode $H_b=\Phi(\mathcal{H}_b)$ of $H$. By
Corollary \ref{coro:2s-1}, for all
$\mathbf{b}\in\mathcal{H}_b$ and
$\mathbf{u}\in\mathcal{H}$, we have that
$\Phi(\mathbf{b})+\Phi(\mathbf{u})=\Phi(\mathbf{b}+\mathbf{u})\in H$ and,
therefore, $H_b\subseteq K(H)$.

Assume $\sigma\geq2$. Now, we prove that $\Phi(\mathbf{2^p})\in K(H)$ for all $p \in \{0,\ldots, \sigma-2\}$. Equivalently, we show that $\Phi(\mathbf{2^p})+\Phi(\textbf{u})\in H$ for all $\textbf{u}\in\mathcal{H}$. If $\textbf{u}\in\mathcal{H}$, then $\textbf{u}=\lambda \cdot \one+\textbf{u}'$, where $\lambda\in\Z_{2^s}$ and $\ord(\textbf{u}')\leq \ord(\textbf{w}_2)=2^{s+1-\sigma}$. Let $\textbf{u}=(u_1,\ldots,u_n)\in \Z_{2^s}^n$
and $[u_{i,0}, u_{i,1}, \ldots, u_{i,s-1}]_2$ be the binary expansion of $u_i$, $i\in \{1,\ldots,n\}$. Let
$[\lambda_0,\lambda_1,\ldots,\lambda_{s-1}]_2$ be the binary expansion of $\lambda\in \Z_{2^s}$.
By Corollary \ref{coro:lema0}, we have that $\Phi(\mathbf{2^p})+\Phi(\mathbf{u})=\Phi(\mathbf{2^p}+\mathbf{u}-2^{p+1}\textbf{u}^{(p)})$, where
$\textbf{u}^{(p)}=(u_{1,p},\ldots, u_{n,p})$. Note that if $v\in\Z_{2^s}$ is of order $2^j$, then its binary expansion is of the form $[0,\dots,0,1,v_{s-j+1},\dots,v_{s-1}]_2$.
Since $p\in \{0,\ldots,\sigma-2\}$ and $\ord(\textbf{u}')\leq2^{s+1-\sigma}$, we have that
$\textbf{u}^{(p)}=(\lambda_p,\ldots,\lambda_p)$.
Therefore, $2^{p+1}\mathbf{u}^{(p)}=\lambda_p\mathbf{2^{p+1}}\in \mathcal{H}$ and $\Phi(\mathbf{2^p})+\Phi(\textbf{u})=\Phi(\mathbf{2^p}+\textbf{u}-\lambda_p\mathbf{2^{p+1}})\in H$.

Next, we show that $\Phi(\sum_{i=0}^{s-2}\mathbf{2^i})\in K(H)$. Let $\textbf{u}=(u_1,\dots,u_{n})\in\mathcal{H}$ and $\textbf{v}=(v_1,\dots,v_n)=\sum_{i=0}^{s-2}\mathbf{2}^i$. First, we prove that $\phi(v_i)+\phi(u_i)=\phi(v_i+u_i-2u_i)$ for all $i\in\lbrace1,\dots,n\rbrace$. Note that the binary expansion of $v_i$ and $u_i$ are $[1,\dots,1,0]_2$ and $[u_{i,0},u_{i,1},\ldots, u_{i,{s-1}}]_2$, respectively. Then, it is easy to check that $2(v_i\odot u_i)=2u_i$. Therefore, by  Proposition \ref{Prop:Z2sOdot}, $\phi(v_i)+\phi(u_i)=\phi(v_i+u_i-2u_i)$. Hence, $\Phi(\textbf{v})+\Phi(\textbf{u})=\Phi(\textbf{v}+\textbf{u}-2\textbf{u})\in H$ for all $\textbf{u}\in \mathcal{H}$.


Finally, we have to see that the elements of the set $\lbrace \Phi(Q), \Phi(P), \Phi(\sum_{i=0}^{s-2}\mathbf{2^i})\rbrace$ are linearly independent. By construction, the generator matrix $A^{t_1,\ldots,t_s}$ is a block upper triangular matrix, so it is easy to see that the codewords in $\Phi(Q)$ are linearly independent of the ones in $\lbrace\Phi(P), \Phi(\sum_{i=0}^{s-2}\mathbf{2^i})\rbrace$. Note that $\sigma <s$ since $H$ is nonlinear.
Thus, by Lemma \ref{lema4}, it is easy to see that the codewords in $\lbrace\Phi(P), \Phi(\sum_{i=0}^{s-2}\mathbf{2^i})\rbrace$ are linearly independent. Therefore, we have that the dimension of the linear span of this set is $\sigma+\tau$, so $\kernel(H)\geq\sigma+\tau$.
 
\end{proof}

\begin{lemma}\label{lema:sumodot}
Let $v\in\Z_{2^s}$ and $\lambda_i\in\Z_2$, $i\in\lbrace0,\dots,s-1\rbrace$. Then, $v\odot\sum_{i=0}^{s-1}\lambda_i2^i=\sum_{i=0}^{s-1}v\odot \lambda_i 2^i$.
\end{lemma}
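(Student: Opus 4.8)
The plan is to unwind the definition of ``$\odot$'' on both sides and compare bit positions. Write $[v_0,v_1,\dots,v_{s-1}]_2$ for the binary expansion of $v$. Since each $\lambda_i\in\Z_2$, the tuple $[\lambda_0,\lambda_1,\dots,\lambda_{s-1}]_2$ is precisely the binary expansion of $w:=\sum_{i=0}^{s-1}\lambda_i2^i\in\Z_{2^s}$. Hence, directly from the definition of ``$\odot$'', the left-hand side equals
$$
v\odot w=\sum_{i=0}^{s-1}2^iv_i\lambda_i .
$$

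For the right-hand side, first I would compute each summand: the binary expansion of $\lambda_i2^i$ is $[0,\dots,0,\lambda_i,0,\dots,0]_2$ with $\lambda_i$ in position $i$, so $v\odot\lambda_i2^i=2^iv_i\lambda_i$ by the definition of ``$\odot$''. It then remains to observe that adding these $s$ elements of $\Z_{2^s}$ produces no carries: the only possibly nonzero term $2^iv_i\lambda_i$ is supported on the single bit position $i$, and these positions are pairwise distinct. Consequently,
$$
\sum_{i=0}^{s-1}v\odot\lambda_i2^i=\sum_{i=0}^{s-1}2^iv_i\lambda_i ,
$$
where the right-hand sum is now an ordinary integer sum with value in $\{0,\dots,2^s-1\}$. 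Comparing this with the expression for $v\odot w$ gives the claim.

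There is essentially no real obstacle here; the only point that deserves a word of care is the carry-free nature of the sum on the right-hand side, which is exactly what allows the sum in $\Z_{2^s}$ to be identified with the plain sum of the bit-weighted terms. Alternatively, the identity could be obtained from Proposition \ref{Prop:Z2sOdot} by induction on the number of nonzero $\lambda_i$, but the bitwise computation above is shorter and more transparent.
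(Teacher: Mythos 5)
Your proof is correct and follows essentially the same route as the paper's: unwind the definition of $\odot$ on both sides, identify $[\lambda_0,\dots,\lambda_{s-1}]_2$ as the binary expansion of $\sum_{i=0}^{s-1}\lambda_i2^i$, and observe that each summand $v\odot\lambda_i2^i$ equals $2^iv_i\lambda_i$. Your explicit remark that the sum on the right-hand side is carry-free is a small but welcome extra precision that the paper leaves implicit.
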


\begin{proof}
Let $v\in\Z_{2^s}$ and $[v_0,v_1,\dots,v_{s-1}]_2$ its binary expansion. By definition, we have that $v\odot\sum_{i=0}^{s-1}\lambda_i2^i=\sum_{i=0}^{s-1}v_i\lambda_i2^i$. Note that $v_i\lambda_i2^i=v\odot\lambda_i2^i$, so $v\odot\sum_{i=0}^{s-1}\lambda_i2^i=\sum_{i=0}^{s-1}v\odot\lambda_i 2^i$.
 
\end{proof}

\begin{lemma}\label{lema:homoker}
	Let $\mathcal{H}=\mathcal{H}^{t_1,\dots,t_s}$ be the $\Z_{2^s}$-additive Hadamard code of type $(n; t_1,$ $\dots, t_s)$.
	Let $\mathcal{N}=\{ \sum_{i=\sigma-1}^{s-2}\lambda_i\mathbf{2^i} : \lambda_i\in\Z_2\}\setminus\{\sum_{i=\sigma-1}^{s-2}\mathbf{2^i}\}$ if $\sigma\leq s-1$.
	Then, $\Phi(\mathcal{N})\cap K(\Phi(\mathcal{H}))=\{\zero\}$.
\end{lemma}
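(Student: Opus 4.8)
The plan is to translate kernel membership into a condition over $\Z_{2^s}$, to settle the ``generic'' vectors of $\mathcal{N}$ by a Hamming weight count in the Gray image, and to reduce the remaining ones to the generic case by complementation. If $\sigma=s-1$ then $\{\sigma-1,\dots,s-2\}=\{s-2\}$, so $\mathcal{N}=\{\zero\}$ and there is nothing to prove; hence I would assume $\sigma\leq s-2$, which by Theorem~\ref{theorem:lineal} makes $H:=\Phi(\mathcal{H})$ nonlinear and so Proposition~\ref{Prop:kernel1} applicable. Write $N=2^{s-1}n$. Every $\mathbf{v}\in\mathcal{N}$ has the form $\mathbf{v}=\sum_{i\in S}\mathbf{2^i}$ with $S\subsetneq\{\sigma-1,\dots,s-2\}$, and lies in $\mathcal{H}$ because $\one=\mathbf{w}_1\in\mathcal{H}$. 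Applying Proposition~\ref{Prop:Z2sOdot} coordinatewise gives $\Phi(\mathbf{v})+\Phi(\mathbf{u})=\Phi(\mathbf{v}+\mathbf{u}-2(\mathbf{v}\odot\mathbf{u}))$ for all $\mathbf{u}\in\mathcal{H}$; since $\Phi$ is injective and $\mathcal{H}$ is a subgroup containing $\mathbf{v}$ and $\mathbf{u}$, this belongs to $H$ if and only if $2(\mathbf{v}\odot\mathbf{u})\in\mathcal{H}$. Thus $\Phi(\mathbf{v})\in K(H)$ if and only if $2(\mathbf{v}\odot\mathbf{u})\in\mathcal{H}$ for every $\mathbf{u}\in\mathcal{H}$, so for each nonzero $\mathbf{v}\in\mathcal{N}$ it suffices to exhibit a single $\mathbf{u}\in\mathcal{H}$ with $2(\mathbf{v}\odot\mathbf{u})\notin\mathcal{H}$.

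For a nonzero $\mathbf{v}$ with $s-2\notin S$ I would take $\mathbf{u}=\mathbf{w}_2$. From the recursive construction of $A^{t_1,\dots,t_s}$, the entries of $\mathbf{w}_2$ run over all multiples of $2^{\sigma-1}$ in $\Z_{2^s}$, each appearing exactly $n/2^{s+1-\sigma}$ times. Writing $c=\sum_{i\in S}2^i$ for the constant entry of $\mathbf{v}$ and $k$ for a multiple of $2^{\sigma-1}$, the binary support of $c\odot k$ is contained in $S$, and since $\max S\leq s-3$ the support of $2(c\odot k)$ avoids position $s-1$; hence $2(c\odot k)\neq 2^{s-1}$, and $2(c\odot k)=0$ exactly when $k_i=0$ for all $i\in S$. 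Using that $\wt_H(\phi(w))$ equals $0$, $2^{s-1}$, or $2^{s-2}$ according as $w=0$, $w=2^{s-1}$, or otherwise, and counting the $2^{s+1-\sigma-|S|}$ multiples $k$ of $2^{\sigma-1}$ with $c\odot k=0$, the weight sum should collapse to $\wt_H(\Phi(2(\mathbf{v}\odot\mathbf{w}_2)))=\tfrac{N}{2}(1-2^{-|S|})$, which lies strictly between $0$ and $N/2$. Since $H$ is a Hadamard code of length $N$, all of whose codewords have Hamming weight $0$, $N/2$ or $N$, this proves $2(\mathbf{v}\odot\mathbf{w}_2)\notin\mathcal{H}$ and hence $\Phi(\mathbf{v})\notin K(H)$.

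For a nonzero $\mathbf{v}\in\mathcal{N}$ with $s-2\in S$, I would argue by complementation rather than compute. Let $S'=\{\sigma-1,\dots,s-2\}\setminus S$, which is nonempty since $S$ is a proper subset and which does not contain $s-2$, and set $\mathbf{v}'=\sum_{i\in S'}\mathbf{2^i}$; the previous paragraph gives $\Phi(\mathbf{v}')\notin K(H)$. By Lemma~\ref{lema4} (coordinatewise, all exponents being $\leq s-2$), $\Phi(\mathbf{v})+\Phi(\mathbf{v}')=\Phi(\sum_{i=\sigma-1}^{s-2}\mathbf{2^i})$, and this vector lies in $K(H)$: by Proposition~\ref{Prop:kernel1}, $\langle\Phi(P),\Phi(\sum_{i=0}^{s-2}\mathbf{2^i})\rangle\subseteq K(H)$, and by Lemma~\ref{lema4} the combination $\sum_{p=0}^{\sigma-2}\Phi(\mathbf{2^p})+\Phi(\sum_{i=0}^{s-2}\mathbf{2^i})$ of its generators equals $\Phi(\sum_{i=\sigma-1}^{s-2}\mathbf{2^i})$ (for $\sigma=1$, $P=\emptyset$ and this vector is just $\Phi(\sum_{i=0}^{s-2}\mathbf{2^i})\in K(H)$). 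As $K(H)$ is a linear subspace, $\Phi(\mathbf{v})\in K(H)$ would force $\Phi(\mathbf{v}')=\Phi(\sum_{i=\sigma-1}^{s-2}\mathbf{2^i})+\Phi(\mathbf{v})\in K(H)$, a contradiction. Together with $\Phi(\zero)=\zero\in K(H)$ and the injectivity of $\Phi$, this yields $\Phi(\mathcal{N})\cap K(H)=\{\zero\}$.

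I expect the main obstacle to be the generic case: establishing the precise entry-multiset of $\mathbf{w}_2$ from the recursive construction, and then tracking which binary positions survive $w\mapsto 2(c\odot w)$ so that the weight sum telescopes to $\tfrac{N}{2}(1-2^{-|S|})$. The reduction to the additive condition and the complementation step for $s-2\in S$ are then routine.
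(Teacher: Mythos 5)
Your proof is correct, and while it starts from the same reduction as the paper --- testing a candidate $\mathbf{v}\in\mathcal{N}$ against the single witness $\mathbf{w}_2$ and translating kernel membership into the condition $2(\mathbf{v}\odot\mathbf{w}_2)\in\mathcal{H}$ via Proposition~\ref{Prop:Z2sOdot} --- the way you derive the contradiction is genuinely different. The paper restricts $2(\mathbf{w}_2\odot\mathbf{u})$ to the first $2^{s+1-\sigma}$ coordinates (one period of $\mathbf{w}_2$, where all later rows vanish), writes the resulting membership condition as $2\sum\lambda_i\mathbf{v}^{(i)}2^i=\mu_1\one+\mu_2\mathbf{v}$, and kills all coefficients at once using the linear independence of the bit-planes $\mathbf{v}^{(i)}$ inside $RM(1,t)$ together with the existence of an index $j$ with $\lambda_j=0$; this handles every element of $\mathcal{N}$ uniformly. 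You instead compute the Hamming weight of $\Phi(2(\mathbf{v}\odot\mathbf{w}_2))$ and observe it equals $\tfrac{N}{2}(1-2^{-|S|})\notin\{0,N/2,N\}$ --- which is the same weight-counting device the paper uses in its other nonlinearity arguments --- but this forces you into a case split, since when $s-2\in S$ some coordinates of $2(\mathbf{v}\odot\mathbf{w}_2)$ equal $2^{s-1}$ and the count can land exactly on $N/2$ (e.g.\ $S=\{s-2\}$). Your complementation step, reducing that case to the generic one via $\Phi(\sum_{i=\sigma-1}^{s-2}\mathbf{2^i})\in K(\Phi(\mathcal{H}))$ from Proposition~\ref{Prop:kernel1} and the linearity of the kernel, is a nice observation that correctly patches the gap; it does make your argument depend on the nonlinearity hypothesis of that proposition, which you justify properly via $\sigma\leq s-2$ and Theorem~\ref{theorem:lineal} (the $\sigma=s-1$ case being vacuous as you note). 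The trade-off: the paper's linear-independence computation is more uniform and self-contained, while your version recycles the weight-distribution argument already used elsewhere in the paper at the cost of an extra case and an appeal to Proposition~\ref{Prop:kernel1}.
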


\begin{proof}
    Let $H=\Phi(\mathcal{H})$. Let
$\mathbf{u}=\sum_{i=\sigma-1}^{s-2}\lambda_i\mathbf{2^i}\in\mathcal{N}$ such
that $\Phi(\mathbf{u})\in K(H)$. We want to prove that $\mathbf{u}=\mathbf{0}$.

By construction, the second row
$\mathbf{w}_2$ of $A^{t_1,\dots,t_s}$ is a $2^{t-2s+\sigma}$-fold replication
of
$\mathbf{v}=2^{\sigma-1}(0,1,\dots,2^{s+1-\sigma}-1)$, and
$\ord(\mathbf{w}_2)=2^{s+1-\sigma}$. By Proposition \ref{Prop:Z2sOdot}, we
have that $\Phi(\textbf{w}_2)+\Phi(\mathbf{u})=\Phi(\textbf{w}_2+\mathbf{u}-
2(\textbf{w}_2\odot \mathbf{u}))$. Since $\Phi(\mathbf{u})\in K(H)$,
$2(\textbf{w}_2\odot \mathbf{u})\in\mathcal{H}$.  Note that, by
Lemma \ref{lema:sumodot}, we have that $2(\textbf{w}_2\odot
\mathbf{u})=2\sum_{i=\sigma-1}^{s-2}\mathbf{w}_2 \odot \lambda_i \mathbf{2^i}
=2\sum_{i=\sigma-1}^{s-2}\lambda_i\mathbf{w}_2^{(i)}2^i\in\mathcal{H}$.

Let $\tau=\sum_{i=1}^s t_i$. If $\tau=2$, then $\mathcal{H}$ has length $2^{s+1-\sigma}$ and the only vectors in $A^{t_1,\dots,t_s}$ are $\mathbf{1}$ and $\mathbf{w}_2=\mathbf{v}$. If $\tau\geq 3$, for $i\in\lbrace3,\dots,\tau\rbrace$, the $i$th row $\mathbf{w}_i$
of $A^{t_1,\dots,t_s}$ contains zeros in the first $2^{s+1-\sigma}$ coordinates by construction. Since $\sigma \leq s-1$, $\tau \geq 2$, and hence any element of $\mathcal{H}$ restricted to the first  $2^{s+1-\sigma}$ coordinates is of the form $\mu_1\one+\mu_2\textbf{v}$ for some $\mu_1,\mu_2\in\Z_{2^s}$. We have that $2\sum_{i=\sigma-1}^{s-2}\lambda_i\mathbf{w}_2^{(i)}2^i$ restricted to the first $2^{s+1-\sigma}$ coordinates is $2\sum_{i=\sigma-1}^{s-2}\lambda_i\mathbf{v}^{(i)}2^i$, so we have to find
$\mu_1,\mu_2\in\Z_{2^s}$ such that $2\sum_{i=\sigma-1}^{s-2}\lambda_i\mathbf{v}^{(i)}2^i=\mu_1\one+\mu_2\mathbf{v}$.

    Since the first coordinate of $\mathbf{v}$ is 0, the first coordinate of
$\mathbf{v}^{(i)}$ is $0$ for all $i$.
Then, we have that $\mu_1=0$, so
$2\sum_{i=\sigma-1}^{s-2}\lambda_i\mathbf{v}^{(i)}2^i=\mu_2\mathbf{v}$. Note
that
$\mathbf{v}=\sum_{i=0}^{s-1}\mathbf{v}^{(i)}2^i=\sum_{i=\sigma-1}^{s-1}\mathbf{v
}^{(i)}2^i$. Therefore,
$2\sum_{i=\sigma-1}^{s-2}\lambda_i\mathbf{v}^{(i)}2^i=\mu_2\sum_{i=\sigma-1}^{
s-1}\mathbf{v}^{(i)}2^i$.
Since $\mathbf{u} \in \mathcal{N}$, there exists $j\in\{\sigma-1,\dots,s-2\}$
such that $\lambda_j=0$. Then, regrouping the terms, we obtain that
    $$
\sum_{\substack{i=\sigma-1\\i\not=j}}^{s-2}(\mu_2-2\lambda_i)\mathbf{v}^{(i)}
2^i+\mu_2\mathbf{v}^{(j)}2^j+\mu_2\mathbf{v}^{(s-1)}2^{s-1}=\zero.
    $$
Note that $\{\mathbf{v}^{(i)}\}_{i=\sigma-1}^{s-1}$ is a
subset of a
basis of the $RM(1,t)$. Then, we have that $(\mu_2-2\lambda_i)2^i=0$, for
$i\in\{\sigma-1,\cdots,s-2\}\setminus\{j\}$, $\mu_22^j=0$ and $\mu_22^{s-1}=0$.
As a result, $\mu_2=0$ and $\lambda_i=0$ for all $i\in\{\sigma-1,\cdots,s-2\}$.
Hence, $\mathbf{u}=\sum_{i=\sigma-1}^{s-2}\lambda_i\mathbf{2^i}=\mathbf{0}$, and
the result holds.
 
\end{proof}

\begin{lemma}\label{lema:nohomoker}
	Let $\mathcal{H}=\mathcal{H}^{t_1,\dots,t_s}$ be the $\Z_{2^s}$-additive Hadamard code of type $(n; t_1,$ $\dots, t_s)$.
	Let $\mathbf{w}_i$ be the $i$th row of $A^{t_1,\dots,t_s}$ and $\tau=\sum_{i=1}^st_i$. Let $\mathcal{M}=\lbrace \mathbf{v}=\sum_{i=2}^{\tau-t_s}\lambda_i\mathbf{w}_i:\lambda_i\in\Z_{2^s},\,\ord(\mathbf{v})>2\rbrace$, $\mathcal{N}=\{ \sum_{i=\sigma-1}^{s-2}\lambda_i\mathbf{2^i} : \lambda_i\in\Z_2 \}\setminus\{\sum_{i=\sigma-1}^{s-2}\mathbf{2^i}\}$ if $\sigma\leq s-1$ and $\mathcal{M}+\mathcal{N}=\{\mathbf{v}_\mathcal{M}+\mathbf{v}_\mathcal{N}:\mathbf{v}_\mathcal{M}\in\mathcal{M}\cup\{\zero\},\mathbf{v}_\mathcal{N}\in\mathcal{N}\}$. Then, $\Phi( \mathcal{M+N})\cap K(\Phi(\mathcal{H}))=\{\zero\}$.
\end{lemma}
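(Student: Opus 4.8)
The plan is to reduce the statement about $\mathcal{M}+\mathcal{N}$ to the already-established fact (Lemma~\ref{lema:homoker}) that $\Phi(\mathcal{N})\cap K(\Phi(\mathcal{H}))=\{\zero\}$, together with a structural observation about how elements of $\mathcal{M}$ interact with the kernel. Write $H=\Phi(\mathcal{H})$ and $\tau=\sum_{i=1}^s t_i$. Take $\mathbf{v}=\mathbf{v}_\mathcal{M}+\mathbf{v}_\mathcal{N}$ with $\mathbf{v}_\mathcal{M}\in\mathcal{M}\cup\{\zero\}$ and $\mathbf{v}_\mathcal{N}\in\mathcal{N}$, and suppose $\Phi(\mathbf{v})\in K(H)$; the goal is to force $\mathbf{v}=\zero$, which (since $\zero\notin\mathcal{M}$ and $\sum_{i=\sigma-1}^{s-2}\mathbf{2^i}\notin\mathcal{N}$) means $\mathbf{v}_\mathcal{M}=\zero$ and $\mathbf{v}_\mathcal{N}=\zero$.

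First I would dispose of the case $\mathbf{v}_\mathcal{M}=\zero$: then $\mathbf{v}=\mathbf{v}_\mathcal{N}\in\mathcal{N}$, and Lemma~\ref{lema:homoker} gives $\mathbf{v}=\zero$ directly. So assume $\mathbf{v}_\mathcal{M}\neq\zero$, i.e.\ $\ord(\mathbf{v}_\mathcal{M})>2$. The key idea is to test membership in $K(H)$ against the second row $\mathbf{w}_2$ of $A^{t_1,\dots,t_s}$, exactly as in Lemma~\ref{lema:homoker}: by Proposition~\ref{Prop:Z2sOdot}, $\Phi(\mathbf{w}_2)+\Phi(\mathbf{v})=\Phi(\mathbf{w}_2+\mathbf{v}-2(\mathbf{w}_2\odot\mathbf{v}))$, so $\Phi(\mathbf{v})\in K(H)$ forces $2(\mathbf{w}_2\odot\mathbf{v})\in\mathcal{H}$. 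I would then restrict to the first $2^{s+1-\sigma}$ coordinates, where $\mathbf{w}_2$ is $\mathbf{v}=2^{\sigma-1}(0,1,\dots,2^{s+1-\sigma}-1)$ (abusing notation) and every codeword of $\mathcal{H}$ has the form $\mu_1\one+\mu_2\mathbf{v}$; $\mathbf{v}_\mathcal{M}$ restricted to these coordinates is a $\Z_{2^s}$-linear combination of the (restricted) rows $\mathbf{w}_i$, $2\le i\le \tau-t_s$, which on this block are again multiples of $\one$ and $\mathbf{v}$, hence of the form $\nu_1\one+\nu_2\mathbf{v}$. Using $2(\mathbf{w}_2\odot\mathbf{v})\in\mathcal{H}$ restricted to this block, combined with Lemma~\ref{lema:sumodot} to split $\mathbf{w}_2\odot(\mathbf{v}_\mathcal{M}+\mathbf{v}_\mathcal{N})$ coordinate-by-bit, and the fact that $\{\mathbf{v}^{(i)}\}_{i=\sigma-1}^{s-1}$ lies in a basis of $RM(1,t)$, I expect to conclude first that the $\mathcal{N}$-part contributes nothing new (recovering the argument of Lemma~\ref{lema:homoker} that $\mathbf{v}_\mathcal{N}=\zero$), and second that $\mathbf{v}_\mathcal{M}$ must already lie in the linear span generated by $\one$ and $\mathbf{w}_2$ with coefficients making it of order $\le 2$ — contradicting $\ord(\mathbf{v}_\mathcal{M})>2$, unless $\mathbf{v}_\mathcal{M}=\zero$.

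The main obstacle, and the step I would spend the most care on, is the bookkeeping over the first $2^{s+1-\sigma}$ coordinates when $\mathbf{v}_\mathcal{M}$ is a genuine combination of several rows $\mathbf{w}_i$ with $i>2$: I must argue that the only way $2(\mathbf{w}_2\odot\mathbf{v})$ can land back in $\mathcal{H}$ is if the "new" part $\mathbf{v}_\mathcal{M}$ is forced, bit-by-bit, to be compatible with the order-two subcode — and here the hypothesis that the sum stops at $\tau-t_s$ (so no order-two rows $\mathbf{w}_i$ with $i>\tau-t_s$ are involved) is what guarantees $\ord(\mathbf{v}_\mathcal{M})>2$ is a real constraint rather than a vacuous one. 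A clean way to organize this is: (i) show $K(H)\cap\Phi(\mathcal{M})=\{\zero\}$ as a separate sub-claim by the $\mathbf{w}_2\odot(\cdot)$ test, reusing verbatim the linear-independence-in-$RM(1,t)$ mechanism of Lemma~\ref{lema:homoker}; (ii) show that if $\Phi(\mathbf{v}_\mathcal{M}+\mathbf{v}_\mathcal{N})\in K(H)$ then, after subtracting the action on $\mathbf{w}_2$, the $\mathbf{v}_\mathcal{N}$-contribution decouples exactly as in Lemma~\ref{lema:homoker}, yielding $\mathbf{v}_\mathcal{N}=\zero$ and then $\mathbf{v}_\mathcal{M}=\zero$ by (i). Steps (i) and (ii) together give $\Phi(\mathcal{M+N})\cap K(H)=\{\zero\}$.
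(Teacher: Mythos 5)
There is a genuine gap in the step you flag as the main obstacle, and it is not just bookkeeping: the mechanism you propose for handling $\mathbf{v}_{\mathcal M}\neq\zero$ does not work. Testing against $\mathbf{w}_2$ and restricting to the first $2^{s+1-\sigma}$ coordinates cannot detect $\mathbf{v}_{\mathcal M}$ at all. By construction the rows $\mathbf{w}_i$ with $i\geq 3$ vanish on that block, so if $\lambda_2=0$ the restriction of $\mathbf{v}_{\mathcal M}$ (and hence of $2(\mathbf{w}_2\odot\mathbf{v}_{\mathcal M})$) to the block is $\zero$ and the $RM(1,t)$-independence argument of Lemma~\ref{lema:homoker} yields no constraint on $\mathbf{v}_{\mathcal M}$ whatsoever. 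Worse, even the simplest element $\mathbf{v}=\mathbf{w}_2\in\mathcal{M}$ defeats your test globally: $2(\mathbf{w}_2\odot\mathbf{w}_2)=2\mathbf{w}_2\in\mathcal{H}$, so $\Phi(\mathbf{w}_2)+\Phi(\mathbf{w}_2)\in H$ and no obstruction appears. The point is that passing the test against the single element $\mathbf{w}_2$ is necessary but far from sufficient for membership in $K(H)$, and for elements of $\mathcal{M}$ it is essentially always passed or uninformative. Your intended conclusion (``$\mathbf{v}_{\mathcal M}$ lies in the span of $\one$ and $\mathbf{w}_2$ with order $\leq 2$'') also does not follow from, nor contradict, anything the restriction argument can produce.

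What the paper actually does for $\Phi(\mathcal{M})\cap K(H)=\emptyset$ is a Hamming-weight argument that is absent from your proposal. Writing $\ord(\mathbf{v}_{\mathcal M})=2^p$, one tests against a vector chosen according to $p$: the constant vector $\mathbf{2^{s-p}}$ when $2<p\leq s+1-\sigma$, and $2^{s-\sigma-1}\mathbf{w}_2$ when $p=2$. Using the fact that exactly half the coordinates of $\mathbf{v}_{\mathcal M}$ have order $2^p$ (respectively, that the blocks of length $4$ are equidistributed over $\{0,2^{s-2},2^{s-1},3\cdot 2^{s-2}\}$ modulo the relevant cosets), the correction term $2(\,\cdot\odot\cdot\,)$ is shown to have image of Hamming weight exactly $N/4$ under $\Phi$; since the nonzero codewords of a Hadamard code have weight $N/2$ or $N$, it cannot lie in $\mathcal{H}$. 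Finally, for the mixed case one cannot simply ``split'' $\odot$ over $\mathbf{v}_{\mathcal M}+\mathbf{v}_{\mathcal N}$, since $\odot$ is defined via binary expansions and is not additive; the paper decomposes $\mathbf{v}_{\mathcal N}=\mathbf{v}_{\mathcal{N}_1}+\mathbf{v}_{\mathcal{N}_2}$ by bit ranges, observes that $\mathbf{v}_{\mathcal{N}_2}$ contributes $\zero$ to the relevant product and that $\mathbf{v}_{\mathcal M}+\mathbf{v}_{\mathcal{N}_1}$ is a coordinate permutation of $\mathbf{v}_{\mathcal M}$, and then reuses the same weight count. You would need to supply both of these ingredients to close the argument.
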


\begin{proof}
Let $H=\Phi(\mathcal{H})$, which has length $N=2^t=n\cdot 2^{s-1}$. By Lemma \ref{lema:homoker}, we already know that $\Phi(\mathcal{N})\cap K(H)=\{\mathbf{0}\}$.
Now, we prove that $\Phi(\mathcal{M})\cap K(H)=\emptyset$.

Let $\textbf{v}=\sum_{i=2}^{\tau-t_s}\lambda_i\textbf{w}_i \in \mathcal{M}$.
Since $\ord(\mathbf{v})>2$ and $\ord(\mathbf{w}_i)\leq 2^{s+1-\sigma}$, $\ord(\textbf{v})=2^{p}$ for some $2\leq p\leq s+1-\sigma$. By the iterative construction (\ref{eq:recGenMatrix}) of $A^{t_1,\dots,t_s}$, we know that
all the elements of $\Z_{2^s}$ of order equal to or less than $2^{p}$ appear as a coordinate of $\mathbf{v}$.
Moreover, exactly half of the coordinates of $\textbf{v}$ are of order $2^{p}$.
We consider two cases depending on the value of $p$.

First, we consider that $2<p\leq s+1-\sigma$. We have that $\Phi(\textbf{v})+\Phi(\mathbf{2^{s-p}})=\Phi(\textbf{v}+\mathbf{2^{s-p}}-2^{s-p+1}\textbf{v}^{(s-p)})$ by Corollary \ref{coro:lema0}. As before, it is enough to see that $2^{s-p+1}\textbf{v}^{(s-p)} \not\in\mathcal{H}$ to prove that $\Phi(\textbf{v}) \not \in K(H)$. Since half of the coordinates of $\textbf{v}$ are of order $2^{p}$ and the other half are of order less than $2^p$, we have that half of the coordinates of $2^{s-p+1}\textbf{v}^{(s-p)}$ are equal to $2^{s-p+1}$ and the rest of coordinates are zero.
Note that $2^{s-p+1} \not\in \{0,2^{s-1}\}$ since $p>2$. Therefore, since $\wt_H(\phi(2^{s-p+1}))=2^{s-2}$,
we have that $\wt_H(\Phi( 2^{s-p+1}\textbf{v}^{(s-p)}))= n/2 \cdot 2^{s-2}=2^{t-2}=N/4$ and hence $\Phi(\mathbf{v}) \not \in K(H)$.

Next, we consider that $p=2$, that is, $\ord(\mathbf{v})=4$. Then, $\ord(\lambda_i\mathbf{w}_i)=4$ or $\lambda_i=0$ for all $i\in\{2,\dots,\tau-t_s\}$.
By Proposition \ref{Prop:Z2sOdot},  $\Phi(\mathbf{v})+\Phi(2^{s-\sigma-1}\mathbf{w}_2)=\Phi(\mathbf{v}+2^{s-\sigma-1}\mathbf{w}_2-2(\mathbf{v}\odot 2^{s-\sigma-1}\mathbf{w}_2))$. Again, it is enough to see that $2(\mathbf{v}\odot2^{s-\sigma-1}\mathbf{w}_2)\not\in\mathcal{H}$ to show that $\Phi(\mathbf{v})\not\in K(H)$. Note that $2^{s-\sigma-1}\mathbf{w}_2$ is a $2^{t-s-1}$-fold replication of $\textbf{b}_1=(0,2^{s-2},2^{s-1},3\cdot2^{s-2})$.
Now, we consider the coordinates divided into groups of 4 consecutive coordinates, which will be referred to as blocks.
Note that every block of $\lambda_i\textbf{w}_i$ contains the same value in its 4 coordinates,
for all $i\in\lbrace3,\dots,\tau-t_s\rbrace$.

If $\lambda_2=0$, then every block of $\textbf{v}$ also contains the same value in its 4 coordinates.
Thus, every block in $2(\mathbf{v}\odot2^{s-\sigma-1}\mathbf{w}_2)$ is of the form
$2(\mathbf{k}\odot \textbf{b}_1)$ for some $k\in\{0,2^{s-2},2^{s-1},3\cdot2^{s-2}\}$. We have that
\begin{equation*}
2(\mathbf{k}\odot \textbf{b}_1)=\left\lbrace\begin{array}{cl}
(0,0,0,0) & \text{if }k\in\{0,2^{s-1}\}\\
(0,2^{s-1},0,2^{s-1}) & \text{if }k\in\{2^{s-2},3\cdot2^{s-2}\}.\\
\end{array}\right.
\end{equation*}
By construction, note that $\mathbf{v}$ contains the same number of blocks $\mathbf{k}$ for each $k\in \{0,2^{s-2},2^{s-1},3\cdot2^{s-2}\}$.
Then, it is easy to see that $\wt_H(\Phi(2(\mathbf{v}\odot2^{s-\sigma-1}\mathbf{w}_2)))=\wt_H(\phi(2^{s-1}))\cdot 4 \cdot n/16=2^{s-1}\cdot n/4=2^{t-2}=N/4$, so $\Phi(\mathbf{v})\not\in K(H)$ in this case.

Otherwise, if $\lambda_2\not=0$, then every block of $\mathbf{v}$ is of the form $\textbf{b}_i+\mathbf{k}$, for some $i\in\{1,2\}$ and $k\in\{0,2^{s-2},2^{s-1},3\cdot2^{s-2}\}$, where $\textbf{b}_1=(0,2^{s-2},2^{s-1},3\cdot2^{s-2})$ and $\textbf{b}_2=(0,3\cdot2^{s-2},2^{s-1},2^{s-2})$.
Then, we have that
\begin{equation*}
2((\textbf{b}_i+\mathbf{k})\odot \textbf{b}_1)=\left\lbrace\begin{array}{cl}
(0,0,0,0) & \text{if }k\in\{2^{s-2},3\cdot2^{s-2}\}\\
(0,2^{s-1},0,2^{s-1}) & \text{if }k\in\{0,2^{s-1}\},\\
\end{array}\right.
\end{equation*}
for $i\in\{1,2\}$. Again, by construction, $\mathbf{v}$ contains the same number of blocks $\textbf{b}_i+\mathbf{k}$ for each $k\in \{0,2^{s-2},2^{s-1},3\cdot2^{s-2}\}$.
Therefore, as before, $\wt_H(\Phi(2(\mathbf{v}\odot2^{s-\sigma-1}\mathbf{w}_2)))=N/4$, and $\Phi(\mathbf{v})\not\in K(H)$.
We have just shown that $\Phi(\mathcal{M})\cap K(H)=\emptyset$.

Now, we prove that $\Phi(\mathcal{M+N})\cap K(H)=\{\zero\}$.
Let $\mathbf{v}=\mathbf{v}_\mathcal{M}+\mathbf{v}_\mathcal{N} \in\mathcal{M+N}\backslash \{\zero\} $,
where $\mathbf{v}_\mathcal{M}\in\mathcal{M}$ and $\mathbf{v}_\mathcal{N}\in\mathcal{N}$. We just proved that $\Phi(\mathbf{v}) \not \in K(H)$ if $\mathbf{v}_\mathcal{M}=\zero$ or $\mathbf{v}_\mathcal{N}=\zero$. Therefore, we can assume that $\mathbf{v}_\mathcal{M}\not =\zero$ and $\mathbf{v}_\mathcal{N}\not =\zero$.

We know that $\mathbf{v}_\mathcal{N}=(v,\dots,v)$. Let $[v_0,v_1,\dots,v_{s-1}]_2$ be the binary expansion of $v$.
Let $v_{\mathcal{N}_1}$ and $v_{\mathcal{N}_2}$ be the elements of $\Z_{2^s}$ having binary expansion $[0,\dots,0,v_{s-p},\dots,v_{s-1}]_2$ and $[v_{0},\dots,v_{s-p-1},0,\dots,0]_2$, respectively. Then, $\mathbf{v}_\mathcal{N}=\mathbf{v}_{\mathcal{N}_1}+\mathbf{v}_{\mathcal{N}_2}$, where
$ \mathbf{v}_{\mathcal{N}_i} = (v_{\mathcal{N}_i}, \ldots, v_{\mathcal{N}_i})$ for $i \in \{1,2\}$.
Since $\ord(\mathbf{v}_\mathcal{M})=2^p$  with $2\leq p\leq s+1-\sigma$, the binary expansion of each one of its coordinates of is of the form $[0,\dots,0,(v_{\mathcal{M}})_{s-p},\dots,(v_{\mathcal{M}})_{s-1}]_2$.
Note that we also have that $\ord(\mathbf{v}_{\mathcal{N}_1}) \leq \ord(\mathbf{v}_\mathcal{M})$ by construction.

On the one hand, we consider $2< p\leq s+1-\sigma$. It is easy to see that $2(\mathbf{v}_{\mathcal{N}_2}\odot\mathbf{2^{s-p}})=\zero$.  Therefore, $\wt_H(\Phi(2(\mathbf{v}\odot\mathbf{2^{s-p}})))=\wt_H(\Phi(2((\mathbf{v}_{\mathcal M}+\mathbf{v}_{\mathcal{N}_1})\odot\mathbf{2^{s-p}})))$.
Since $\ord(\mathbf{v}_{\mathcal{N}_1}) \leq \ord(\mathbf{v}_\mathcal{M})$, it is easy to see that there exists a permutation of coordinates $\pi$ such that
$\pi(\mathbf{v}_{\mathcal M}+\mathbf{v}_{\mathcal{N}_1} )= \mathbf{v}_{\mathcal M}$. Thus, $ \wt_H(\Phi(2((\mathbf{v}_{\mathcal M}+\mathbf{v}_{\mathcal{N}_1})\odot\mathbf{2^{s-p}}))) =\wt_H(\Phi(2(\mathbf{v}_{\mathcal M}\odot\mathbf{2^{s-p}})))$ and
the result holds by using the same arguments as above.

On the other hand, we consider that $p=2$. Note that $\ord(\mathbf{v}_\mathcal{M})=4$, and then $\ord(\mathbf{v}_{\mathcal{N}_1})=4$. It is easy to see that $2(\mathbf{v}_{\mathcal{N}_2}\odot 2^{s-\sigma-1}\mathbf{w}_2)=\zero$, hence we have that  $\wt_H(\Phi(2(\mathbf{v}\odot 2^{s-\sigma-1}\mathbf{\mathbf{w}_2})))=\wt_H(\Phi(2((\mathbf{v}_{\mathcal M}+\mathbf{v}_{\mathcal{N}_1})\odot 2^{s-\sigma-1} \mathbf{ \mathbf{w}_2})))$. Recall that $2^{s-\sigma-1}\mathbf{w}_2$ is the $2^{t-s-1}$-fold replication of $\mathbf{b}_1$. Taking into account that $\mathbf{v}_\mathcal{M}=\sum_{i=2}^{\tau-t_s}\lambda_i\textbf{w}_i$, note that the blocks of $\mathbf{v}_{\mathcal M}+\mathbf{v}_{\mathcal{N}_1}$ are of the form $\mathbf{k}$ for some $k\in\{0,2^{s-2},2^{s-1},3\cdot2^{s-2}\}$ if $\lambda_2=0$; or $\mathbf{b}_i+\mathbf{k}$ for some $k\in\{0,2^{s-2},2^{s-1},3\cdot2^{s-2}\}$ and $i\in \{1,2\}$ if $\lambda_2\not =0$. Therefore, the proof is analogous to the above one to show that $\Phi(\mathbf{v})\not\in K(H)$ with
$\mathbf{v}\in\mathcal{M}$. Then, the result holds.
 
\end{proof}

\begin{theorem}\label{Teo:kernel1}
	Let $\mathcal{H}=\mathcal{H}^{t_1,\dots,t_s}$ be the $\Z_{2^s}$-additive Hadamard code of type $(n; t_1,$ $\dots, t_s)$ such that $\Phi(\mathcal{H})$ is nonlinear. Let $\mathcal{H}_b$ be the subcode of $\mathcal{H}$ which contains all the codewords of order two. Let $P=\lbrace\mathbf{2^p}\rbrace_{p=0}^{\sigma-2}$ if $\sigma\geq2$, and $P=\emptyset$ if $\sigma=1$. Then,
	$$
	\left\langle\Phi(\mathcal{H}_b),\Phi(P), \Phi(\sum_{i=0}^{s-2}\mathbf{2^i})\right\rangle= K(\Phi(\mathcal{H}))
	$$
	and $\kernel(\Phi(\mathcal{H}))=\sigma+\sum_{i=1}^{s}t_i$.
\end{theorem}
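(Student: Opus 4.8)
The plan is to bootstrap from Proposition~\ref{Prop:kernel1}. Write $H=\Phi(\mathcal{H})$, $\tau=\sum_{i=1}^s t_i$, and $L=\langle\Phi(\mathcal{H}_b),\Phi(P),\Phi(\sum_{i=0}^{s-2}\mathbf{2^i})\rangle$; that proposition already gives $L\subseteq K(H)$ and $\dim L=\sigma+\tau$, so only the reverse inclusion $K(H)\subseteq L$ is missing. I would obtain it from the following \emph{Key Claim}: every $\mathbf{h}\in\mathcal{H}$ can be written $\Phi(\mathbf{h})=\mathbf{l}+\Phi(\mathbf{v})$ with $\mathbf{l}\in L$ and $\mathbf{v}\in(\mathcal{M}+\mathcal{N})\cup\{\zero\}$. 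Granting this, the proof ends quickly: for $\mathbf{x}\in K(H)\subseteq H$ write $\mathbf{x}=\Phi(\mathbf{h})=\mathbf{l}+\Phi(\mathbf{v})$; since $K(H)$ is linear and $\mathbf{l}\in L\subseteq K(H)$, we get $\Phi(\mathbf{v})=\mathbf{x}+\mathbf{l}\in K(H)$, and then Lemma~\ref{lema:nohomoker} together with the injectivity of $\Phi$ forces $\mathbf{v}=\zero$, i.e. $\mathbf{x}=\mathbf{l}\in L$. Hence $K(H)=L$ and $\kernel(H)=\sigma+\tau$.

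To prove the Key Claim I would use the abelian group decomposition $\mathcal{H}=\langle\one\rangle\oplus G_2\oplus G_3$, where $G_2=\langle\mathbf{w}_2,\dots,\mathbf{w}_{\tau-t_s}\rangle$ is generated by the rows of $A^{t_1,\dots,t_s}$ of order greater than $2$ (so that $\mathcal{M}$ is exactly the set of elements of $G_2$ of order $>2$) and $G_3$ is generated by the $t_s$ rows of order $2$. Write $\mathbf{h}=a\one+\mathbf{g}_2+\mathbf{g}_3$. Every element of $G_3$ and every element of $G_2$ of order $\le2$ has all its coordinates in $\{0,2^{s-1}\}$, so by Corollary~\ref{coro:2s-1} (applied coordinatewise) $\Phi(\mathbf{h})=\Phi(a\one+\mathbf{g}_2)+\Phi(\mathbf{g}_3)$ with $\Phi(\mathbf{g}_3)\in\Phi(\mathcal{H}_b)\subseteq L$, so I may assume $\mathbf{g}_3=\zero$. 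Next, since $\ord(\mathbf{w}_2)=2^{s+1-\sigma}$, every coordinate of $\mathbf{g}_2$ is a multiple of $2^{\sigma-1}$, hence $(a\one+\mathbf{g}_2)^{(p)}=a_p\one$ for each $p\le\sigma-2$; by Corollary~\ref{coro:lema0}, adding $\Phi(\mathbf{2^p})\in\Phi(P)\subseteq L$ then merely flips the $p$-th binary digit of $a$ without touching $\mathbf{g}_2$, and adding $\one=\Phi(\mathbf{2^{s-1}})\in L$ flips the $(s-1)$-st digit. Iterating, modulo $L$ I may replace $a$ by $a''=\sum_{i=\sigma-1}^{s-2}a_i 2^i$, so $a''\one$ is a $\Z_2$-combination of $\mathbf{2^{\sigma-1}},\dots,\mathbf{2^{s-2}}$, that is $a''\one\in\mathcal{N}\cup\{\mathbf{c}\}$, where $\mathbf{c}:=\sum_{i=\sigma-1}^{s-2}\mathbf{2^i}$ and, by Lemma~\ref{lema4}, $\Phi(\mathbf{c})=\sum_{i=\sigma-1}^{s-2}\Phi(\mathbf{2^i})=\Phi(\sum_{i=0}^{s-2}\mathbf{2^i})+\sum_{i=0}^{\sigma-2}\Phi(\mathbf{2^i})\in L$.

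It then remains to split into cases on $\mathbf{g}_2$. If $\ord(\mathbf{g}_2)\le2$, then $\Phi(a''\one+\mathbf{g}_2)=\Phi(a''\one)+\Phi(\mathbf{g}_2)$ with $\Phi(\mathbf{g}_2)\in\Phi(\mathcal{H}_b)\subseteq L$; modulo $L$ this leaves $\Phi(a''\one)$, and we are done whether $a''\one\in\mathcal{N}$ (take $\mathbf{v}=a''\one$) or $a''\one=\mathbf{c}$ (take $\mathbf{v}=\zero$, since $\Phi(\mathbf{c})\in L$). If $\ord(\mathbf{g}_2)>2$, then $\mathbf{g}_2\in\mathcal{M}$; if $a''\one\in\mathcal{N}$ take $\mathbf{v}=a''\one+\mathbf{g}_2\in\mathcal{M}+\mathcal{N}$, and if $a''\one=\mathbf{c}$ use the identity $\phi(\gamma+g)+\phi(\gamma)=\phi(-g)$, valid for $\gamma=2^{s-1}-2^{\sigma-1}$ and every multiple $g$ of $2^{\sigma-1}$ (a one-line consequence of Proposition~\ref{Prop:Z2sOdot}, since then $2\bigl((\gamma+g)\odot\gamma\bigr)\equiv2(\gamma+g)\pmod{2^s}$), to get $\Phi(\mathbf{c}+\mathbf{g}_2)=\Phi(\mathbf{c})+\Phi(-\mathbf{g}_2)\equiv\Phi(-\mathbf{g}_2)\pmod{L}$ with $-\mathbf{g}_2\in\mathcal{M}$; take $\mathbf{v}=-\mathbf{g}_2$. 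This proves the Key Claim.

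The main obstacle is precisely the Key Claim and the care it requires: one has to be sure that the digit-by-digit reduction of the $\one$-component never perturbs the $G_2$-component — this is exactly where the order constraint $\ord(\mathbf{w}_2)=2^{s+1-\sigma}$, equivalently the fact that the coordinates of $G_2$ are multiples of $2^{\sigma-1}$, enters — and one has to dispose of the representative $\mathbf{c}=\sum_{i=\sigma-1}^{s-2}\mathbf{2^i}$ deliberately removed from $\mathcal{N}$, handled by the $\phi(\gamma+g)+\phi(\gamma)=\phi(-g)$ identity. As a consistency check, this is compatible with the cardinalities: $|H|=2^{t+1}$, $\dim L=\sigma+\tau$, and $t+1-\sigma-\tau=\bigl(\sum_{i=1}^{s-1}(s-i)t_i\bigr)-\sigma$ is exactly the number of cosets of $L$ represented by $(\mathcal{M}+\mathcal{N})\cup\{\zero\}$.
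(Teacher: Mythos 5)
Your proof is correct and takes essentially the same route as the paper, whose own proof of this theorem is just the one line ``follows from Proposition~\ref{Prop:kernel1} and Lemma~\ref{lema:nohomoker}'': your Key Claim is exactly the coset-decomposition step the paper leaves implicit, and your verification of it — including the reduction of the $\one$-component using that the coordinates of $G_2$ are multiples of $2^{\sigma-1}$, and the disposal of the excluded representative $\sum_{i=\sigma-1}^{s-2}\mathbf{2^i}$ via the identity $\phi(\gamma+g)+\phi(\gamma)=\phi(-g)$ — is sound. Nothing to correct.
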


\begin{proof}
The result follows by Proposition \ref{Prop:kernel1} and Lemma \ref{lema:nohomoker}.
 
\end{proof}

\begin{corollary}\label{coro:kernelBasis}
	Let $\mathcal{H}=\mathcal{H}^{t_1,\dots,t_s}$ be the $\Z_{2^s}$-additive Hadamard code of type $(n; t_1,$ $\dots, t_s)$ such that $\Phi(\mathcal{H})$ is nonlinear. Let $\mathbf{w}_i$ be the $i$th row of $A^{t_1,\dots,t_s}$ and $\tau=\sum_{i=1}^st_i$. Let $Q=\lbrace (\ord(\mathbf{w}_q)/2)\mathbf{w}_q\rbrace_{q=0}^{\tau}$ and $P=\lbrace\mathbf{2^p}\rbrace_{p=0}^{\sigma-2}$ if $\sigma\geq2$, and $P=\emptyset$ if $\sigma=1$. Then,
	$\{ \Phi(Q), \Phi(P), \Phi(\sum_{i=0}^{s-2}\mathbf{2^i} \}$ is a basis of $K(\Phi(\mathcal{H}))$.
\end{corollary}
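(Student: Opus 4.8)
The statement to prove is that the set $\{\Phi(Q),\Phi(P),\Phi(\sum_{i=0}^{s-2}\mathbf{2^i})\}$ is a \emph{basis} of $K(\Phi(\mathcal{H}))$. By Theorem \ref{Teo:kernel1} we already know that this set \emph{spans} $K(\Phi(\mathcal{H}))$, so the only thing left is to establish linear independence, together with the matching count of elements. The plan is therefore to invoke Theorem \ref{Teo:kernel1} for the spanning property, note that $|Q| = \tau+1$ (wait---$Q$ is indexed $q = 0,\ldots,\tau$, but $\mathbf{w}_0$ is not among the rows; in fact $Q = \{(\ord(\mathbf{w}_q)/2)\mathbf{w}_q\}_{q=1}^{\tau}$ has exactly $\tau$ elements, matching the basis of $H_b$), that $|P| = \sigma - 1$, and that the extra vector $\Phi(\sum_{i=0}^{s-2}\mathbf{2^i})$ contributes one more, giving a total of $\tau + (\sigma-1) + 1 = \sigma + \tau$ vectors. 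Since $\kernel(\Phi(\mathcal{H})) = \sigma + \tau$ by Theorem \ref{Teo:kernel1}, a spanning set of cardinality $\sigma+\tau$ must be a basis.

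Concretely, I would write: by Theorem \ref{Teo:kernel1}, $\langle \Phi(\mathcal{H}_b),\Phi(P),\Phi(\sum_{i=0}^{s-2}\mathbf{2^i})\rangle = K(\Phi(\mathcal{H}))$ and $\kernel(\Phi(\mathcal{H})) = \sigma + \tau$. Since $\Phi(Q)$ is a basis of $H_b = \Phi(\mathcal{H}_b)$ (as observed in the proof of Proposition \ref{Prop:kernel1}: $\mathcal{H}_b$ consists precisely of the order-two elements of $\mathcal{H}$, and $Q$ is a generating set of $\mathcal{H}_b$ of size $\tau$), we get that $\{\Phi(Q),\Phi(P),\Phi(\sum_{i=0}^{s-2}\mathbf{2^i})\}$ is also a spanning set of $K(\Phi(\mathcal{H}))$. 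This set has $\tau + (\sigma-1) + 1 = \sigma + \tau$ elements, which equals the dimension of $K(\Phi(\mathcal{H}))$; hence it is a basis.

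The one point needing a little care---and the closest thing to an obstacle---is verifying that the displayed set genuinely has $\sigma+\tau$ distinct elements, i.e.\ that there is no accidental collision or hidden dependence reducing the count. This is handled by the independence argument already carried out inside the proof of Proposition \ref{Prop:kernel1}: the block upper triangular structure of $A^{t_1,\ldots,t_s}$ forces $\Phi(Q)$ to be linearly independent of $\{\Phi(P),\Phi(\sum_{i=0}^{s-2}\mathbf{2^i})\}$, and Lemma \ref{lema4} gives that the $\sigma$ vectors $\{\Phi(P),\Phi(\sum_{i=0}^{s-2}\mathbf{2^i})\}$ are themselves linearly independent (here one uses $\sigma < s$, which holds since $\Phi(\mathcal{H})$ is nonlinear). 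So the cardinality is exactly $\sigma+\tau$ and the corollary follows; essentially all the work has already been done, and this is just the bookkeeping step that packages it as an explicit basis.
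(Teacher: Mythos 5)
Your proposal is correct and follows essentially the same route the paper intends: Theorem \ref{Teo:kernel1} gives the spanning property and the dimension $\sigma+\tau$, while the linear-independence bookkeeping (including the observation that $\Phi(Q)$ is a basis of $\Phi(\mathcal{H}_b)$ of size $\tau$, so the full set has exactly $\sigma+\tau$ elements) is already carried out at the end of the proof of Proposition \ref{Prop:kernel1}. Your remark about the indexing of $Q$ is a fair catch of a typo in the statement, and does not affect the argument.
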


\begin{example}\label{Example:kerH200}
Let $H^{2,0,0}$ be the $\Z_8$-linear Hadamard code considered in Example
\ref{ex:Z8Hadamard}. By Theorem \ref{Teo:kernel1}, we have that
$\kernel(H^{2,0,0})=3$. Moreover, we can construct $K(H^{2,0,0})$ from a basis, by Corollary \ref{coro:kernelBasis}. First, we have that $Q=\{ \mathbf{4},(0,4,0,4,0,4,0,4)\}$. Since $\sigma=1$, in this case, we have that $P=\emptyset$. Thus,
$$
K(H^{2,0,0})=\langle\Phi(\mathbf{4}),\Phi((0,4,0,4,0,4,0,4)),\Phi(\mathbf{3})\rangle.
$$
\end{example}


\section{Classification of $\Z_{2^s}$-Linear Hadamard Codes}
\label{classification}

The classification of the $\Z_4$-linear Hadamard codes of length $2^t$, for any $t\geq 3$,
using the rank or the dimension of the kernel is shown in \cite{Kro:2001:Z4_Had_Perf,PheRifVil:2006}.
In this section, we show that the dimension of the kernel can not be used to establish a complete classification of the $\Z_{2^s}$-linear Hadamard codes of length $2^t$, in general, for any $t\geq 3$ and $s>2$. However, we see that this invariant allows us to show some partial results on the classification of these codes, through some examples.

First of all, recall that, for any $t\geq 3$, only the $\Z_4$-linear Hadamard codes $H^{1,t_2}$ and $H^{2,t_2}$ of length $2^t$ are linear \cite{Kro:2001:Z4_Had_Perf},  so these are equivalent to the Reed-Muller code $RM(1,t)$.
By Theorem \ref{theorem:lineal}, for any $t\geq 3$ and $s>2$, there are also at most two $\Z_{2^s}$-linear Hadamard codes of length $2^t$, $H^{1,0,\dots,0,1,t_s}$ and $H^{1,0,\dots,0,t_s}$, that are linear. Moreover, the following result implies that we can focus on $t\geq 5$ and $2\leq s \leq t-2$ to try to classify the nonlinear ones.

\begin{theorem} \label{teo:linearCases}
Let $\cA_{t,s}$ be the number of nonequivalent $\Z_{2^s}$-linear Hadamard codes of length $2^t$. Then,
$$
\cA_{t,s}=\left\lbrace\begin{array}{ll}
0 & \textrm{if } t\geq 3 \textrm{ and } s \geq t+2\\
1 & \textrm{if } t\geq 3 \textrm{ and } s \in \{t-1,t,t+1\}\\
1 & \textrm{if } t=4 \textrm{ and } s=2,\\
\end{array}\right.
$$
and the $\Z_{2^s}$-linear Hadamard code is linear when $\cA_{t,s}=1$.
Moreover, if $t\geq 5$ and $2\leq s \leq t-2$, then $\cA_{t,s} \geq 2$, and there is one which is linear and at least one which is nonlinear.
\end{theorem}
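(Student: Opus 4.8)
The plan is to handle the cases listed for $\cA_{t,s}$ first, and then establish the final sentence ($t\geq 5$, $2\leq s\leq t-2$) by an explicit construction. For all cases, the starting point is the parametrization of $\Z_{2^s}$-linear Hadamard codes of length $2^t$: by Theorem \ref{Th:Z2Had}, each such code is (equivalent to) some $H^{t_1,\dots,t_s}$ with $t_1\geq 1$, $t_i\geq 0$, and $t=\left(\sum_{i=1}^{s}(s-i+1)t_i\right)-1$. So counting codes amounts to analyzing the nonnegative integer solutions of this linear Diophantine equation, and the constraint $t_1\geq 1$ forces $(s+1)\cdot 1 \leq (s+1)t_1 \leq t+1$, i.e. $s\leq t$. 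First I would observe: if $s\geq t+2$, then $s+1\geq t+3 > t+1$, so no solution exists and $\cA_{t,s}=0$. If $s\in\{t-1,t,t+1\}$, I would check that $s+1$ is close enough to $t+1$ that the only solution is forced: when $s=t$ we need $(s+1)t_1 + \sum_{i\geq 2}(s-i+1)t_i = t+1 = s+1$ with $t_1\geq 1$, which forces $t_1=1$ and all other $t_i=0$ except possibly the coefficient-$1$ term $t_s$, giving $t_s=0$; similarly for $s=t\pm1$ one gets exactly the solution $H^{1,0,\dots,0}$ or $H^{1,0,\dots,0,1,0}$ or $H^{1,0,\dots,0,1}$ — in each case a single code, which is linear by Theorem \ref{theorem:lineal}. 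The case $t=4$, $s=2$ is the classical $\Z_4$ fact recalled in the introduction (from \cite{Kro:2001:Z4_Had_Perf}): the only solution of $4 = 2t_1+t_2$ with $t_1\geq 1$ up to the kernel/rank invariant gives the linear code, so $\cA_{4,2}=1$.

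For the main assertion, let $t\geq 5$ and $2\leq s\leq t-2$. I would exhibit two explicit nonequivalent codes. One linear code always exists: take $H^{1,0,\dots,0,t_s}$ with $t_s = t-s+2\geq 0$ (valid since $t\geq s-2$), which is linear by Theorem \ref{theorem:lineal}; one must only check $t_s\geq 1$ when needed and that $t = (s\cdot 1) + 1\cdot t_s - 1$, i.e. $t_s = t-s+1$ — I would recompute the exact exponent carefully here. For the nonlinear code, the natural candidate is $H^{2,0,\dots,0,t_s}$ for the appropriate $t_s\geq 0$: by Theorem \ref{theorem:lineal} any code with $t_1\geq 2$ (equivalently $\sigma=1$, and the code is not of the two exceptional linear shapes) is nonlinear, so I need a solution of the Diophantine equation with $t_1=2$. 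With $t_1=2$ the equation reads $2(s+1) - 1 + \sum_{i=2}^s (s-i+1)t_i = t+1$, i.e. $\sum_{i=2}^s(s-i+1)t_i = t - 2s$; since $s\leq t-2$ we have $t-2s \geq t-2(t-2) = 4-t$, which can be negative, so $t_1=2$ alone is not always feasible. The robust fix is: pick the largest $t_1\geq 2$ with $(s+1)t_1\leq t+1$ — this exists because $2(s+1)\leq 2(t-1) = 2t-2 \leq t+1$ fails for large $t$... so instead I would argue directly that the set of solutions with $t_1\geq 2$ is nonempty whenever $2(s+1)\leq t+1$, i.e. $s\leq (t-1)/2$, and separately handle $(t-1)/2 < s\leq t-2$ by using a solution with $t_1=1$ but $t_2\geq 1$ (forcing $\sigma=2<s$), which by Theorem \ref{theorem:lineal} is again nonlinear since it is not of the form $H^{1,0,\dots,0,1,t_s}$ or $H^{1,0,\dots,0,t_s}$ once $s\geq 3$; for $s=2$ the claim reduces to the known $\Z_4$ classification giving $\cA_{t,2} = \lfloor (t-1)/2\rfloor \geq 2$ for $t\geq 5$.

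The main obstacle is the bookkeeping in this last step: one must verify, for every pair $(t,s)$ with $t\geq 5$ and $2\leq s\leq t-2$, that the Diophantine equation $t = \left(\sum_{i=1}^s (s-i+1)t_i\right)-1$ admits a solution with $t_1\geq 1$ that is \emph{not} one of the two linear shapes identified in Theorem \ref{theorem:lineal}. I would organize this as: (i) if $t+1\geq 3(s+1)$... no — more simply, if $t-2s\geq 0$ take $t_1=2$ and $t_s = t-2s$ (with an adjustment when $s=2$ so that $t_s$ lands correctly), giving a nonlinear code since $\sigma=1$; (ii) if $t-2s<0$, then $s > t/2 \geq 5/2$ so $s\geq 3$, and take $t_1=1$, $t_2=1$, and the remaining coefficient-$1$ slot $t_s = t - (s+1) - (s-1) = t-2s$... again possibly negative, so instead put the slack into an intermediate $t_j$; here I would just solve $t+1 = (s+1) + (s-1) + m$ for the contribution $m = t+1-2s > 0$ of the remaining generators and note $m$ can always be realized as a single $t_j$ with coefficient $1$ (i.e. $j=s$) precisely when $m\geq 0$, and otherwise by $t_{s-1}$ (coefficient $2$) etc. — a short finite case check. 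Once existence of a nonlinear code is secured, nonequivalence from the linear one is immediate (a nonlinear code is never equivalent to a linear one), so $\cA_{t,s}\geq 2$, completing the proof.
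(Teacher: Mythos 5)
Your overall strategy is the paper's: reduce the count to the nonnegative integer solutions of $t+1=\sum_{i=1}^{s}(s-i+1)t_i$ with $t_1\geq 1$, and decide linearity of each resulting type via Theorem~\ref{theorem:lineal}. However, two concrete problems keep the plan from closing. First, you take the coefficient of $t_1$ to be $s+1$, but the $i=1$ term contributes $(s-1+1)t_1=s\,t_1$. This off-by-one is not cosmetic: with the correct coefficient, $s=t+1$ \emph{does} admit the solution $(1,0,\ldots,0)$ (so your deduction that $t_1\geq 1$ forces $s\leq t$ is false, and your framework would output $\cA_{t,t+1}=0$, contradicting the statement); $s=t$ forces $(1,0,\ldots,0,1)$, not $(1,0,\ldots,0)$ with $t_s=0$ as you claim; and $s=t-1$ yields exactly the two types $(1,0,\ldots,0,2)$ and $(1,0,\ldots,0,1,0)$, both linear, which is why $\cA_{t,t-1}=1$ despite there being two types. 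The boundary-case analysis has to be redone with the coefficient $s$.

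Second, the ``moreover'' clause is not actually established. You correctly identify that one must produce a solution that is not of either linear shape, but both of your named candidates fail on a nonempty range: $t_1=2$ requires $2s\leq t+1$, and $t_1=t_2=1$ requires $t\geq 2s-2$, so for $s+2\leq t\leq 2s-3$ (e.g.\ $t=7$, $s=5$) neither applies, and you defer to ``put the slack into an intermediate $t_j$ \dots\ a short finite case check,'' which is where the real content lies. The paper resolves this with explicit witnesses valid under exactly the hypothesis $s\leq t-2$: for $s=3$ the type $(2,0,\ldots,0,t-2s+1)$ (feasible since $t\geq 5$), and for $s\geq 4$ the type $(1,0,\ldots,0,1,0,t-s-2)$, i.e.\ $t_1=1$, $t_{s-2}=1$, $t_s=t-s-2\geq 0$, whose $\sigma=s-2$ rules out both linear shapes; the case $s=2$ is the known $\Z_4$ count. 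Until such witnesses are pinned down for every $(t,s)$ in range, $\cA_{t,s}\geq 2$ is not proved.
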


\begin{proof}
First, if $t\geq 3$ and $s \geq t+2$, then the equation
\begin{equation}\label{eq:Equation}
t=\Big(\sum_{i=1}^{s}(s-i+1)\cdot t_i\Big)-1,
\end{equation}
with $t_1\geq 1$, has not any nonnegative integer solution, so $\cA_{t,s}=0$.
If $t\geq 3$ and $s=t+1$, then (\ref{eq:Equation}) has only one solution $(t_1,\ldots,t_s)=(1,0,\ldots,0)$.
If $t\geq 3$ and $s=t$, (\ref{eq:Equation}) has only the solution $(1,0,\ldots,0,1)$.
If $t\geq 3$ and $s=t-1$, (\ref{eq:Equation}) has exactly two solutions $(1,0,\ldots,0,2)$ and
(1,0,\ldots,0,1,0). Note that, when $t=3$ and $s=2$, both solutions are $(1,2)$ and $(2,0)$.
By Theorem \ref{theorem:lineal}, for all the above solutions, we obtain a linear code $H^{t_1,\ldots,t_s}$.

Finally, if $t\geq 5$ and $2\leq s \leq t-2$, (\ref{eq:Equation}) always has the solutions $(1,0,\ldots,0,t-s+1)$ and $(1,0,\ldots,0,1,t-s-1)$, which give a linear code. However, for these cases, there is at least another solution.
On the one hand, if $s=2$, $\cA_{t,s}=\lfloor (t-1)/2\rfloor\geq 2$ since $t\geq 5$ \cite{Kro:2001:Z4_Had_Perf}.
On the other hand, if $s=3$, $(2,0,\cdots,0,t-2s+1)$ is a solution since $t\geq 2s-1$ when $t\geq 5$;
and if $s\geq 4$, $(1,0,\cdots,0,1,0,t-s-2)$ is a solution. Therefore, for all the cases, $\cA_{t,s}\geq 2$ by Theorem \ref{theorem:lineal}.
 
\end{proof}

\medskip
The following example shows that the dimension of the kernel can not be used, in general,
to classify completely all nonlinear $\Z_{2^s}$-linear Hadamard codes of length $2^t$, once $t\geq 5$ and $2 < s \leq t-2$ are fixed.

\begin{example}\label{Example:NoKernel}
The $\Z_8$-linear Hadamard codes of length $2^t=256$ are the following: $H^{1,0,6},H^{1,1,4},H^{1,2,2},H^{1,3,0},H^{2,0,3},$ $H^{2,1,1}$ and $H^{3,0,0}$. The first two are equivalent as they are linear by Theorem \ref{theorem:lineal}. The remaining ones have kernels of dimension $7, 6, 6, 5$ and $4$, respectively, by Theorem \ref{Teo:kernel1}. Therefore, by using this invariant, we can say that all of them are nonequivalent, with the exception of $H^{1,3,0}$ and $H^{2,0,3}$ which have the same dimension of the kernel. For these two codes, by using the computer algebra system Magma \cite{Magma}, we have computed that $\rank(H^{1,3,0})=12$ and $\rank(H^{2,0,3})=11$, so they are also nonequivalent. Actually, all these nonlinear codes have ranks $10, 12, 11, 13$ and $17$, respectively, so we can use the rank instead of the dimension of the kernel to classify completely the $\Z_8$-linear Hadamard codes of length $256$.
\end{example}

As shown in the next example, for some values of $t\geq 5$ and $2< s \leq t-2$, it is indeed possible to establish a complete classification by using just the dimension of the kernel, like it happens for any $t\geq 5$ and $s=2$ \cite{Kro:2001:Z4_Had_Perf}.

\begin{example}
\label{Example:NoKernel2}
By Theorem \ref{Teo:kernel1}, it is possible to check that for any $5\leq t \leq 7$ and $2\leq s\leq t-2$, all nonlinear $\Z_{2^s}$-linear Hadamard codes of length $2^t$ have a different dimension of the kernel, so this invariant allows us to classify them. For $t=8$, $t=9$, $t=10$ and $t=11$, it also works, except when $s\in \{3\}$, $s\in \{3,4\}$, $s\in \{3,4,5\}$ and $s\in \{3,4,5,6\}$, respectively.
For these given values of $t$ and $s$, we can just obtain a partial classification by using the kernel.
\end{example}

\begin{table}[h!]
\centering
\begin{tabular}{|c||cc|cc|cc|}
\cline{1-7}
& \multicolumn{2}{c|}{$t=5$}& \multicolumn{2}{c|}{$t=6$}& \multicolumn{2}{c|}{$t=7$}\\
\cline{2-7}
& $(t_1,\ldots,t_s)$ & $(r,k)$ & $(t_1,\ldots,t_s)$ & $(r,k)$ & $(t_1,\ldots,t_s)$ & $(r,k)$ \\[0.2em]
\hline
\multirow{2}{*}{$\Z_4$} & $(3,0)$  & (7,4)& $(3,1)$  & (8,5)  & $(3,2)$  & (9,6)  \\
                        &          &      &          &        & $(4,0)$  & (11,5) \\[0.2em]
\hline
\multirow{3}{*}{$\Z_8$} & $(2,0,0)$ & (8,3) & $(1,2,0)$ & (8,5)  & $(1,2,1)$ & (9,6)  \\
                        &           &       & $(2,0,1)$ & (9,4)  & $(2,0,2)$ & (10,5) \\
                        &           &       &           &        & $(2,1,0)$ & (12,4) \\[0.2em]
\hline
\multirow{3}{*}{$\Z_{16}$}                           &             &       & $(1,1,0,0)$ & (9,4)  & $(1,0,2,0)$ & (9,6) \\
                           &             &       &             &        & $(1,1,0,1)$ & (10,5)\\
                           &             &       &             &        & $(2,0,0,0)$ & (14,3)\\[0.2em]
\hline
\multirow{1}{*}{$\Z_{32}$}                           &               &       &               &        & $(1,0,1,0,0)$ &  (10,5)\\[0.2em]
\hline
\end{tabular}
\caption{Rank and kernel for all nonlinear $\Z_{2^s}$-linear Hadamard codes of length $2^t$.}
\label{table:Types}
\end{table}

By using Magma, we have also computed the rank of the nonlinear $\Z_{2^s}$-linear Hadamard codes of length $2^t$, for any $5\leq t\leq 11$ and  $2 \leq s \leq t-2$. Tables \ref{table:Types} and \ref{table:TypesB} show the values of $(t_1,\ldots,t_s)$ and the pair $(r,k)$, where $r$ is the rank and $k$ the dimension of the kernel, for all nonlinear $\Z_{2^s}$-linear Hadamard codes of length $2^t$, for $5\leq t\leq 10$.
Note that the results given by Example \ref{Example:NoKernel}, and Example \ref{Example:NoKernel2} for $5\leq t\leq 10$, can also be checked by looking at these tables.
These tables also show that all nonlinear $\Z_{2^s}$-linear Hadamard codes of length $2^t$ have a different value of the rank, once $5\leq t\leq 10$ and  $2 \leq s \leq t-2$ are fixed. Therefore, for these cases, as in Example \ref{Example:NoKernel}, we have that the codes are pairwise nonequivalent, so we have a complete classification by using the rank and we can establish the following result.

\begin{theorem} \label{teo:numNonEquiv}
Let $\cA_{t,s}$ be the number of nonequivalent $\Z_{2^s}$-linear Hadamard codes of length $2^t$. Then,
for any $t \geq 3$ and $2 \leq s \leq t-1$,
$$\cA_{t,s} \leq |\{ (t_1,\ldots,t_s)\in \N^s :  t=\Big(\sum_{i=1}^{s}(s-i+1)\cdot t_i \Big)-1, \ t_1\geq 1 \}|-1.$$
Moreover, for any $3 \leq t \leq 11$ and $2 \leq s \leq t-1$, this bound is tight.
\end{theorem}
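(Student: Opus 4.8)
The plan is to prove the inequality for all admissible $(t,s)$ and then argue tightness for $3\leq t\leq 11$ by a finite verification.

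\emph{The inequality.} First I would recall that, by Theorem~\ref{Th:Z2Had} together with the classification of \cite{Krotov:2007}, every $\Z_{2^s}$-linear Hadamard code of length $2^t$ is equivalent to $H^{t_1,\dots,t_s}$ for some tuple in
$$S_{t,s}=\Big\{(t_1,\dots,t_s)\in\N^s: t=\Big(\sum_{i=1}^{s}(s-i+1)t_i\Big)-1,\ t_1\geq 1\Big\},$$
so that $\cA_{t,s}\leq |S_{t,s}|$. To improve this by one I would exhibit two distinct tuples of $S_{t,s}$ producing equivalent codes. The candidates are $\mathbf{a}=(1,0,\dots,0,t-s+1)$ and $\mathbf{b}=(1,0,\dots,0,1,t-s-1)$, read as $(1,t-1)$ and $(2,t-3)$ when $s=2$. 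A direct check shows both satisfy the defining equation; $\mathbf{a}\in S_{t,s}$ always, while $\mathbf{b}\in S_{t,s}$ exactly when $t-s-1\geq 0$, i.e. when $s\leq t-1$, which is our hypothesis; and $\mathbf{a}\neq\mathbf{b}$ since they differ in the $(s-1)$-th coordinate (in the first coordinate when $s=2$). By Theorem~\ref{theorem:lineal} (for $s>2$) and \cite{Kro:2001:Z4_Had_Perf} (for $s=2$), the two codes $H^{t_1,\dots,t_s}$ arising from $\mathbf{a}$ and $\mathbf{b}$ are linear Hadamard codes of the same length $2^t$, hence both equal $RM(1,t)$ and in particular are equivalent. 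Thus the natural surjection from $S_{t,s}$ onto the set of equivalence classes of $\Z_{2^s}$-linear Hadamard codes of length $2^t$ is not injective, which gives $\cA_{t,s}\leq|S_{t,s}|-1$.

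\emph{Tightness for $3\leq t\leq 11$.} Here I would establish the reverse inequality by producing $|S_{t,s}|-1$ pairwise nonequivalent such codes. By Theorem~\ref{theorem:lineal}, exactly the two tuples $\mathbf{a},\mathbf{b}$ yield a linear code and the remaining $|S_{t,s}|-2$ tuples yield nonlinear codes; together with $RM(1,t)$ this is a list of $1+(|S_{t,s}|-2)=|S_{t,s}|-1$ codes, and each nonlinear one is trivially nonequivalent to the linear one. So the task reduces to showing the $|S_{t,s}|-2$ nonlinear codes are pairwise nonequivalent. For $t\leq 4$ there are none (Theorem~\ref{teo:linearCases}), so this is vacuous. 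For $5\leq t\leq 11$ I would use the rank as the distinguishing invariant: the values computed with Magma \cite{Magma} and collected in Tables~\ref{table:Types} and~\ref{table:TypesB} show that, for each admissible $(t,s)$ in this range, the nonlinear $\Z_{2^s}$-linear Hadamard codes of length $2^t$ have pairwise distinct ranks, hence are pairwise nonequivalent. Where the dimension of the kernel from Theorem~\ref{Teo:kernel1} already separates them one may use it instead, and where neither separates them one falls back on an ad hoc computation, as illustrated in Examples~\ref{Example:NoKernel} and~\ref{Example:NoKernel2}. Combining with the inequality yields $\cA_{t,s}=|S_{t,s}|-1$.

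The routine parts are the arithmetic checks that $\mathbf{a},\mathbf{b}\in S_{t,s}$ and are distinct, and the observation that a linear Hadamard code of length $2^t$ must be $RM(1,t)$. The hard part will be the tightness direction: unlike the inequality it is not a uniform argument but a finite case analysis resting on Section~\ref{classification}, namely the verification that for every admissible pair $(t,s)$ with $t\leq 11$ the rank, supplemented in a few cases by the kernel dimension or a further computation, actually distinguishes all the nonlinear codes. The delicate point to be careful about is that this separation is only established in the stated range, which is precisely why the equality is not claimed for arbitrary $t$.
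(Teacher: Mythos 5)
Your proposal is correct and follows essentially the same route as the paper, whose own proof is just a one-line citation of Theorem~\ref{theorem:lineal}, the proof of Theorem~\ref{teo:linearCases}, Tables~\ref{table:Types} and~\ref{table:TypesB}, and Magma computations for $t=11$. You have merely made explicit the ingredients the paper leaves implicit: the completeness of the family $H^{t_1,\dots,t_s}$ from \cite{Krotov:2007} (needed for $\cA_{t,s}\leq|S_{t,s}|$), the identification of the two tuples collapsing to $RM(1,t)$ with the verification that the second one lies in $S_{t,s}$ exactly when $s\leq t-1$, and the computational separation of the nonlinear codes by rank in the stated range.
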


\begin{proof}
Straightforward from Theorem \ref{theorem:lineal},
the proof of Theorem \ref{teo:linearCases}, Tables \ref{table:Types} and \ref{table:TypesB}, and further computations by using Magma for $t=11$.
 
\end{proof}

\medskip
By Theorems \ref{teo:numNonEquiv} and \ref{teo:linearCases} (or Tables \ref{table:Types} and \ref{table:TypesB}), we can obtain exactly the number of nonequivalent $\Z_{2^s}$-linear Hadamard codes of length $2^t$, for some values of $t$ and $s$. Table \ref{table:NumberNonequivalent} shows these numbers, for $3 \leq t \leq 11$ and $2 \leq s \leq 9$. The cases where the dimension of the kernel is not enough to classify these codes are shown in bold type.
However, in all these cases, the rank can be used to obtain the classification.

\begin{table}[h]
\centering
\begin{tabular*}{0.5\textwidth}{@{\extracolsep{\fill}}|c||c|c|c|c|c|c|c|c|c|}
\cline{0-9}
$t$ & 3 & 4 & 5 & 6 & 7 & 8 & 9 & 10 & 11\\\cline{0-9}
$\Z_4$ & 1 & 1 & 2 & 2 & 3 & 3 & 4 & 4 & 5\\\cline{0-9}
$\Z_8$ & 1 & 1 & 2 & 3 & 4 & {\bf 6} & {\bf 7} & {\bf 9} & {\bf 11}\\\cline{0-9}
$\Z_{16}$ & 1 & 1 & 1 & 2 & 4 & 5 & {\bf 8} & {\bf 10} & {\bf 14}\\\cline{0-9}
$\Z_{32}$ & 0 & 1 & 1 & 1 & 2 & 4 & 6 & {\bf 9} & {\bf 12}  \\\cline{0-9}
$\Z_{64}$ & 0 & 0 & 1 & 1 & 1 & 2 & 4 & 6 & {\bf 10}  \\\cline{0-9}
$\Z_{128}$& 0 & 0 & 0 & 1 & 1 & 1 & 2 & 4 & 6  \\\cline{0-9}
$\Z_{256}$ & 0 & 0 & 0 & 0 & 1 & 1 & 1 & 2 & 4   \\\cline{0-9}
$\Z_{512}$& 0 & 0 & 0 & 0 & 0 & 1 & 1 & 1 & 2   \\\cline{0-9}
\end{tabular*}
\caption{Number $\cA_{t,s}$ of nonequivalent $\Z_{2^s}$-linear Hadamard codes of length $2^t$.}
\label{table:NumberNonequivalent}
\end{table}

The values of $\cA_{t,2}$ given in Table \ref{table:NumberNonequivalent} where already proved in \cite{Kro:2001:Z4_Had_Perf}.
Specifically, in that paper, it is shown that there are $\lfloor
\frac{t-1}{2}\rfloor$ nonequivalent $\Z_4$-linear Hadamard  codes of length $2^t$ for all $t\geq 3$.
Next, we focus on establishing some relationships between the already known $\Z_{2^s}$-linear Hadamard codes with $s=2$ and the ones with $s>2$, once only the length $2^t$ is fixed. First, Example \ref{ex:CompZ4A} shows that there are $\Z_{2^s}$-linear Hadamard codes, with $s>2$, which are not equivalent to any $\Z_4$-linear Hadamard code. Then, Example \ref{ex:CompZ4B} also shows that there are $\Z_4$-linear Hadamard codes which are not equivalent to any $\Z_{2^s}$-linear Hadamard codes with $s>2$.

\begin{example} \label{ex:CompZ4A}
Let $H^{2,0,0}$ be the $\Z_8$-linear Hadamard code of length $32$ considered in Examples
\ref{ex:Z8Hadamard} and \ref{Example:kerH200}. Recall that $\kernel(H^{2,0,0})=3$ by Theorem \ref{Teo:kernel1}, and hence $H^{2,0,0}$ is nonlinear.
It is known that there are three  $\Z_4$-linear Hadamard codes of length $32$, $H^{1,4}$, $H^{2,2}$ and $H^{3,0}$.
The first two are linear, and the last one has $\kernel(H^{3,0})=4$ by Theorem \ref{Teo:kernel1} or \cite{Kro:2001:Z4_Had_Perf}. Hence,
there is no $\Z_4$-linear Hadamard code equivalent to the $\Z_8$-linear Hadamard code $H^{2,0,0}$.
\end{example}

\begin{example} \label{ex:CompZ4B}
By Table \ref{table:Types}, for $t=5$, there are only two nonlinear $\Z_{2^s}$-linear Hadamard codes, $H^{3,0}$ and $H^{2,0,0}$.
In Example \ref{ex:CompZ4A}, we have seen that they are not equivalent, since they have different dimension of the kernel.
Other examples like this one can be found when $t$ is odd. For example, by Tables \ref{table:Types} and \ref{table:TypesB} and further computations in Magma, for $t=7$, $t=9$ and $t=11$,
there are $\Z_4$-linear Hadamard codes, $H^{4,0}$, $H^{5,0}$ and $H^{6,0}$, respectively,
which are not equivalent to any $\Z_{2^s}$-linear Hadamard codes  with $s>2$ of the same length,
by using both invariants, the rank and the dimension of the kernel.
\end{example}

It is also worth to mention that there are Hadamard codes, called $\Z_2\Z_4$-linear Hadamard codes,
which came from the image of a generalized Gray map of subgroups of $\Z_2^\alpha \times \Z_4^\beta$.
Note that if $\alpha=0$, they correspond to $\Z_4$-linear Hadamard codes. More information on $\Z_2\Z_4$-linear codes in general can be found in \cite{ccsg}. The classification of $\Z_2\Z_4$-linear Hadamard codes of length $2^t$ with $\alpha\not =0$ is given in \cite{PheRifVil:2006}, where
it is shown that there are $\lfloor \frac{t}{2} \rfloor$ nonequivalent of such codes, for all $t\geq 3$;
and either the rank or the dimension of the kernel
can be used to classify them, like for $\Z_4$-linear Hadamard codes. Recall that there are $\lfloor
\frac{t-1}{2}\rfloor$ nonequivalent $\Z_4$-linear Hadamard  codes of length $2^t$ for all $t\geq 3$ \cite{Kro:2001:Z4_Had_Perf}.
However, in \cite{KroVil2015}, it is shown that each $\Z_2\Z_4$-linear Hadamard
code with $\alpha=0$, that is, each $\Z_4$-linear Hadamard code, is equivalent to a $\Z_2 \Z_4$-linear Hadamard  code with $\alpha \not =0$,
so there are only $\lfloor\frac{t}{2} \rfloor$ nonequivalent $\Z_2\Z_4$-linear Hadamard codes of length $2^t$.

The following example shows that there are $\Z_2\Z_4$-linear Hadamard codes (with $\alpha \not =0$)
which are not equivalent to any $\Z_{2^s}$-linear Hadamard codes with $s\geq 2$.

\begin{example}\label{ex:CompZ2Z4}
For $t=4$, there is a $\Z_2\Z_4$-linear Hadamard code (with $\alpha \not =0$) which is not equivalent to any $\Z_4$-linear Hadamard code \cite{KroVil2015}. This code has parameters $(r,k)=(6,3)$ \cite{PheRifVil:2006}, so it is not equivalent to any $\Z_{2^s}$-linear Hadamard code with $s\geq 2$, since all of them are linear by Theorem \ref{teo:linearCases}. Other examples like this one can be found when $t$ is even. For example, for $t=6$, $t=8$ and $t=10$, there is also a $\Z_2\Z_4$-linear Hadamard code (with $\alpha \not =0$) which is not equivalent to any $\Z_4$-linear Hadamard code \cite{KroVil2015}. They have parameters $(10,4)$, $(15,5)$ and $(21,6)$ \cite{PheRifVil:2006}, respectively, so again they are not equivalent to any $\Z_{2^s}$-linear Hadamard code with $s\geq 2$ of length $2^6$, $2^8$ and $2^{10}$, respectively, by Tables \ref{table:Types} and \ref{table:TypesB}. 
\end{example}


Finally, we focus on establishing how many nonequivalent $\Z_{2^s}$-linear Hada\-mard codes of length $2^t$ there are, once only the
length $2^t$ is fixed for some values of $t$. First, we give some lower and upper bounds. From Tables \ref{table:Types} and \ref{table:TypesB}, and further computations in Magma for $t=11$, we can determine a lower bound (K) taking into account just the dimension of the kernel. This lower bound can be improved (RK) if we consider both invariants, the rank and the dimension of the kernel.  These results are summarized in Table \ref{table:NumNonEquivalentCodes}, where we give these bounds for all $3\leq t \leq 11$.

\begin{table}[h]
\small
\centering
\begin{tabular}{|c||c|c|c|c|c|c|c|c|c|}
\cline{0-9}
$t$ & 3 & 4 & 5 & 6 & 7 & 8 & 9 & 10 & 11\\ \cline{0-9} \cline{0-9}
lower bound K & 1  & 1  & 3  & 3 & 5 & 5 & 7 & 7 &  9     \\\cline{0-9}
lower bound RK & 1  & 1  & 3  & 3 & 6 & 7 & 11 & 13 & 20 \\\cline{0-9} \cline{0-9}
upper bound   & 1  & 1  & 3  & 5 & 10 & 16 & 26 & 38 & 57   \\\cline{0-9}
\end{tabular}
\caption{Bounds for the number $\cA_{t}$ of nonequivalent $\Z_{2^s}$-linear Hadamard codes of length $2^t$.}
\label{table:NumNonEquivalentCodes}
\end{table}

An upper bound can be given easily by considering all nonequivalent $\Z_{2^s}$-linear Hadamard codes of length $2^t$, once $t$ and $s$ are fixed,
as it is shown in the next theorem. These values for all $3\leq t \leq 11$ are also shown in Table \ref{table:NumNonEquivalentCodes}.

\begin{theorem}
Let $\cA_{t,s}$ be the number of nonequivalent $\Z_{2^s}$-linear Hadamard codes of length $2^t$.
Let $\cA_{t}$ be the number of nonequivalent $\Z_{2^s}$-linear Hadamard codes of length $2^t$, for any $s\geq 2$.
Then, $\cA_{t} \leq \sum_{s=2}^{t-2} (\cA_{t,s}-1) +1$.
\end{theorem}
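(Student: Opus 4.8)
The plan is to count equivalence classes by isolating the unique linear class and then bounding, for each $s$ separately, how many nonlinear classes can occur. First I would recall that, up to equivalence, there is exactly one binary linear Hadamard code of length $2^t$, namely the first order Reed--Muller code $RM(1,t)$ \cite[Ch.13 \S 3]{WMcwill}. Hence, across all $s\geq 2$ simultaneously, every \emph{linear} $\Z_{2^s}$-linear Hadamard code of length $2^t$ lies in one and the same equivalence class, and this class is nonempty: for $t\geq 3$ it is represented, for instance, by $H^{1,0,\ldots,0}$ over $\Z_{2^{t+1}}$ (Example~\ref{remark:H1}). This single equivalence class accounts for the additive term $1$ in the claimed bound.

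Next, I would fix $s$ with $2\leq s\leq t-2$ and invoke Theorem~\ref{teo:linearCases}: among the $\cA_{t,s}$ pairwise nonequivalent $\Z_{2^s}$-linear Hadamard codes of length $2^t$, exactly one equivalence class is linear (the $RM(1,t)$ class, realized for example by $H^{1,0,\ldots,0,t-s+1}$, since $t-s+1\geq 3$). Consequently the number of nonequivalent \emph{nonlinear} $\Z_{2^s}$-linear Hadamard codes of length $2^t$ is $\cA_{t,s}-1$. For the remaining values $s\geq 2$, Theorem~\ref{teo:linearCases} shows there is no $\Z_{2^s}$-linear Hadamard code of length $2^t$ when $s\geq t+2$, and exactly one, which is linear, when $s\in\{t-1,t,t+1\}$; thus no nonlinear class is lost by restricting attention to $2\leq s\leq t-2$.

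Assembling these facts, every equivalence class of a $\Z_{2^s}$-linear Hadamard code of length $2^t$ with $s\geq 2$ is either the unique linear class or a nonlinear class represented by some code with $2\leq s\leq t-2$; summing the per-$s$ counts then gives $\cA_t\leq 1+\sum_{s=2}^{t-2}(\cA_{t,s}-1)$, which is the assertion. (For $t=3$ the sum is empty and for $t=4$ it equals $\cA_{4,2}-1=0$, matching $\cA_t=1$ in those cases.)

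I do not expect a genuine obstacle, since the argument is essentially bookkeeping layered on Theorem~\ref{teo:linearCases} together with the uniqueness of the linear Hadamard code. The one point that must be stated explicitly is why the result is an inequality rather than an equality: a nonlinear $\Z_{2^s}$-linear Hadamard code and a nonlinear $\Z_{2^{s'}}$-linear one of the same length with $s\neq s'$ are a priori allowed to be equivalent, in which case that class is counted more than once on the right-hand side. As only an upper bound is asserted, such coincidences are harmless and require no further treatment.
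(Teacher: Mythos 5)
Your argument is correct and is exactly the bookkeeping the paper intends (the paper states this theorem without a written proof, merely remarking that the bound follows "by considering all nonequivalent $\Z_{2^s}$-linear Hadamard codes of length $2^t$, once $t$ and $s$ are fixed"): all linear codes collapse to the single $RM(1,t)$ class, each $s$ with $2\leq s\leq t-2$ contributes at most $\cA_{t,s}-1$ nonlinear classes, and the cases $s\geq t-1$ contribute nothing new by Theorem~\ref{teo:linearCases}. Your closing remark correctly identifies why only an inequality is claimed, namely possible cross-$s$ equivalences among nonlinear codes.
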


\begin{theorem}
There are exactly 1,1,3, 3 and 6 nonequivalent $\Z_{2^s}$-linear Hada\-mard codes of length $2^t$ for $t$ equal to $3,4,5,6$ and $7$, respectively.
\end{theorem}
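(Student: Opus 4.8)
The plan is, for each $t\in\{3,4,5,6,7\}$, to enumerate all $\Z_{2^s}$-linear Hadamard codes of length $2^t$ over every admissible $s\geq 2$, to separate the linear codes from the nonlinear ones, and to count the equivalence classes among the latter by means of the pair $(\rank,\kernel)$, using Theorem~\ref{Teo:kernel1} for the kernel dimensions and Table~\ref{table:Types} for the ranks.

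First I would invoke Theorem~\ref{teo:linearCases}: the equation $t=\big(\sum_{i=1}^{s}(s-i+1)t_i\big)-1$ with $t_1\geq1$ has solutions only for $2\leq s\leq t+1$; for $s\in\{t-1,t,t+1\}$ (and, when $t=4$, also for $s=2$) every solution gives a linear code, while for $2\leq s\leq t-2$ with $t\geq 5$ at least one solution gives a nonlinear code. In particular, for $t=3$ and $t=4$ all $\Z_{2^s}$-linear Hadamard codes of length $2^t$ are linear. Since every binary linear Hadamard code of length $2^t$ is equivalent to $RM(1,t)$, and at least one such code occurs (for instance $H^{1,0,\dots,0}=\Phi(\Z_{2^{t+1}})$, which equals $RM(1,t)$ by Example~\ref{remark:H1}), the linear codes form exactly one equivalence class for every $t\geq 3$. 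This already gives $\cA_3=\cA_4=1$.

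For $t\in\{5,6,7\}$ I would then list the nonlinear codes with their pairs $(r,k)=(\rank,\kernel)$. For $t=5$ these are $H^{3,0}$ with $(r,k)=(7,4)$ and $H^{2,0,0}$ with $(r,k)=(8,3)$, which are inequivalent, so $\cA_5=1+2=3$. For $t=6$ the nonlinear codes are $H^{3,1}$ and $H^{1,2,0}$ (both with $(r,k)=(8,5)$) and $H^{2,0,1}$ and $H^{1,1,0,0}$ (both with $(r,k)=(9,4)$). For $t=7$ there are nine nonlinear codes: $H^{3,2}$, $H^{1,2,1}$, $H^{1,0,2,0}$ with $(r,k)=(9,6)$; $H^{2,0,2}$, $H^{1,1,0,1}$, $H^{1,0,1,0,0}$ with $(r,k)=(10,5)$; and $H^{4,0}$, $H^{2,1,0}$, $H^{2,0,0,0}$, whose invariants $(11,5)$, $(12,4)$, $(14,3)$ are pairwise distinct. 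As equivalent codes share $(r,k)$, the number of nonlinear equivalence classes is at most the number of distinct $(r,k)$-groups, that is, at most $2$ for $t=6$ and at most $5$ for $t=7$.

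The remaining and main step is to prove that these bounds are attained, i.e.\ that the codes inside each multi-element $(r,k)$-group are pairwise equivalent; this is the hard part, because the rank and the kernel only bound the count from above and there is no evident structural reason for these coincidences across different alphabets $\Z_{2^s}$. I would settle it by explicit computation with Magma, producing for each relevant pair of codes a coordinate permutation together with a translation that maps one code onto the other. Granting these equivalences, $t=6$ has exactly $2$ nonlinear classes and $t=7$ exactly $5$, hence $\cA_6=1+2=3$ and $\cA_7=1+5=6$, which completes the proof.
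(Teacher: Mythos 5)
Your proposal is correct and follows essentially the same route as the paper: the linearity results and the pairs $(r,k)$ settle $t\le 5$, and Magma computations establishing the equivalence of the codes that share the same $(r,k)$ handle $t=6$ and $t=7$. One small slip in the write-up: the number of distinct $(r,k)$ pairs bounds the number of equivalence classes from \emph{below}, not from above (the trivial upper bound is the total number of codes); your Magma step is exactly what brings the upper bound down to meet that lower bound, so the argument still closes.
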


\begin{proof}
For $t$ equal to 3, 4 and 5, the result is true, since the lower and upper bounds given in Table \ref{table:NumNonEquivalentCodes} coincides.
By using Magma, it is possible to check that, for $t=6$, both $\Z_{2^s}$-linear Hadamard codes having the same parameters $(r,k)=(8,5)$
are equivalent; and the ones having $(r,k)=(9,4)$ are also equivalent. Therefore, in this case, the upper bound goes from 5 to 3, and
then coincides with the lower bound given in Table \ref{table:NumNonEquivalentCodes}. Similarly, or $t=7$, it is also possible to check
that the codes having the same parameters $(r,k)$ are all equivalent, so the upper bound became equal to the lower bound 6, and the result also holds.
 
\end{proof}

\begin{table}[h!]
\begin{tabular}{|c||cc|cc|cc|}
\cline{1-7}
&\multicolumn{2}{c|}{$t=8$}& \multicolumn{2}{c|}{$t=9$}& \multicolumn{2}{c|}{$t=10$}\\
\cline{2-7}
& $(t_1,\ldots,t_s)$ & $(r,k)$ & $(t_1,\ldots,t_s)$ & $(r,k)$ & $(t_1,\ldots,t_s)$ & $(r,k)$ \\[0.2em]
\hline
\multirow{3}{*}{$\Z_4$}                        & $(3,3)$  & (10,7) & $(3,4)$ & (11,8)  & $(3,5)$ & (12,9)  \\
                        & $(4,1)$  & (12,6) & $(4,2)$ & (13,7)  & $(4,3)$ & (14,8)  \\
                        &         &         & $(5,0)$ & (16,6)  & $(5,1)$ & (17,7)  \\[0.2em]
\hline
\multirow{6}{*}{$\Z_8$}  & $(1,2,2)$ & (10,7) & $(1,2,3)$ & (11,8) & $(1,2,4)$ & (12,9)  \\
                        & $(1,3,0)$ & (12,6) & $(1,3,1)$ & (13,7) & $(1,3,2)$ & (14,8)  \\
                        & $(2,0,3)$ & (11,6) & $(2,0,4)$ & (12,7) & $(1,4,0)$ & (17,7)  \\
                        & $(2,1,1)$ & (13,5) & $(2,1,2)$ & (14,6) & $(2,0,5)$ & (13,8)  \\
                        & $(3,0,0)$ & (17,4) & $(2,2,0)$ & (17,5) & $(2,1,3)$ & (15,7)  \\
                        &           &        & $(3,0,1)$ & (18,5) & $(2,2,1)$ & (18,6)  \\
                        &           &        &           &        & $(3,0,2)$ & (19,6)  \\
                        &           &        &           &        & $(3,1,0)$ & (24,5)  \\[0.2em]
\hline
\multirow{7}{*}{$\Z_{16}$}  & $(1,0,2,1)$ & (10,7) & $(1,0,2,2)$ & (11,8) & $(1,0,2,3)$ & (12,9)  \\
                           & $(1,1,0,2)$ & (11,6) & $(1,0,3,0)$ & (13,7) & $(1,0,3,1)$ & (14,8)  \\
                           & $(1,1,1,0)$ & (13,5) & $(1,2,0,0)$ & (18,5) & $(1,1,0,4)$ & (13,8)  \\
                           & $(2,0,0,1)$ & (15,4) & $(1,1,0,3)$ & (12,7) & $(1,1,1,2)$ & (15,7)  \\
                           &             &        & $(1,1,1,1)$ & (14,6) & $(1,1,2,0)$ & (18,6)  \\
                           &             &        & $(2,0,0,2)$ & (16,5) & $(1,2,0,1)$ & (19,6)  \\
                           &             &        & $(2,0,1,0)$ & (20,4) & $(2,0,0,3)$ & (17,6)  \\
                           &             &        &             &        & $(2,0,1,1)$ & (21,5)  \\
                           &             &        &             &        & $(2,1,0,0)$ & (28,4)  \\[0.2em]
\hline
\multirow{5}{*}{$\Z_{32}$}  & $(1,0,0,2,0)$ & (10,7) & $(1,0,0,2,1)$ & (11,8) & $(1,0,0,2,2)$ & (12,9)  \\
                           & $(1,0,1,0,1)$ & (11,6) & $(1,0,1,0,2)$ & (12,7) & $(1,0,0,3,0)$ & (14,8)  \\
                           & $(1,1,0,0,0)$ & (15,4) & $(1,0,1,1,0)$ & (14,6) & $(1,0,1,0,3)$ & (13,8)  \\
                           &               &        & $(1,1,0,0,1)$ & (16,5) & $(1,0,1,1,1)$ & (15,7)  \\
                           &               &        & $(2,0,0,0,0)$ & (26,3) & $(1,0,2,0,0)$ & (19,6)  \\
                           &               &        &               &        & $(1,1,0,0,2)$ & (17,6)  \\
                           &               &        &               &        & $(1,1,0,1,0)$ & (21,5)  \\
                           &               &        &               &        & $(2,0,0,0,1)$ & (27,4)  \\[0.2em]
\hline
\multirow{3}{*}{$\Z_{64}$} & $(1,0,0,1,0,0)$ & (11,6)& $(1,0,0,0,2,0)$ & (11,8)  & $(1,0,0,0,2,1)$ & (12,9) \\
                           &                 &       & $(1,0,0,1,0,1)$ & (12,7)  & $(1,0,0,1,0,2)$ & (13,8) \\
                           &                 &       & $(1,0,1,0,0,0)$ & (16,5)  & $(1,0,0,1,1,0)$ & (15,7) \\
                           &                 &       &                 &         & $(1,0,1,0,0,1)$ & (17,6) \\
                           &                 &       &                 &         & $(1,1,0,0,0,0)$ & (27,4) \\[0.2em]
\hline
\multirow{3}{*}{$\Z_{128}$}&                 &       & $(1,0,0,0,1,0,0)$ & (12,7)  & $(1,0,0,0,0,2,0)$ & (12,9) \\
                           &                 &       &                   &         & $(1,0,0,0,1,0,1)$ & (13,8) \\
                           &                 &       &                   &         & $(1,0,0,1,0,0,0)$ & (17,6) \\[0.2em]
\hline
\multirow{1}{*}{$\Z_{256}$}                 &                 &       &                 &         & $(1,0,0,0,0,1,0,0)$ & (13,8) \\[0.2em]
\hline
\end{tabular}
\caption{Rank and kernel for all nonlinear $\Z_{2^s}$-linear Hadamard codes of length $2^t$.}
\label{table:TypesB}
\end{table}


\section{Conclusions}
\label{Sec:Conclusions}

The kernel of $\Z_{2^s}$-linear Hadamard codes of length $2^t$ has been studied for $s>2$.
We compute the kernel of these codes and its dimension in order to classify them, like it is done for $s=2$.
We first have considered that the parameters $t$ and $s$ are fixed.
Examples \ref{Example:NoKernel} and \ref{Example:NoKernel2} show that, only for some values of $t$ and $s$,
we can use the dimension of the kernel to distinguish between nonequivalent
$\Z_{2^s}$-linear Hadamard codes of length $2^t$.
However, when this is not possible, by using Magma, we also show that in these examples the rank is enough to classify them.
A further research on this topic would be to determine exactly for which values of $t$ and $s$ the
dimension of the kernel classify them, or to compute the rank of these codes in order to prove that
this invariant is enough to established their complete classification, for any $t$ and $s$.

We have also considered that only the length $2^t$ is fixed. Examples \ref{ex:CompZ4B} and \ref{ex:CompZ2Z4} seem to point out that
for any $t$ odd ($t$ even), there is exactly one $\Z_4$-linear Hadamard code ($\Z_2\Z_4$-linear Hadamard code with $\alpha\not =0$)
of length $2^t$  which is not equivalent to any $\Z_{2^s}$-linear
Hadamard code with $s>2$; and the others $\Z_4$-linear Hadamard codes ($\Z_2\Z_4$-linear Hadamard codes with $\alpha\not =0$) of the same length are equivalent to a $\Z_8$-linear Hadamard code,
although they may also have other structures with $s>3$.
Another further research in this sense would be to establish whether this is always true; and whether
all $\Z_{2^s}$-linear Hadamard codes of the same length, having the same rank and dimension of the kernel, are equivalent.

Finally, we have established some bounds for the exact number of nonequivalent $\Z_{2^s}$-linear Hadamard codes of
length $2^t$, when both $t$ and $s$ are fixed, and when just $t$ is fixed; denoted by $\cA_{t,s}$ and $\cA_{t}$, respectively.
For some values of $t$ and $s$, we provide their exact values, by using Magma.
It would also be interesting to determine them for any possible $t$ and $s$, as a further research; as well as equivalent results by
considering the generalized Gray map used in \cite{Krotov:2007}.

\end{document}